\tikzset{
  LabelStyle/.style = { rectangle, rounded corners, draw,
                        minimum width = 1em, fill = yellow!50,
                        text = red, font = \bfseries },
  VertexStyle/.append style = { inner sep=1pt,
                                font = \small},
  EdgeStyle/.append style = {->, bend left} }
\newcommand*\dashline{\rotatebox[origin=c]{90}{$\dabar@\dabar@\dabar@$}}
\newtheorem{thm}{Theorem}[section]
\newtheorem{prp}[thm]{Proposition}
\newtheorem{lma}[thm]{Lemma}
\newtheorem{rmk}[thm]{Remark}
\newtheorem{dfn}[thm]{Definition}
\theoremstyle{definition}
\newtheorem{exmp}{Example}[section]
\definecolor{amber}{rgb}{1.0, 0.49, 0.0}
\definecolor{Green}{rgb}{0.0, 0.5, 0.0}
\definecolor{purple}{rgb}{0.7,0,0.7}
\title{Unlinking symmetric quivers}
\author[1]{Piotr Kucharski \orcidlink{0000-0002-9599-5658}\thanks{piotr.kucharski@mimuw.edu.pl}}
\author[2]{H\'{e}lder Larragu\'{i}vel \orcidlink{0000-0002-2894-5052}\thanks{helder.larraguivel@uj.edu.pl}}
\author[3]{\break Dmitry Noshchenko \orcidlink{0000-0002-9639-5603}\thanks{dsnoshchenko@stp.dias.ie}}
\author[4]{Piotr Su{\l}kowski \orcidlink{0000-0002-6176-6240}\thanks{psulkows@fuw.edu.pl}}
\affil[1]{Institute of Mathematics, University of Warsaw, ul. Banacha 2, 02-097 Warsaw, Poland}
\affil[2]{Institute of Theoretical Physics and Mark Kac Center
for Complex Systems Research, Jagiellonian University, ul. {\L}ojasiewicza 11, 30-348 Krak\'{o}w, Poland}
\affil[3]{School of Theoretical Physics, Dublin Institute for Advanced Studies, 10 Burlington Road, Dublin 4, D04 C932, Ireland}
\affil[4]{Faculty of Physics, University of Warsaw, ul. Pasteura 5, 02-093 Warsaw, Poland}
\date{\today}
\begin{document}

\maketitle

\begin{abstract}
We analyse the structure of equivalence classes of symmetric quivers whose generating series are equal. We consider such classes constructed using the basic operation of unlinking, which increases a~size of a~quiver. The existence and features of such classes do not depend on a~particular quiver but follow from the properties of unlinking. We show that such classes include sets of quivers assembled into permutohedra, and all quivers in a~given class are determined by one quiver of the largest size, which we call a universal quiver. These findings generalise the previous ones for permutohedra graphs for knots. We illustrate our results with generic examples, as well as specialisations related to the knots-quivers correspondence.
\end{abstract}

\newpage

\tableofcontents

\newpage

%%%%%%%%%%%%%%%%%%
\section{Introduction and summary}\label{sec:Intro}
%%%%%%%%%%%%%%%%%%

Symmetric quivers play an important role in recent developments in mathematics and mathematical physics. (A quiver consists of nodes, which we label by $i=1,\ldots,m$, and arrows; it is symmetric if $\forall _{i,j} \, C_{ij}=C_{ji}$, where $C_{ij}$ is the number of arrows from node $i$ to node $j$.) They provide a playground for quiver representation theory and homological algebra -- in particular, one can effectively determine numerical and motivic Donaldson-Thomas invariants of symmetric quivers, which characterize their moduli spaces of representations, and also encode homological dimensions of various associated algebra structures, including cohomological Hall algebras, vertex Lie algebras, and their Koszul duals
\cite{KS0811,KS1006,Efi12,Rei12,FR1512,MR1411,dotsenko2021dt,DotsenkoFeiginReineke}. 

Apart from understanding structures mentioned above, our work is motivated by recent advances in quantum topology, in particular the categorification program for quantum invariants of knots and three-manifolds. This area of research includes a~fascinating connection between symmetric quivers and knot (and link) homologies  \cite{KRSS1707long,KRSS1707short,PSS1802,EKL1811}, as well as the $\widehat{Z}$ and $F_K$ invariants of three-manifolds and knot complements \cite{Kuch2005,Ekholm:2020lqy,Ekholm:2021irc,Cheng:2022rqr},
whose categorification and connection to vertex algebras is one of the mysteries to be revealed. These developments are often referred to as the knots-quivers correspondence. 
In parallel, symmetric quivers turn out to play somewhat analogous role in topological string theory \cite{PS1811,Kimura:2020qns}. Another interesting connection arises between the representation theory of some symmetric quivers and characters (or linear combinations thereof) of vertex operator algebras associated to rational \cite{Nahm,Kedem:1992jv,Kedem:1993ze,Dasmahapatra:1993pw,Berkovich:1997ht} and logarithmic \cite{Flohr:2006id,Feigin:2007sp}
two-dimensional conformal field theories. The structures that we find in this paper should have interesting interpretations in all these contexts.

A~crucial role in our work is played by the~movitic generating series of a symmetric quiver, which encodes Donaldson-Thomas invariants and serves as a bridge to quantum topology in \cite{KRSS1707long,KRSS1707short}. Such a~series depends on $C_{ij}$, generating parameters $\boldsymbol{x}=[x_1,\dots,x_m]$ (with each $x_i$ assigned to the node $i$, for $i=1,\ldots,m$) and the motivic parameter $q$, and arises from the summation over all dimension vectors $\boldsymbol{d}=[d_1,\dots,d_m]$:
\begin{equation}
P_Q(x_1,\dots,x_m,q)=\sum_{d_1,\ldots,d_m\geq 0} \frac{(-q)^{\sum_{i,j} C_{ij} d_i d_j}}{(q^2;q^2)_{d_1} \ldots (q^2;q^2)_{d_m}} x_1^{d_1}\ldots x_m^{d_m}\,,  \label{Z}
\end{equation}
where  $(\xi;q^{2})_{n} = \prod_{k=0}^{n-1}(1-\xi q^{2k})$ is a $q$-Pochhammer symbol. Expressions of this form are also referred to as Nahm sums \cite{Nahm} or fermionic $q$-series \cite{Kedem:1992jv,Kedem:1993ze}.
%\footnote{Plugging $x_i=q^{\alpha_i}$ into \eqref{Z} and expanding around zero produces a $q$-series with integer coefficients, which in the literature are also known as the Nahm sums, or fermionic $q$-series.} 

In what follows we assemble $C_{ij}$ into a~symmetric matrix (for readability we omit its under-diagonal part):
\begin{equation}
    C=\begin{bmatrix}
        C_{11} & \cdots  & C_{1i}\\
        & \ddots  & \vdots \\
        &  & C_{mm}
    \end{bmatrix} \,.
\end{equation}
We often use the notation $Q=(C,\boldsymbol{x})$ to represent a~quiver together with the corresponding generating parameters. The interpretation of $C_{ij}$ as the numbers of arrows implies that $C_{ij}\in \mathbb{N}$, but the results presented in this paper hold more generally for $C_{ij}\in \mathbb{Q}$ -- in that case we simply treat them as parameters in (\ref{Z}).

In this work we study an equivalence class of symmetric quivers which, upon appropriate (monomial) specialisations of the generating parameters, have the same generating series:
\begin{equation}
    Q\sim Q' \Leftrightarrow \left. P_Q(x_1,\dots,x_m,q) = P_{Q'}(x'_1,\dots,x'_n,q) \right|_{x'_i=q^{\alpha_{i,0}}x_1^{\alpha_{i,1}}\cdots x_m^{\alpha_{i,m}}}\, .
\end{equation}
The first example of equivalent quivers was found in \cite{KRSS1707long}. In \cite{JKLNS2105} equivalent quivers of the same size were studied systematically and it was shown that they form families labelled by permutohedra. In \cite{EKL1910} it was shown that equivalent quivers can be generated by two basic operations of linking and unlinking, which change the number of nodes in a~quiver, while keeping the associated generating series invariant. One special example of a~quiver of a larger size and the same generating series, obtained by applying unlinking multiple times, is an infinite diagonal quiver found in \cite{Jankowski:2022qdp}.
In this work we conduct a thorough analysis of equivalent quivers that are generated by a~basic operation of unlinking. In fact, all results from this work apply to the linking as well; we leave the study of combining operations of linking and unlinking for future work.

%Even though the question about the structure of a complete set of equivalent quivers remains open (in principle there is a possibility of existence of exotic quivers with the same generating series as analysed ones)

\bigskip

Our first result in this work is revealing that unlinking operations form an operator monoid with the left action on pairs $(C,\boldsymbol{x})$. Consequently, we prove that all identities in this monoid can be derived from the finite set of basic relations between unlinkings. We also formulate the Connector Theorem, which states that any two sequences of unlinking can be extended in a~way that ultimately produces the same quiver. 
These results provide very efficient tools to determine the structure of equivalent quivers that are produced by multiple application of unlinking. An analogous structure arises from the operation of linking.

As the second step, we introduce permutohedra structures from unlinking operation. Equivalent quivers obtained from unlinking can be assembled into permutohedra: vertices of a~given permutohedron $\Pi_n$ represent quivers $\{Q_{\sigma}\}_{\sigma \in S_n}$ that are related by operations in an associated symmetric group $S_n$, which interchange the numbers of arrows between various nodes of a~quiver and preserve the size of the quiver (such operation corresponds to the fact that two different quivers can be unlinked to the same one). For a~given quiver there may exist several such symmetry groups, so that its $m$ equivalent quivers $\{Q_{\sigma}\}_{\sigma=1,\ldots,m}$ form a~structure that we call a~permutohedra graph, which is made of several permutohedra glued together. Note that in \cite{JKLNS2105} a~similar structure of permutohedra graphs was found for quivers that are associated to knots via the knots-quivers correspondence. However, our current results are more general and follow from properties of unlinking operation, rather than specific features of knot generating functions discussed in \cite{JKLNS2105}.

Furthermore, we show that all quivers that form a~permutohedron can be subsequently unlinked to the same quiver, which we call a universal quiver: 

 \begin{thm}[Permutohedron Theorem]\label{thm:Permutohedron Theorem}
For any quivers $\{Q_{\sigma}\}_{\sigma\in S_n}$ that form permutohedron $\Pi_{n}$ there exists a~universal quiver $\hat{Q}$.
\end{thm}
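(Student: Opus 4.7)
The plan is to build the universal quiver $\hat{Q}$ iteratively, exploiting the operator monoid structure of unlinking together with the Connector Theorem. First I fix any ordering $Q_1,Q_2,\ldots,Q_{n!}$ of the vertices of $\Pi_n$ such that for each $k\ge 2$ the vertex $Q_k$ is adjacent on $\Pi_n$ to some earlier vertex $Q_{j(k)}$ with $j(k)<k$; any traversal of a spanning tree of the $1$-skeleton of $\Pi_n$, for example a breadth-first search starting from $Q_1$, produces such an ordering. The goal is to construct inductively a sequence of quivers $\hat{Q}^{(1)},\hat{Q}^{(2)},\ldots,\hat{Q}^{(n!)}$ with the property that every $Q_i$ with $i\le k$ unlinks to $\hat{Q}^{(k)}$; then $\hat{Q}:=\hat{Q}^{(n!)}$ is the required universal quiver.

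The base case is $\hat{Q}^{(1)}:=Q_1$. For the inductive step, assume $\hat{Q}^{(k-1)}$ is a common unlinking target for $Q_1,\ldots,Q_{k-1}$, where $Q_i$ is unlinked to $\hat{Q}^{(k-1)}$ via some sequence $w_i$. By the very definition of the permutohedron structure, adjacency of $Q_{j(k)}$ and $Q_k$ means they share an unlinking target $R_k$, i.e., there exist sequences $a,b$ with $a\,Q_{j(k)}=R_k=b\,Q_k$. In particular, $Q_{j(k)}$ admits two different unlinking sequences, with targets $\hat{Q}^{(k-1)}$ and $R_k$ respectively. Applying the Connector Theorem to this pair yields further sequences $u,v$ such that $u\,\hat{Q}^{(k-1)}=v\,R_k=:\hat{Q}^{(k)}$. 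Using the left action of the unlinking monoid, the composite $u\,w_i$ unlinks $Q_i$ to $\hat{Q}^{(k)}$ for every $i\le k-1$, while $v\,b$ unlinks $Q_k$ to $\hat{Q}^{(k)}$, completing the induction.

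The main obstacle I anticipate is ensuring that the Connector Theorem is applicable at each step, since the sequence $w_{j(k)}$ produced by the induction is built from earlier applications and can be arbitrarily long. This is precisely where the assertion that all identities in the unlinking monoid follow from a finite set of basic relations becomes essential: confluence established at the level of generators propagates to arbitrary words, so that the Connector Theorem can be invoked for pairs of sequences of any length. A subsidiary question, not required by the statement but natural to record, is whether $\hat{Q}$ depends on the chosen spanning tree; the expected answer is that it is canonical up to further unlinking, but this goes beyond the existence claim asserted in the theorem.
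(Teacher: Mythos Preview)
Your argument is correct and rests on the same engine as the paper's proof---iterated use of the Connector Theorem---but the organization is genuinely different. The paper's proof exploits the common \emph{source}: by Definition~\ref{dfn:Permutohedron} every $Q_\sigma$ is $U(\iota j_{\sigma(n)})\cdots U(\iota j_{\sigma(1)})Q$, so any two vertices of $\Pi_n$ are already presented as two unlinking sequences applied to the \emph{same} quiver $Q$, and the Connector Theorem is invoked directly on those pairs (then iterated). You instead ignore the common source and use only the \emph{edge} structure: Proposition~\ref{prp:Permutohedron by unlinking} gives a local common target $R_k$ for adjacent vertices, and you apply the Connector Theorem to the two sequences out of $Q_{j(k)}$ (one to $\hat Q^{(k-1)}$, one to $R_k$). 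What your route buys is that the induction is explicit and is agnostic to how the permutohedron was produced; in fact it is exactly the argument the paper uses one level up, in the proof of the Permutohedra Graph Theorem (gluing at shared vertices, then connecting the two universal quivers via the Connector Theorem). What the paper's route buys is a slightly shorter path: since everything is already grounded at $Q$, one does not need Proposition~\ref{prp:Permutohedron by unlinking} or a spanning tree at all.

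One small misattribution in your final paragraph: the obstacle you flag---that $w_{j(k)}$ can be arbitrarily long---is handled by the Connector Theorem itself, whose statement is for sequences of \emph{any} lengths $n$ and $m$ and whose proof is the Connector Algorithm together with Lemmas~\ref{lma:(1,1)-connector}--\ref{lma:initial indices}. The Completeness Theorem (that all relations are generated by the square and hexagons) is a separate structural result and is not what is being invoked here. Your closing remark on non-uniqueness of $\hat Q$ is on point; the paper makes the same observation in Remark~\ref{rmk:non-uniqueness of universal quiver}.
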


We also prove that this result can be extended to the permutohedra graph that consists of several permutohedra:

\begin{thm}[Permutohedra Graph Theorem]\label{thm:Permutohedra Graph Theorem}
For any quivers $\{Q_{\sigma}\}_{\sigma=1,\dots,m}$ that form a~permutohedra graph there exists a~universal quiver $\hat{Q}$.
\end{thm}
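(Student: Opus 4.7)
The plan is to deduce this theorem from the Permutohedron Theorem (Theorem~\ref{thm:Permutohedron Theorem}) by induction on the number of permutohedra constituting the permutohedra graph, with the Connector Theorem serving as the glue at each inductive step.

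Since a permutohedra graph is by definition assembled from finitely many permutohedra $\Pi^{(1)},\ldots,\Pi^{(k)}$ glued along shared vertices, I begin by invoking the Permutohedron Theorem on each $\Pi^{(j)}$ separately to obtain a universal quiver $\hat{Q}^{(j)}$ to which every quiver in $\Pi^{(j)}$ can be unlinked. The goal is to produce a single quiver $\hat{Q}$ to which all of the $\hat{Q}^{(j)}$, and hence by composition all of the $Q_\sigma$, unlink. Without loss of generality the graph is connected; otherwise one treats each connected component separately and notes that the statement is really about one such component at a time.

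The induction is carried out on $k$. The base case $k=1$ is exactly the Permutohedron Theorem. For the inductive step, suppose a universal quiver $\hat{Q}^{(1,\ldots,k-1)}$ for $\Pi^{(1)}\cup\cdots\cup\Pi^{(k-1)}$ has already been produced. By connectedness, and after reordering the permutohedra along a spanning tree of the graph, the permutohedron $\Pi^{(k)}$ shares at least one vertex $Q_\sigma$ with the union of the preceding ones. From $Q_\sigma$ there are then two sequences of unlinkings: one terminating at $\hat{Q}^{(1,\ldots,k-1)}$ and one terminating at $\hat{Q}^{(k)}$. Applying the Connector Theorem to this pair extends both sequences to a common quiver $\hat{Q}^{(1,\ldots,k)}$. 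Because unlinking operations form a monoid acting on the left on pairs $(C,\boldsymbol{x})$, the composition of the unlinking from any $Q_\tau$ in the union to $\hat{Q}^{(1,\ldots,k-1)}$ (respectively to $\hat{Q}^{(k)}$) with the further unlinking to $\hat{Q}^{(1,\ldots,k)}$ is itself an unlinking, exhibiting $\hat{Q}^{(1,\ldots,k)}$ as a universal quiver for the first $k$ permutohedra.

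The main obstacle I anticipate is verifying that the Connector Theorem is genuinely applicable at every inductive step: the two sequences being connected have quite different provenance (one built up across several permutohedra, the other coming from a single one), and one must track carefully that both are honest unlinking sequences starting from the common vertex $Q_\sigma$ so that the hypotheses of the Connector Theorem are met. A secondary subtlety is the choice of ordering of permutohedra, which must preserve connectedness of the already-merged subgraph at every stage; this is handled cleanly by fixing a spanning tree of the graph and enumerating the permutohedra in a breadth-first order along it. Once these points are in place, the induction closes and the resulting $\hat{Q}=\hat{Q}^{(1,\ldots,k)}$ is the desired universal quiver.
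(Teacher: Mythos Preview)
Your proposal is correct and follows essentially the same strategy as the paper's own proof: apply the Permutohedron Theorem to each constituent permutohedron, then iteratively merge the resulting universal quivers via the Connector Theorem at a shared gluing vertex, using connectedness of the graph to ensure such a vertex always exists. The paper phrases this as a stepwise procedure rather than a formal induction, but the content is identical, including the key point that at each stage the two unlinking sequences emanating from the gluing vertex $Q_\sigma$ are genuine elements of the unlinking monoid, so the Connector Theorem applies and composition furnishes unlinkings from every $Q_\tau$ to the new common quiver.
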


\begin{wrapfigure}{L}{0.3\textwidth}
\centering
\includegraphics[width=0.25\textwidth]{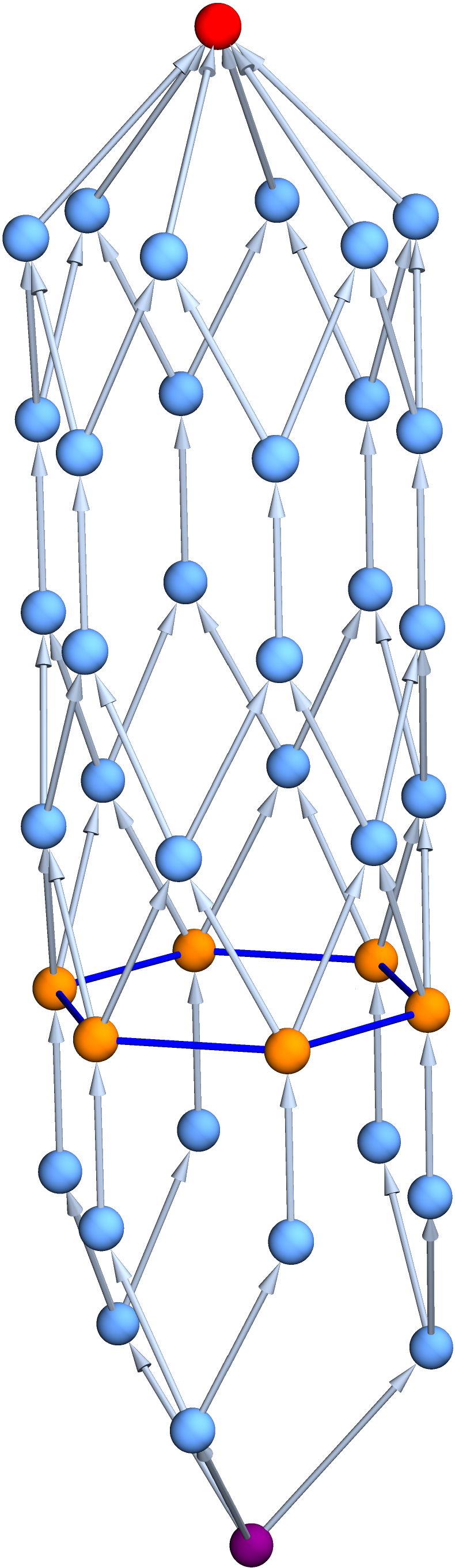}
\caption{\label{fig:Pi3_Rocket}``Cocoon'' of equivalent quivers that includes initial quiver (purple node), permutohedron $\Pi_3$ (blue edges and yellow nodes), and universal quiver (red node).}
\end{wrapfigure}

Apart from the universal quiver of larger size, from which (by the inverse operation to unlinking) every quiver in a~permutohedra graph can be obtained, we conjecture that the whole permutohedra graph can be also obtained from a~single quiver of a~smaller size. An example of such a~structure, in which quivers are represented by nodes and unlinkings by arrows, is shown in Fig.~\ref{fig:Pi3_Rocket}.

To sum up, we find that the operation of unlinking produces a~large web of equivalent quivers of various size, which have the same generating series. This whole web can be reconstructed from a~single universal quiver $\hat Q$. Note that, in general, the quiver generating series depends on a~number of generating parameters, which are in one-to-one correspondence with nodes in a~quiver. Therefore, the equality of generating series that we mentioned above involves certain identification of generating parameters assigned to those nodes in a~larger quiver, which are produced by the unlinking operation. This identification follows directly from the form of unlinking.

As already mentioned above, our results are relevant not only for quiver representation theory, but also in other contexts where quivers or generating series of the form (\ref{Z}) arise. In particular, in this paper we apply the general theorems for unlinking to the knots-quivers correspondence, and relate our results to those in \cite{JKLNS2105}.

\bigskip

The paper is structured as follows. In Sec.~\ref{sec:pre} we introduce operations of unlinking and linking, and briefly review the knots-quivers correspondence as well as permutohedra graphs. In Sec.~\ref{sec:commutation relations} we present the monoidal structure of unlinking and linking, and formulate the Connector Theorem. Sec.~\ref{sec:unlinking} shows how permutohedra graphs are obtained from unlinking. In Sec.~\ref{sec:Universal quivers} we show that the whole permutohedra graph can be determined from a~single universal quiver. In Sec.~\ref{sec:Universal quivers for knots} we specialize to the case of knots and explain how permutohedra graphs for knots arise from the constructions found in this paper. Finally, Sec.~\ref{sec:future} focuses on ideas for future research.

%*****
%*****

\section{Prerequisites} \label{sec:pre}

In this section we introduce the notation and recall basic facts about unlinking operation and the knots-quivers correspondence. As mentioned in Sec.~\ref{sec:Intro}, in this work we consider symmetric quivers, for which the number of arrows $C_{ij}$ between the nodes $i$ and $j$ is equal to $C_{ji}$. We treat $C_{ij}$ as entries of a~matrix $C$, whose size (equal to the number of nodes in a~given quiver) we typically denote by $m$. Of our primary interest are also generating functions (\ref{Z}) that, in addition to $C_{ij}$, depend also on $q$ and on the vector of generating parameters $\boldsymbol{x}=[x_1,\dots,x_m]$, with each $x_i$ assigned to a~node $i$. In what follows we use the notation $Q=(C,\boldsymbol{x})$, and sometimes, with some abuse of terminology, refer to $Q$ as a~quiver too. 

\subsection{Unlinking and linking}
\label{sec:Multi-cover skein relations}

A crucial role in our consideration is played by the operations of unlinking and linking. They were introduced in \cite{EKL1910}, together with a~fascinating interpretation in terms of multi-cover skein relations and 3d $\mathcal{N}=2$ theories (see App. \ref{sec:Statements from the multi-cover skein paper} for a short summary). 

Linking and unlinking are operations that respectively add or remove one pair of arrows between two particular nodes $i$ and $j$, as well as increase the size of a~quiver by one node, which gets connected by a~particular pattern of arrows to other nodes. Importantly, the quiver generating series for the enlarged quiver is equal to the generating series of the original quiver (\ref{Z}), once the generating parameter $x_{m+1}$  assigned to a~new node is appropriately identified in terms of $x_i$ and $x_j$. 

More precisely, for $Q=(C,\boldsymbol{x})$ given by
\begin{equation}
    \begin{split}
    C & =\left[\begin{array}{ccccccc}
 C_{11}  &  \cdots  &  C_{1i}  &  \cdots  &  C_{1j}  &  \cdots  &  C_{1m}\\
   &  \ddots\  &  \vdots &   &  \vdots  &   &  \vdots\\
 &  &  C_{ii}  &  \cdots &  C_{ij}  &  \cdots &  C_{im}\\
 &  &  &  \ddots\  &  \vdots  &   &  \vdots\\
 &  &  &  &  C_{jj}  &  \cdots  &  C_{jm}\\
 &  &  &  &  &  \ddots &  \vdots\\
 &  &  &  &  &  & C_{mm}
\end{array}\right]\,,\\
\boldsymbol{x} & =\left[x_{1},\dots,x_{i},\dots,x_{j},\dots,x_{m}\right]\,,
\end{split}
\end{equation}
we define unlinking of distinct nodes $i$ and $j$ as $U(ij)Q = (U(ij)C,U(ij)\boldsymbol{x})$, where
\begin{equation}\label{eq:unlinking definition}
    \begin{split}U(ij)C & =\left[\begin{array}{cccccccc}
 C_{11}  &  \cdots &  C_{1i}  &  \cdots &  C_{1j}  &  \cdots &  C_{1m}  &  C_{1i}+C_{1j}\\
 &  \ddots\  &  \vdots &   &  \vdots &   &  \vdots &  \vdots\\
 &  &  C_{ii}  &  \cdots &  C_{ij}-1  &  \cdots &  C_{im}  &  C_{ii}+C_{ij}-1\\
 &  &  &  \ddots\  &  \vdots &   &  \vdots &  \vdots\\
 &  &  &  &  C_{jj}  &  \cdots &  C_{jm}  &  C_{ij}-1+C_{jj}\\
 &  &  &  &  &  \ddots\  &  \vdots &  \vdots\\
 &  &  &  &  &  &  C_{mm}  &  C_{im}+C_{jm}\\
 &  &  &  &  &  &    &  C_{ii}+C_{jj}+2C_{ij}-1 
\end{array}\right]\,,\\
U(ij)\boldsymbol{x} & =\left[x_{1},\dots,x_{i},\dots,x_{j},\dots,x_{m},q^{-1}x_{i}x_{j}\right]\,.
\end{split}
\end{equation}

Let us stress that unlinking is symmetric in indices $i$ and $j$: $U(ij)=U(ji)$. Moreover, we can perform unlinking many times which means that $i$ and $j$ might refer to the new nodes created in the previous unlinking. In order to prevent errors coming from the attempt of unlinking of nodes that have not been created yet, we define $U(ij)Q$ to be an empty pair~$(\,,)$ if either $i$ or $j$ is not a~node in $Q$.

Furthermore, we will express all generating parameters in terms of initial $x_{1},\dots,x_{m}$, which will also allow us to keep track of all unlinkings. In order to do it, we will denote the new node coming from  $U(ij)$ by $(ij)$ and the corresponding generating parameter $q^{-1}x_{i}x_{j}$ by $(x_{i}x_{j})$. For example, if we first unlink nodes $i$ and $j$, and then the new node coming from that unlinking together with node $k$, we will write
\begin{equation*}
    U((ij)k)  U(ij) \boldsymbol{x} = \left[x_{1},\dots,x_{m},(x_{i}x_{j}),((x_{i}x_{j})x_k)\right] = \left[x_{1},\dots,x_{m},q^{-1}x_{i}x_{j},q^{-2}x_{i}x_{j}x_k\right]
     \,.
\end{equation*}
Note that the ordering of unlinkings imposes the ordering of new generating parameters and we will always assume that this ordering is followed. It imposes a~one-to-one correspondence between the ordering of operators and the ordering in the vector of generating parameters\footnote{This applies also to vectors without brackets, but we use them to make everything more explicit.}, which will be useful in proving our results.

We also define linking of distinct nodes $i$ and $j$ as $L(ij)Q = (L(ij)C,L(ij)\boldsymbol{x})$, where
\begin{equation}\label{eq:linking definition}
    \begin{split}L(ij)C & =\left[\begin{array}{cccccccc}
 C_{11}  &  \cdots &  C_{1i}  &  \cdots &  C_{1j}  &  \cdots &  C_{1m}  &  C_{1i}+C_{1j}\\
 &  \ddots\  &  \vdots &   &  \vdots &   &  \vdots &  \vdots\\
 &  &  C_{ii}  &  \cdots &  C_{ij}+1  &  \cdots &  C_{im}  &  C_{ii}+C_{ij}\\
 &  &  &  \ddots\  &  \vdots &   &  \vdots &  \vdots\\
 &  &  &  &  C_{jj}  &  \cdots &  C_{jm}  &  C_{ij}+C_{jj}\\
 &  &  &  &  &  \ddots\  &  \vdots &  \vdots\\
 &  &  &  &  &  &  C_{mm}  &  C_{im}+C_{jm}\\
 &  &  &  &  &  &    &  C_{ii}+C_{jj}+2C_{ij} 
\end{array}\right]\,,\\
L(ij)\boldsymbol{x} & =\left[x_{1},\dots,x_{i},\dots,x_{j},\dots,x_{m},x_{i}x_{j}\right]\,.
\end{split}
\end{equation}

Our main results, such as permutohedra structure of equivalent quivers and the~existence of universal quivers, follow directly from properties of unlinking and linking and are exactly analogous for both operations treated separately. Because of that, in the majority of the paper (with an exception of one specific section) we will focus on unlinking and leave analysis of the structures that arise from combining both operations for future work.

\subsection{Knots-quivers correspondence and permutohedra graphs}\label{sec:KQ correspondence and permutohedra}

The knots-quivers correspondence, introduced in \cite{KRSS1707long,KRSS1707short}, is the statement that for a~given knot one can find a~symmetric quiver, whose various characteristics reproduce invariants of the original knot. In particular, the nodes of a~quiver correspond to generators of HOMFLY-PT homology of the corresponding knot, the quiver generating series (\ref{Z})  is a~generating series of colored HOMFLY-PT polynomials upon appropriate identification of generating parameters $x_i$, and motivic Donaldson-Thomas invariants of a~quiver encode LMOV invariants \cite{OV9912,LM01,LMV00} of the corresponding knot. This correspondence has been proven to hold for various families of knots, such as some torus and twist knots, rational knots, and arborescent knots \cite{KRSS1707long,SW1711,SW2004}, and it is conjectured to hold (with some modifications for knots that do not satisfy the exponential growth property \cite{EKL1811}) for all knots.

In fact, there are typically many quivers that correspond to the same knot. This phenomenon was found already in \cite{KRSS1707long}. It was then systematically analyzed in \cite{JKLNS2105}, where it was shown that equivalent quivers arise from various symmetries that are generated by transpositions of the entries of a~quiver matrix $C$. Such transpositions generate permutation groups, and in consequence equivalent quivers are in one-to-one correspondence with vertices of permutohedra, while edges of such permutohedra represent the above mentioned transpositions. A given quiver may encode several such permutation groups, and in consequence its equivalent quivers correspond to vertices of the permutohedra graph, which is obtained from combining appropriate permutohedra together. 
For more details see App.~\ref{sec:Statements from the permutohedra paper}, which will serve as a~reference for Sec.~\ref{sec:Universal quivers for knots}, where we will show that permutohedra graphs constructed in \cite{JKLNS2105} are special cases of permutohedra graphs analysed in this work, which will lead us to analogues of Theorems~\ref{thm:Permutohedron Theorem} and~\ref{thm:Permutohedra Graph Theorem}.

%%%%%%%%%%%%%%%%%%
\section{Monoidal structure of unlinking}\label{sec:commutation relations}
%%%%%%%%%%%%%%%%%%

In this section we introduce the operator monoid of unlinking operations for  quivers and study its properties in detail.
We list basic relations and show that all other relations can be expressed by them. 
We also propose a~generalisation to linking. Furthermore, we prove that any two sequences of unlinking can be amended by other sequences in a~way that they form a~relation and illustrate it with an~example.

\subsection{Unlinking relations}\label{sec:generators and relations}

We can treat unlinking $U(ij)$ (and, similarly, linking) defined in Sec.~\ref{sec:Multi-cover skein relations} as operator acting on the set of pairs $(C,\boldsymbol{x})$, where $U(ij)(C,\boldsymbol{x})$ is given by (\ref{eq:unlinking definition}) if both $i$ and $j$ are nodes of the corresponding quiver, otherwise it is an empty pair $(\,,)$.
The annihilation condition may seem rather arbitrary, but it provides consistency for the definition to follow.
\begin{dfn}
Consider the set $\{U(ij)\}$ where $i$ and $j$ run through a~universal alphabet $\mathcal{A}$ (we assume that the set of nodes of any quiver is labelled by this alphabet).
If we complete this set by an element $U(\emptyset)$ which acts on any quiver as an identity (by definition), then its closure under composition operation
\begin{equation}
\mathcal{U}:=(\{U(ij)\}_{i,j\in \mathcal{A}} \cup \{U(\emptyset)\}, \circ)
\end{equation}
is an operator monoid, which we call the unlinking monoid.
\end{dfn}
Elements of 
$\mathcal{U}$ are sequences of unlinking of the form
\begin{equation}
\boldsymbol{U}(\boldsymbol{i}\boldsymbol{j}) = U(i_{n}j_{n})\dots U(i_{2}j_{2})U(i_{1}j_{1}),
\end{equation}
where the indices may be repeated, and application of operations goes from right to left.
The operator identity between two elements in this monoid will be called a~relation, and it is basic if it cannot be derived from other relations.\footnote{This is conceptually different from relations between quivers -- operator identity is assumed to hold for an arbitrary symmetric quiver, i.e. is independent from either choice of adjacency matrix or a~vector of variables.}  Using this notation, a~generic relation in $\mathcal{U}$ can be compactly written as
\begin{equation}\label{eq:relation}
    \boldsymbol{U}(\boldsymbol{i}\boldsymbol{j}) = \boldsymbol{U}(\boldsymbol{k}\boldsymbol{l}).
\end{equation}
When we write this equation, we assume that $m=n$ because if two sequences of unlinking had different lengths,  their application to the same quiver would yield two quivers of different sizes. Note that such equalities put severe constraints on indices $(\boldsymbol{ij})$ and $(\boldsymbol{jk})$.
It turns out that in $\mathcal{U}$ identifications of the form (\ref{eq:relation}) have an intricate structure, and the rest of the paper is devoted to their thorough analysis. In turn, when applied to a~particular quiver, these yield non-trivial $q$-identities between their generating series~(\ref{Z}).

In the rest of this section we assume that $Q$ is an arbitrary quiver and $(i,j,k,l)$ is a~quadruple of \emph{distinct} nodes of $Q$. We proceed to our first important result.
\begin{prp}\label{prp:all basic relations}
The following basic relations for unlinking hold:
\begin{subequations}
\begin{align}
        U(kl) U(ij) &= U(ij) U(kl)  \label{eq:square} \\
        U((ij)k)  U(jk)  U(ij) &= U(i(jk))  U(ij)  U(jk) \label{eq:first hexagon}\\
        U(jk)  U((ij)k)  U(ij) &= U(ij)  U(i(jk))  U(jk)  \label{eq:second hexagon}
    \end{align}
\end{subequations}
\end{prp}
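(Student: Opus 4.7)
The plan is to verify each of the three identities by applying both sides to an arbitrary symmetric quiver $Q=(C,\boldsymbol{x})$ and comparing the outputs. Since the two sides of each relation have equal length, the resulting quivers have the same number of nodes; it then suffices to identify the new nodes via their generating parameters and check that the resulting matrices agree entrywise.

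For the square relation (\ref{eq:square}) with $i,j,k,l$ pairwise distinct, the two unlinkings act on disjoint parts of the data: $U(ij)$ modifies only $C_{ij}$ and appends a row/column built from indices $i,j$, while $U(kl)$ does the analogous thing on $k,l$. Commutativity on the original $m$ nodes is therefore immediate. The only non-trivial check is the entry between the two newly added nodes $(ij)$ and $(kl)$, which evaluates symmetrically to $C_{ik}+C_{il}+C_{jk}+C_{jl}$ independently of the order, and the generating parameters $q^{-1}x_ix_j$ and $q^{-1}x_kx_l$ give the canonical identification of the two new nodes.

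For the hexagon relations (\ref{eq:first hexagon}) and (\ref{eq:second hexagon}), I would unfold the three successive applications of (\ref{eq:unlinking definition}) on each side. For (\ref{eq:first hexagon}), both sides introduce three new nodes carrying the parameters $q^{-1}x_ix_j$, $q^{-1}x_jx_k$ and $q^{-2}x_ix_jx_k$, which forces the pairing $(ij)\leftrightarrow(ij)$, $(jk)\leftrightarrow(jk)$, $((ij)k)\leftrightarrow(i(jk))$. Under this pairing, the net modifications to the original block of $C$ are identical on both sides ($C_{ij}$ and $C_{jk}$ each decrease by $1$, while $C_{ik}$ is unchanged), and a short computation shows that the diagonal entry of the last new node equals $C_{ii}+C_{jj}+C_{kk}+2(C_{ij}+C_{ik}+C_{jk})-2$ in both cases. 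Relation (\ref{eq:second hexagon}) is verified by the same procedure, with the order of the last two operations swapped on each side.

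The main obstacle is the bookkeeping for the off-diagonal entries involving the iterated new nodes, where the same $-1$ correction is accumulated through different routes. For instance, the entry $C_{(ij),k}$ on the left of (\ref{eq:first hexagon}) picks up its $-1$ directly from $U((ij)k)$ in the third step, whereas on the right it is built at the second step from $C_{ik}+(C_{jk}-1)$, the $-1$ coming from the earlier $U(jk)$; both routes yield the common value $C_{ik}+C_{jk}-1$. A symmetric phenomenon occurs for $C_{i,(jk)}$. Once this pattern of symmetric cancellations is traced through all entries of the three new rows and columns, the two matrices coincide under the pairing described above, establishing all three relations.
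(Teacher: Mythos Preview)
Your proposal is correct and follows essentially the same route as the paper's own argument in Appendix~C: both apply the sequences of unlinkings to a generic quiver, identify the new nodes via their generating parameters (forcing $((ij)k)\leftrightarrow(i(jk))$), and verify entry-by-entry that the resulting matrices agree up to the corresponding permutation of rows/columns. Your highlighted checks (the cross-term $C_{(ij),(kl)}$ for the square, the diagonal of the triple node, and the two routes to $C_{(ij),k}=C_{ik}+C_{jk}-1$) are precisely the representative non-trivial entries the paper computes explicitly.
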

The proof goes by a~straightforward computation, which is presented in App.~\ref{app:hexagon}.
%For instance, the commutativity relation for unlinking can be written in both operator and quiver form
%
%\begin{equation}
%    U(kl) U(ij) = U(ij) U(kl)\quad \simeq \quad U(kl)[U(ij)Q] = U(ij)[U(kl)Q]\,.
%\end{equation}
%
Remarkably, the two unlinkings do not commute if there is a~repeated index:
\begin{equation}
    U(jk)  U(ij) \neq U(jk)  U(ij)\,. 
\end{equation}
The resulting quivers
\begin{equation}
    U(jk)[U(ij)Q],\quad U(ij)[U(jk)Q]
\end{equation}
differ by one pair of arrows between nodes $(ij)$ and $k$ on one side and between nodes $i$ and $(jk)$ on the other. However, we can unlink this extra pair of arrows using $U((ij)k)$ and $U(i(jk))$ respectively and obtain a~non-trivial relation which cannot be reduced to a~simpler one (again, we refer the reader to App.~\ref{app:hexagon} for the details).

\begin{rmk}\label{rmk:square and hexagon}
    We call the relation \eqref{eq:square} the square and relations (\ref{eq:first hexagon}-\ref{eq:second hexagon}) the hexagons, which refer to the shapes they form in the diagrams, where an arrow labelled by $(ab)$ corresponds to application of $U(ab)$ in which the new node $(ab)$ is created:
    \[\begin{tikzcd}[column sep={1cm},row sep={1cm}]
	   &&&& \bullet &&& \bullet \\
	   & \bullet && \bullet && \bullet & \bullet && \bullet \\
	   \bullet && \bullet & \bullet && \bullet & \bullet && \bullet \\
	   & \bullet &&& \bullet &&& \bullet
	   \arrow["{(ij)}"{description}, from=4-2, to=3-1]
	   \arrow["{(kl)}"{description}, from=3-1, to=2-2]
	   \arrow["{(kl)}"{description}, from=4-2, to=3-3]
	   \arrow["{(ij)}"{description}, from=3-3, to=2-2]
	   \arrow["{(ij)}"{description}, from=4-5, to=3-4]
	   \arrow["{(jk)}"{description}, from=3-4, to=2-4]
	   \arrow["{((ij)k)}"{description}, from=2-4, to=1-5]
	   \arrow["{(jk)}"{description}, from=4-5, to=3-6]
	   \arrow["{(ij)}"{description}, from=3-6, to=2-6]
	   \arrow["{(i(jk))}"{description}, from=2-6, to=1-5]
	   \arrow["{(ij)}"{description}, from=4-8, to=3-7]
	   \arrow["{((ij)k)}"{description}, from=3-7, to=2-7]
	   \arrow["{(jk)}"{description}, from=2-7, to=1-8]
	   \arrow["{(jk)}"{description}, from=4-8, to=3-9]
	   \arrow["{(i(jk))}"{description}, from=3-9, to=2-9]
	   \arrow["{(ij)}"{description}, from=2-9, to=1-8]
    \end{tikzcd}\]
    The operators  $U((ij)k)$ and $U(i(jk))$ from the hexagon relation will be called associativity unlinkings. Note that they differ by a~position of inner bracket (which corresponds to the unlinking that acts first) and nodes $((ij)k)$ and $(i(jk))$ can be identified.

\end{rmk}

%------------------------------

\subsection{Completeness Theorem}\label{sec:completeness}

After showing that relations (\ref{eq:square}-\ref{eq:second hexagon}) hold, we will prove that this list is complete and there are no other basic relations formed by unlinkings.

\begin{thm}[Completeness Theorem]
   For every relation $\boldsymbol{U(ij)}=\boldsymbol{U(kl)}$ there exists a~sequence of square and hexagon moves which transforms $\boldsymbol{U(ij)}$ into $\boldsymbol{U(kl)}$, and vice versa.
\end{thm}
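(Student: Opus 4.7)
The plan is to argue by induction on the common length $n$ of the two unlinking sequences. The first observation is that if $\boldsymbol{U(ij)}=\boldsymbol{U(kl)}$ as operators on quivers, then both sequences must have the same length $n$: applied to any generic quiver $Q$, either side produces a quiver whose node count is increased by exactly the length of the sequence, so distinct lengths would force distinct output sizes, ruling out operator equality.

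For the base cases I would treat $n\leq 3$ directly. For $n=1$ the equality is literal and no moves are required. For $n=2$, a case analysis on index overlaps shows that the only non-trivial identity between two length-$2$ sequences is the commutation of two unlinkings on disjoint pairs of nodes, which is precisely the square relation \eqref{eq:square}. For $n=3$ one enumerates the overlap patterns of the three pairs $\{i_s,j_s\}$ and checks that every non-trivial identity is obtained either by a square applied to an adjacent disjoint pair, or by one of the hexagons \eqref{eq:first hexagon} and \eqref{eq:second hexagon}.

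The inductive step for $n>3$ rests on a structural observation about the output quiver. Applied to a generic quiver $Q$, each sequence creates $n$ new nodes, each labelled canonically by a nested bracket expression that encodes its construction recipe, with the identification $((ij)k)=(i(jk))$ of Remark~\ref{rmk:square and hexagon}. Since $\boldsymbol{U(ij)}$ and $\boldsymbol{U(kl)}$ produce the same enlarged quiver, there is a canonical bijection between the new nodes of the two sequences, which determines a permutation of the factors of one sequence that, up to associativity identifications, matches the other. The plan is then to realise this permutation by a finite composition of squares and hexagons: move operations of $\boldsymbol{U(kl)}$ past each other, using squares when adjacent factors have disjoint index sets and hexagons when they share a node, until its rightmost factor coincides with the rightmost factor of $\boldsymbol{U(ij)}$. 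Once the rightmost operations agree, both sides can be cancelled and the inductive hypothesis applied to the resulting length-$(n-1)$ identity.

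The main obstacle is the rearrangement step: a single hexagon move not only swaps two adjacent unlinkings but also converts an associativity unlinking of the form $U((ab)c)$ into $U(a(bc))$ (or vice versa), and this relabelling can propagate into later factors whose indices themselves involve such brackets. The cleanest way to manage this bookkeeping is to reformulate the problem in terms of binary trees labelled by the original node alphabet: each bracket expression becomes a tree, squares correspond to commutations at disjoint subtrees, and hexagons correspond to local rotations of trees. The completeness claim then becomes a confluence-type statement for this tree rewriting system, namely that any two constructions of the same finite multiset of labelled trees are related by a finite sequence of local rotations and disjoint commutations. I expect this tree-rewriting reduction, together with the verification that the induced moves on the underlying unlinking sequence remain well-defined at each step, to be the technical heart of the proof.
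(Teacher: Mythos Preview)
Your proposal and the paper agree on the central observation: an operator identity $\boldsymbol{U(ij)}=\boldsymbol{U(kl)}$ forces a permutation between the new nodes created by the two sequences, and one must realise that permutation by squares and hexagons. Where you diverge is in the organisation. The paper does \emph{not} induct on the length. Instead it exploits the bookkeeping convention that the ordered vector of generating parameters $\boldsymbol{x}_{(\boldsymbol{ij})}$ is in one-to-one correspondence with the operator word. The identity forces $\boldsymbol{x}_{(\boldsymbol{kl})}=\pi\,\boldsymbol{x}_{(\boldsymbol{ij})}$ for some permutation $\pi$; one then writes $\pi$ as a product of adjacent transpositions $\tau_t\cdots\tau_1$ and argues that each $\tau_i$, read off from the bracketed vector, is either a square (all four indices distinct) or a hexagon (a shared index, with the requisite associativity unlinking necessarily present because otherwise the adjacency matrices would fail to agree). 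No cancellation, no induction, no tree rewriting: the whole permutation is dismantled in one pass.

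Your inductive route can be made to work, but two points are not yet justified. First, the step ``cancel the common rightmost factor and apply the hypothesis'' needs right cancellation in the operator monoid: from $\boldsymbol{V}\,U(ab)=\boldsymbol{W}\,U(ab)$ you want $\boldsymbol{V}=\boldsymbol{W}$, but equality in $\mathcal{U}$ means agreement on \emph{all} quivers, not only on those in the image of $U(ab)$; you would need to argue that $U(ab)$ applied to a generic $(C,\boldsymbol{x})$ still has algebraically independent entries, so that agreement there propagates. Second, moving $U(i_1j_1)$ to the rightmost position in $\boldsymbol{U(kl)}$ past a factor sharing an index is only a hexagon move if the corresponding associativity unlinking is actually present somewhere in the word; this is exactly what the paper extracts from the equality of matrices $C_{(\boldsymbol{ij})}$ and $C_{(\boldsymbol{kl})}$, and you would need an analogous argument inside your induction. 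Your tree-rewriting reformulation is appealing, but the confluence statement you would need is essentially equivalent to the theorem itself, so it does not obviously buy you a shortcut.
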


\begin{proof}
    
Denote  $Q_{(\boldsymbol{ij})}=(C_{(\boldsymbol{ij})},\boldsymbol{x}_{(\boldsymbol{ij})})=\boldsymbol{U}(\boldsymbol{i}\boldsymbol{j})Q$ and $Q_{(\boldsymbol{kl})}=(C_{(\boldsymbol{kl})},\boldsymbol{x}_{(\boldsymbol{kl})})=\boldsymbol{U}(\boldsymbol{k}\boldsymbol{l})Q$.  We assume that the ordering in $\boldsymbol{x}_{(\boldsymbol{ij})}$ and $\boldsymbol{x}_{(\boldsymbol{kl})}$ follows the order of unlinkings in $\boldsymbol{U}(\boldsymbol{i}\boldsymbol{j})$ and $\boldsymbol{U}(\boldsymbol{k}\boldsymbol{l})$ respectively 
    
        The relation $\boldsymbol{U}(\boldsymbol{i}\boldsymbol{j})=\boldsymbol{U}(\boldsymbol{k}\boldsymbol{l})$ implies that $C_{(\boldsymbol{kl})}=\pi C_{(\boldsymbol{ij})}$ and $\boldsymbol{x}_{(\boldsymbol{kl})}=\pi\boldsymbol{x}_{(\boldsymbol{ij})}$ , where $\pi$~is a~permutation. 
Every permutation can be expressed in terms of transpositions of the neighbouring elements:
\begin{equation}
    \pi = \tau_1 \tau_2 \dots \tau_t\,.
\end{equation}
If all indices that are mixed by $\tau_i$ are different
\begin{equation}
    \tau_i[\dots, (x_{i_1}x_{i_2}), (x_{i_3}x_{i_4}), \dots] = [\dots, (x_{i_3}x_{i_4}), (x_{i_1}x_{i_2}), \dots] \ ,
\end{equation}
then it corresponds to the square move and its action is neutral for all other entries of the bracketed vector.
If two indices that are mixed by $\tau_i$ are the same, then in the bracketed vector there must be a~corresponding associativity unlinking (in other case we would violate $C_{(\boldsymbol{kl})}=\pi C_{(\boldsymbol{ij})}$)\footnote{One may ask about the position of the associativity unlinking -- in principle it does not have to be immediately after or between the corresponding pair of unlinkings, it can be moved by other hexagon or square moves. However, in that case there exist a~sequence of transpositions such that the associativity unlinking is immediately after or between the the corresponding pair of unlinkings and all other cases would violate $C_{(\boldsymbol{kl})}=\pi C_{(\boldsymbol{ij})}$.} and $\tau_i$ represents one of the hexagon moves (that includes also the associativity move which is invisible for the unbracketed vectors):
\begin{equation*}
\begin{split}
    \tau_i[\dots, (x_{i_1}x_{i_2}), (x_{i_2}x_{i_3}), ((x_{i_1}x_{i_2})x_{i_3}), \dots] &= [\dots, (x_{i_2}x_{i_3}), (x_{i_1}x_{i_2}), (x_{i_1}(x_{i_2}x_{i_3})), \dots] \ ,\\
    \tau_i[\dots, (x_{i_1}x_{i_2}), ((x_{i_1}x_{i_2})x_{i_3}), (x_{i_2}x_{i_3}), \dots] &= [\dots, (x_{i_2}x_{i_3}), (x_{i_1}(x_{i_2}x_{i_3})), (x_{i_1}x_{i_2}), \dots] \,.
\end{split}
\end{equation*}
The other options are excluded by requirement that we unlink different nodes.

We know that 
\begin{equation}
    \boldsymbol{x}_{(\boldsymbol{kl})}=\pi\boldsymbol{x}_{(\boldsymbol{ij})}=\tau_1 \tau_2 \dots \tau_t \boldsymbol{x}_{(\boldsymbol{ij})}\,,
\end{equation}
and since operators and vectors of generating parameters and in one-to-one correspondence, we have
\begin{equation}
    \boldsymbol{U}(\boldsymbol{k}\boldsymbol{l})=\tau_1 \tau_2 \dots \tau_t\ \boldsymbol{U}(\boldsymbol{i}\boldsymbol{j}) \ ,
\end{equation}
where each $\tau_i$ is understood as a~square or hexagon move. This confirms that list of basic relations is complete. 
\end{proof}

\begin{exmp}
    Below we illustrate the above theorem with an example.
    Consider the following relation between two sequences of unlinking
    \begin{equation}\label{eq:non-basic relation example}
        U((ij)(jk))U((ij)k)U(jk)U(ij) = U(j(i(jk)))U(ij)U(i(jk))U(jk)\,.
    \end{equation}
    We can express the permutation $\pi$ explicitly in terms of two transpositions:
    \begin{equation}
    \pi = \tau_1 \tau_2\,,
    \end{equation}
    reducing the proof of (\ref{eq:non-basic relation example}) to applying a basic relation twice. The first transposition
    exchanges $U(ij)$ and $U(jk)$, and requires a hexagon move:
    \begin{gather}
        U((ij)(jk))U((ij)k)U(jk)U(ij) \nonumber\\
        {\big\downarrow }\,{\tau_1}  \\
        U((ij)(jk))U(i(jk))U(ij)U(jk)\,.\nonumber
    \end{gather}
    Note that $\tau_1$ does not simply exchange $U(ij)$ and $U(jk)$, but also swaps the associativity unlinking, which is required by (\ref{eq:first hexagon}). The second transposition exchanges $U(ij)$ with $U(i(jk))$:
       \begin{gather}
        U((ij)(jk))U(i(jk))U(ij)U(jk) \nonumber\\
        {\big\downarrow }\,\tau_2  \\
        U(j(i(jk)))U(ij)U(i(jk))U(jk)\,,\nonumber
    \end{gather}
    which is an application of the relation (\ref{eq:first hexagon}). The associativity unlinking changes into $U(j(i(jk)))$, since 
    $(i(jk))$ is the node which has been created before. This shows how (\ref{eq:non-basic relation example}) follows from the basic relations (\ref{eq:square}-\ref{eq:second hexagon}).
\end{exmp}

\subsection{Generalisation to linking}\label{sec:linking}
\begin{prp}
The set of unlinking relations (\ref{eq:square}-\ref{eq:second hexagon}) can be extended to the following three groups of relations:
\begin{equation}
    \label{eq:commutators}
\begin{split}
       U(kl)  U(ij) &= U(ij)  U(kl) \\[4pt]
        L(kl)  L(ij) &= L(ij)  L(kl) \\[4pt]
        L(kl)  U(ij) &= U(ij)  L(kl) \\[4pt]
\end{split}        
\end{equation}
\begin{equation}
\label{eq:commutators2}
\begin{split}
        U((ij)k)  U(jk)  U(ij) &= U(i(jk))  U(ij)  U(jk)\\[4pt]
        U(jk)  U((ij)k)  U(ij) &= U(ij)  U(i(jk))  U(jk)\\[4pt]
        L((ij)k)  L(jk)  L(ij) &= L(i(jk))  L(ij)  L(jk)\\[4pt]
        L(jk)  L((ij)k)  L(ij) &= L(ij)  L(i(jk))  L(jk)\\[4pt]
\end{split}        
\end{equation}
\begin{equation}
\label{eq:commutators3}
\begin{split}
        U(j(k(ij)))  L((ij)k) L(jk)  U(ij)&= 
        U((ij)(jk))  U(i(jk))  U(ij) L(jk) \\[4pt]
        L((ij)(jk))  L((ij)k) L(jk)  U(ij)&= 
        L(j(i(jk)))  U(i(jk))  U(ij) L(jk) \\[4pt]
        U((ij)(jk)) L(jk)  L((ij)k)  U(ij)&= 
        U(j(i(jk)))  U(ij)  U(i(jk)) L(jk) \\[4pt]
        L(j(k(ij))) L(jk)  L((ij)k)  U(ij)&= 
        L((ij)(jk))  U(ij)  U(i(jk)) L(jk) \\[4pt]
        L(j(k(ij)))  U((ij)k) U(jk)  L(ij)&= 
        L((ij)(jk))  L(i(jk))  L(ij) U(jk) \\[4pt]
        U((ij)(jk))  U((ij)k) U(jk)  L(ij)&= 
        U(j(i(jk)))  L(i(jk))  L(ij) U(jk) \\[4pt]
        L((ij)(jk)) U(jk)  U((ij)k)  L(ij)&= 
        L(j(i(jk)))  L(ij)  L(i(jk)) U(jk) \\[4pt]
        U(j(k(ij))) U(jk)  U((ij)k)  L(ij)&= 
        U((ij)(jk))  L(ij)  L(i(jk)) U(jk) \\[4pt]
    \end{split}
\end{equation}
\end{prp}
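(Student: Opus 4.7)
The strategy parallels the proof of Proposition~\ref{prp:all basic relations}: apply both sides of each identity to a generic quiver $Q=(C,\boldsymbol{x})$ and verify that the resulting matrices and generating-parameter vectors agree. The key structural observation is that $L(ij)$ differs from $U(ij)$ only in the $(i,j)$ entry of the enlarged matrix (by $+1$ rather than $-1$) and in the generating parameter of the new node (by a factor of $q$). This localises the distinction between the two operations and organises the bookkeeping.

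For the commutators in (\ref{eq:commutators}), the $UU$ case is already done. When $i,j,k,l$ are distinct, the operators $U(ij)$ or $L(ij)$ and $U(kl)$ or $L(kl)$ alter disjoint sets of matrix entries and create new nodes whose attachments depend only on the original rows, so commutativity is immediate in all three cases. For the pure hexagons in (\ref{eq:commutators2}), the $LL$ identities follow from the row-by-row computation carried out in App.~\ref{app:hexagon} for the $UU$ case; one simply replaces $C_{ij}-1$ by $C_{ij}+1$ throughout and drops every $q^{-1}$ factor. The column-sum pattern that defines the appended rows is insensitive to this sign change and this prefactor, so every line of the verification goes through unchanged.

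The main obstacle is group (\ref{eq:commutators3}). Here $U$ and $L$ are interleaved, so the sign corrections no longer cancel in a uniform way and must be tracked entry by entry. For each of the eight identities I would apply both sides to $Q=(C,\boldsymbol{x})$, note that the original $m\times m$ block on which none of $i,j,k$ lie is untouched, and compute the four appended rows and four appended generating parameters explicitly. A useful reduction is that the eight identities split into two families of four via the global involution $U\leftrightarrow L$: identities one through four begin with $U(ij)$ and $L(jk)$, while identities five through eight are their mirror image, so the verifications of the latter are obtained from those of the former by swapping every $U$ for an $L$ and flipping the corresponding signs. Within each family the choice of $U$ versus $L$ for the two associativity operators on each side is precisely what absorbs the $\pm 1$ shifts produced by the initial $U$/$L$ mismatch in the $(i,j)$ and $(j,k)$ entries, and tracking these compensations is the core content of the calculation. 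The full verification is mechanical but lengthy, and I would defer it to an appendix in the style of App.~\ref{app:hexagon}.
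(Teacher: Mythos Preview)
Your proposal is correct and matches the paper's approach: the paper explicitly omits the proof, stating only that ``it is a~straightforward computation (it is done in App.~\ref{app:hexagon} for the case of pure unlinking),'' which is precisely the strategy you outline. Your organisational observations---localising the $U/L$ distinction to the $(i,j)$ entry and the $q$-prefactor, and using the global $U\leftrightarrow L$ involution to halve the work in (\ref{eq:commutators3})---are useful refinements but do not change the underlying method.
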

We will omit the proof as it is a~straightforward computation (it is done in App.~\ref{app:hexagon} for the case of pure unlinking).
It is important to stress that relations (\ref{eq:commutators}-\ref{eq:commutators3}) are mutually independent and cannot be reduced to simpler relations. It follows from the fact that relations (\ref{eq:commutators}-\ref{eq:commutators3}) provide the minimal numbers (2, 3 and 4, respectively) of operations needed to commute one unlinking or linking with another, as can be checked directly by applying these operations step-wise to a~quiver whose set of nodes includes $\{i,j,k,l\}$ as a~subset.

In the rest of the paper we will focus on unlinking (the case of linking is completely analogous) and leave the analysis of combined unlinking and linking to the future research. We also note that it would be very interesting to recognise some familiar algebraic object behind the relations (\ref{eq:commutators}-\ref{eq:commutators3}).
%%%%%%%%%%%%%%%%%%%%%%%%%%%%%%%%%%%%%%%%%%%%%%%%%%%%%%%%%%%%%%%%%%%%%%%%%%%%%%%%%%%%%

\subsection{Connector Theorem}\label{sec:connector}

In Sec.~\ref{sec:generators and relations} and~\ref{sec:completeness} we presented the list of basic relations and proved that it is complete. Now we will show that any two sequences of unlinking can be amended by other sequences in a~way that they form a~relation.

\begin{dfn}
Given any two sequences of unlinking $\boldsymbol{U}(\boldsymbol{i}\boldsymbol{j})= U(i_{n}j_{n})\dots U(i_{2}j_{2})U(i_{1}j_{1})$
and $\boldsymbol{U}(\boldsymbol{k}\boldsymbol{l})=U(k_{m}l_{m})\dots U(k_{2}l_{2})U(k_{1}l_{1})$ (we will call them initial unlinkings) of length $n$ and $m$, a~pair of sequences of unlinking $\boldsymbol{U}(\boldsymbol{i'}\boldsymbol{j'})=U(i'_{n'}j'_{n'})\dots U(i'_{2}j'_{2})U(i'_{1}j'_{1})$ and $\boldsymbol{U}(\boldsymbol{k'}\boldsymbol{l'})=U(k'_{m'}l'{}_{m'})\dots U(k'_{2}l'_{2})U(k'_{1}l'_{1})$ is called an $(n,m)$-connector if
\begin{equation}
\boldsymbol{U}(\boldsymbol{k'}\boldsymbol{l'})\boldsymbol{U}(\boldsymbol{i}\boldsymbol{j})=\boldsymbol{U}(\boldsymbol{i'}\boldsymbol{j'})\boldsymbol{U}(\boldsymbol{k}\boldsymbol{l}).
\end{equation}
\end{dfn}

In the following we want to show the existence of a~finite $(n,m)$-connector
for any $n,m\in\mathbb{Z_{+}}$. In order to do it, we will start
from proving three lemmas.

\begin{lma}
\label{lma:(1,1)-connector}
For any two unlinkings there exists a~finite $(1,1)$-connector. 
\end{lma}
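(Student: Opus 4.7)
The plan is to prove the lemma by a case analysis on how the index sets $\{i_1, j_1\}$ and $\{k_1, l_1\}$ of the two initial unlinkings overlap, in each case constructing the $(1,1)$-connector by specialising one of the basic relations of Proposition \ref{prp:all basic relations}.

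First I would treat the case when the four indices are pairwise distinct. The square relation (\ref{eq:square}) gives $U(k_1 l_1)\, U(i_1 j_1) = U(i_1 j_1)\, U(k_1 l_1)$, so choosing $\boldsymbol{U}(\boldsymbol{k'l'}) := U(k_1 l_1)$ and $\boldsymbol{U}(\boldsymbol{i'j'}) := U(i_1 j_1)$ yields a trivial $(1,1)$-connector of connector-lengths $(n',m') = (1,1)$.

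Next I would handle the case when $\{i_1, j_1\}$ and $\{k_1, l_1\}$ share exactly one index. Using the symmetry $U(ab) = U(ba)$ noted right after (\ref{eq:unlinking definition}), I may relabel so that the shared index is $j_1 = k_1 =: j$ with $i_1 =: i$ and $l_1 =: k$ distinct. The first hexagon (\ref{eq:first hexagon}) then reads $U((ij)k)\, U(jk)\, U(ij) = U(i(jk))\, U(ij)\, U(jk)$, so I would set $\boldsymbol{U}(\boldsymbol{k'l'}) := U((ij)k)\, U(jk)$ and $\boldsymbol{U}(\boldsymbol{i'j'}) := U(i(jk))\, U(ij)$, producing a $(1,1)$-connector of connector-lengths $(2,2)$. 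The remaining case $\{i_1,j_1\} = \{k_1,l_1\}$ is immediate: then $U(i_1 j_1) = U(k_1 l_1)$ as operators, so the empty choice $\boldsymbol{U}(\boldsymbol{k'l'}) = \boldsymbol{U}(\boldsymbol{i'j'}) = U(\emptyset)$ works.

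I do not expect any substantive obstacle. The only piece of bookkeeping is confirming, in the one-index-shared case, that all four sub-configurations ($i_1 = k_1$, $i_1 = l_1$, $j_1 = k_1$, $j_1 = l_1$) reduce to the canonical one via the symmetry of $U(\cdot\cdot)$ in its two arguments, so that the hexagon (\ref{eq:first hexagon}) always applies. I would also remark that the second hexagon (\ref{eq:second hexagon}) yields an alternative $(1,1)$-connector in the same case, a flexibility that will likely be useful when this lemma is iterated to construct longer $(n,m)$-connectors in the subsequent lemmas leading to the Connector Theorem.
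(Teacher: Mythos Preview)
Your proposal is correct and follows essentially the same approach as the paper: a case split on the number of shared indices, invoking the square relation (\ref{eq:square}) when none are shared and the hexagon relation (\ref{eq:first hexagon}) when one is shared. You are slightly more thorough than the paper in explicitly treating the degenerate case $\{i_1,j_1\}=\{k_1,l_1\}$ and in spelling out the symmetry reduction for the one-shared-index sub-configurations, but these are minor bookkeeping additions rather than a different route.
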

%\hd{I agree with this lemma. Although, now I don't the diagrams for $A_4$ and $A_5$ in figure %\ref{fig:posets type A}  compatible with this lemma. Take $A_3$, let one sequence be the arrow to the left, and the other sequence the arrow to the right. According to that diagram, the first point where this two sequences of unlinking, with $m = 1, n = 1$, meet is at the top, the sink. The top is at least five more unlinkings away. But the left and right points should be connected after one, or two more unlinkings, based on this lemma. So, my question, are the diagrams incomplete, or do they represent something else? Similarly for $A_5$.} \pk{They can be connected faster, we just chose different sequences of unlinking.}
\begin{proof}
For two unlinkings that do not share an index (i.e. unlinkings of the form $U(ij)$, $U(kl)$ ), the $(1,1)$-connector is given by the square relation:
\begin{equation}
    U(kl)  U(ij) = U(ij)  U(kl) \ ,
\end{equation}
hence $\boldsymbol{U}(\boldsymbol{k'}\boldsymbol{l'})=U(kl)$, $\boldsymbol{U}(\boldsymbol{i}\boldsymbol{j})=U(ij)$, $\boldsymbol{U}(\boldsymbol{i'}\boldsymbol{j'})=U(ij)$, $\boldsymbol{U}(\boldsymbol{k}\boldsymbol{l})=U(kl)$.

For two unlinkings that do share an index (i.e. unlinkings of the form $U(ij)$, $U(jk)$ ), the $(1,1)$-connector is given by the hexagon relation:
\begin{equation}
    U((ij)k)  U(jk)  U(ij) = U(i(jk))  U(ij)  U(jk) \ ,
\end{equation}
so $\boldsymbol{U}(\boldsymbol{k'}\boldsymbol{l'})=U((ij)k)  U(jk)$, $\boldsymbol{U}(\boldsymbol{i}\boldsymbol{j})=U(ij)$, $\boldsymbol{U}(\boldsymbol{i'}\boldsymbol{j'})=U(i(jk))  U(ij)$, $\boldsymbol{U}(\boldsymbol{k}\boldsymbol{l})=U(jk)$.
\end{proof}

Basing on Lemma~\ref{lma:(1,1)-connector}, we will write an algorithm for constructing any
$(n,m)$-connector by sequential creation of $(1,1)$-connectors.

\begin{dfn}

We define the Connector Algorithm as follows:
\begin{enumerate}
\item Start from $U(i_{1}j_{1})$ and $U(k_{1}l_{1})$ and apply $(1,1)$-connector. Create the set of identities which contains the resulting equality. 
\item For $p>1$, do the following loop:
\begin{enumerate}
\item Take all sequences of unlinking from the set of identities and cut them at length~$p$ (take first $p$ unlinkings from each of them). 
\item Create the (ordered) list of sequences of unlinking of length $p$ as follows:
\begin{enumerate}
\item If $p > n$, start from the sequence containing $\boldsymbol{U}(\boldsymbol{i}\boldsymbol{j})$; if $p\leq n$, use the first~$p$ unlinkings: $U(i_{p}j_{p})\dots U(i_{2}j_{2})U(i_{1}j_{1})$. 
\item The next sequence on the list is the one that differs from the previous
one by an application of $(1,1)$-connector or by the last unlinking in the sequence. 
\item If $p > n$, the last sequence on the list is the one containing $\boldsymbol{U}(\boldsymbol{k}\boldsymbol{l})$; if 
$p\leq m$, it is $U(k_{p}l{}_{p})\dots U(k_{2}l_{2})U(k_{1}l_{1})$. 
\end{enumerate}
\item From the list of sequences of length $p$ take all pairs which share
the first $p-1$ unlinkings and apply $(1,1)$-connector to each pair.
Add the resulting equalities to the set of identities. Include also equalities coming from combining different identities and the identification of nodes that differ only
by bracketing (e.g. $(1(23))$ and $((12)3)$ are the same node).
\item If the set of identities contains $\boldsymbol{U}(\boldsymbol{k'}\boldsymbol{l'})\boldsymbol{U}(\boldsymbol{i}\boldsymbol{j})=\boldsymbol{U}(\boldsymbol{i'}\boldsymbol{j'})\boldsymbol{U}(\boldsymbol{k}\boldsymbol{l})$ for
some sequences of unlinking $\boldsymbol{U}(\boldsymbol{k'}\boldsymbol{l'})$ and $\boldsymbol{U}(\boldsymbol{i'}\boldsymbol{j'})$, stop.
\item Increase $p$ by $1$.
\end{enumerate}
\end{enumerate}
\end{dfn}

\begin{dfn}
    A descendant of a~given sequence of unlinkings $U(i_{n}j_{n})\dots U(i_{2}j_{2})U(i_{1}j_{1})$ is any sequence of unlinkings that is created in the Connector Algorithm and starts from $U(i_{n}j_{n})\dots U(i_{2}j_{2})U(i_{1}j_{1})$.
\end{dfn}

Lemma~\ref{lma:(1,1)-connector} guarantees that the Connector Algorithm cannot stop for any reason other than constructing the $(n,m)$-connector. The next two lemmas will show that the Connector Algorithm cannot lead to an infinite process, which in turn will show the existence of a~finite $(n,m)$-connector.

\begin{lma}
\label{lma:uniqueness of (1,1)-connectors}
In the Connector Algorithm every application of the $(1,1)$-connector
is unique. In other words, no $(1,1)$-connector can appear in this
process twice if we differentiate between the initial unlinkings that
unlink the same nodes.\footnote{For example, in $U(i_{2}j_{2})U(i_{1}j_{1})=U(12)U(12)$ we treat
$U(i_{2}j_{2})$ and $U(i_{1}j_{1})$ as different.}
\end{lma}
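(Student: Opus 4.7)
The plan is to show that each $(1,1)$-connector invocation in the Connector Algorithm is uniquely identified by the step index $p$ at which it is applied, together with the unordered pair of length-$p$ sequences (sharing their first $p-1$ unlinkings) on which it acts. Combined with the footnote's convention that position-tagged initial unlinkings are treated as distinct even when they unlink the same nodes, this data-tagging immediately implies the claim.

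First, I would fix a step $p$ and analyse step 2(c) of the algorithm: a $(1,1)$-connector is applied to each pair of sequences in the ordered list produced in step 2(b) that share their first $p-1$ unlinkings and differ in the $p$-th unlinking. Because the list in step 2(b) is built so that consecutive entries differ by a single operation (either a $(1,1)$-connector substitution or a change of the last unlinking), every such pair is enumerated exactly once. Consequently, within a fixed step $p$, no $(1,1)$-connector application can recur.

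Second, I would argue that $(1,1)$-connectors invoked at different steps $p \neq p'$ must also be distinct. Their initial unlinkings reside at positions $p$ and $p'$ respectively inside their enclosing sequences, and the footnote's convention forces us to treat syntactically identical unlinkings at different positions as different objects. Hence the ordered pair of position-tagged initial unlinkings at step $p$ cannot coincide with the ordered pair at step $p'$, regardless of the underlying node labels.

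The main obstacle will be to ensure that the bracketing identifications carried out in step 2(c) (for instance the identification of $(1(23))$ with $((12)3)$) do not accidentally collapse two genuinely distinct $(1,1)$-connector applications into a single one. I would resolve this by observing that such identifications only act on node labels appearing inside already-recorded identities; they neither modify the ordered pair of sequences nor alter the step indices that characterise a $(1,1)$-connector. Together with the two observations above, this gives the required uniqueness and, via Lemma~\ref{lma:(1,1)-connector}, prepares the ground for the termination argument that follows.
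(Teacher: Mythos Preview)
Your proposal has a genuine gap. The lemma asserts that no pair of tagged unlinkings $\{U(ab),U(cd)\}$ is ever the subject of a $(1,1)$-connector application more than once. Your second argument claims that connectors at steps $p\neq p'$ are automatically distinct because ``their initial unlinkings reside at positions $p$ and $p'$''. This misreads the footnote's convention: the tag carried by an unlinking is its index in the \emph{original} input sequence $U(i_n j_n)\cdots U(i_1 j_1)$, not its position in the intermediate length-$p$ sequences manufactured by the algorithm. As the algorithm runs, a single tagged unlinking $U(i_r j_r)$ gets shuffled to different positions in different sequences; it may well sit at position $p$ in one sequence and at position $p'>p$ in another. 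So the mere fact that two connector applications occur at different step indices does not prevent them from acting on the same tagged pair. Your bookkeeping by $(p,\text{pair of sequences})$ only shows that distinct applications are distinct \emph{as events}, which is trivially true and not what the lemma claims.

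The paper's proof supplies exactly the missing ingredient. After $\{U(ab),U(cd)\}$ are connected at step $p$, it tracks where these two tagged unlinkings sit in every sequence that appears afterwards: in descendants of the two sequences just joined, one of $U(ab),U(cd)$ stays at position $p$ while the other lies at some position $r>p$; in all other sequences, the ordered-list construction of step 2(b) forces $U(ab)$ and $U(cd)$ to appear in a fixed relative order (one order for sequences before the connection point in the list, the reversed order after). In either case the pair can never again both occupy the terminal slot of two sequences sharing their first $q-1$ unlinkings, so the $(1,1)$-connector between them is never re-invoked. This ``sorting invariant'' on the ordered list is the key idea, and your argument does not contain it.
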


\begin{proof}
Without loss of generality, we assume that a~given $(1,1)$-connector
is applied to unlinkings $U(ab)$ and $U(cd)$ at the end of two sequences
of length $p$ (we do not assume that $a,b,c,d$ are all different
indices). 
Then we know that all their descendants have either $U(ab)$ or $U(cd)$ as $p$-th unlinking, and the other one further in the
sequence (i.e. it is $r$-th unlinking in the sequence, with $r>p$). This means that in further steps of the algorithm $U(ab)$ and $U(cd)$ cannot appear in the same  $(1,1)$-connector.
On the other hand, all other sequences of unlinking are either before
or after them on the lists of sequences. Those before have $U(ab)$
and $U(cd)$ in one order, and those after have them in the reversed
order. In consequence, the $(1,1)$-connector between $U(ab)$ and
$U(cd)$ cannot appear again.
\end{proof}

\begin{lma}
\label{lma:initial indices}
Only indices from the initial unlinkings can be repeated indices in
$(1,1)$-connectors that appear in the Connector Algorithm.    
\end{lma}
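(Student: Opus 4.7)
The plan is to proceed by strong induction on the step $p$ of the Connector Algorithm at which the $(1,1)$-connector in question is applied. The base case $p=1$ is immediate: the only pair available at step~1 is $\bigl(U(i_{1}j_{1}),\,U(k_{1}l_{1})\bigr)$, both of whose indices are, by definition, indices from the initial unlinkings, so any repeated index is initial.

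For the inductive step, I would consider a $(1,1)$-connector applied at step $p$ to the $p$-th unlinkings $U_p$, $U'_p$ of two length-$p$ sequences in the algorithm's list that share their first $p-1$ unlinkings. The key structural observation is that every unlinking appearing in any sequence produced by the algorithm is of one of two types: either (a) an unlinking inherited directly from an initial sequence $\boldsymbol{U(ij)}$ or $\boldsymbol{U(kl)}$, whose indices are by construction initial; or (b) an associativity unlinking of the form $U((\alpha\beta)\gamma)$ or $U(\alpha(\beta\gamma))$ generated by a hexagon $(1,1)$-connector at some earlier step $p'<p$, to which the inductive hypothesis already applies. Chasing the creation history of a type-(b) unlinking back along the algorithm shows that each of its \emph{atomic} indices can always be traced to an atomic index of some initial unlinking, hence is itself initial. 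If both $U_p$ and $U'_p$ are of type~(a), or if one is of type~(a) and the other of type~(b), any shared index must in particular be an index of the type-(a) unlinking and is therefore initial.

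The delicate subcase is when both $U_p$ and $U'_p$ are algorithmic associativity unlinkings and could in principle share a \emph{compound} node $(\alpha\beta)$ created inside the algorithm rather than inherited from the initial sequences. To rule this out, my plan is to combine the specific ordering of the length-$p$ list prescribed by the Connector Algorithm with Lemma~\ref{lma:uniqueness of (1,1)-connectors}. For $U_p$ and $U'_p$ to share the compound index $(\alpha\beta)$, each must have been produced by a hexagon involving the unlinking $U(\alpha\beta)$; unwinding the list structure at the two steps where these hexagons were applied and commuting through earlier $(1,1)$-connectors covered by the inductive hypothesis would force the same $(1,1)$-connector involving $U(\alpha\beta)$ to be applied twice in the algorithm, contradicting uniqueness. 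The main obstacle I expect is precisely this last bookkeeping step: verifying that the algorithmic ordering of sequences, together with the manner in which associativity unlinkings propagate into later positions of descendant sequences, genuinely forces the two originating hexagons for $U_p$ and $U'_p$ to coincide, so that Lemma~\ref{lma:uniqueness of (1,1)-connectors} delivers the required contradiction and excludes non-initial compound indices as repeated indices.
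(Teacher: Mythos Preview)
Your proposal correctly isolates the delicate subcase --- two associativity unlinkings sharing a compound index $(\alpha\beta)$ --- but the mechanism you propose for the contradiction does not work. The hexagons producing $U((\alpha\beta)c)$ and $U((\alpha\beta)d)$ pair $U(\alpha\beta)$ with \emph{different} partners (one with some $U(\alpha c)$ or $U(\beta c)$, the other with some $U(\alpha d)$ or $U(\beta d)$), so they are genuinely distinct $(1,1)$-connectors; Lemma~\ref{lma:uniqueness of (1,1)-connectors} does not forbid both from occurring, and no amount of bookkeeping will collapse them into ``the same connector applied twice''. Your inductive scaffolding is therefore not only unverified in the key step, it is aimed at the wrong target.

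The paper's proof is by a direct contradiction on the \emph{position} of $U(\alpha\beta)$, with no induction and no appeal to Lemma~\ref{lma:uniqueness of (1,1)-connectors}. The associativity unlinking $U((\alpha\beta)c)$ can appear only in descendants of the side of its originating hexagon in which $U(\alpha\beta)$ sits at a fixed position $q$ (applied before its hexagon partner). In every such descendant, $U(\alpha\beta)$ remains pinned at position $q$ and does not recur later in the sequence. For $U((\alpha\beta)d)$ to be created in the same lineage, $U(\alpha\beta)$ would have to meet a second partner at some position beyond $q$ --- which is impossible. This positional obstruction, rather than uniqueness of connectors, is the idea missing from your sketch.
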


\begin{proof}
We prove the lemma  indirectly. Without loss of generality, we assume
that the index $(ab)$ (referring to the node that was produced in
$U(ab)$) is repeated in the $(1,1)$-connector between $U((ab)c)$
and $U((ab)d)$. Moreover, we assume that $U(bc)$ is before $U(bd)$
in the initial sequence of unlinkings. 

The $(1,1)$-connector between $U((ab)c)$ and $U((ab)d)$ can appear only for descendants of the
sequences  appearing in the $(1,1)$-connector between $U(ab)$ and $U(bc)$ with $U(ab)$ being
before $U(bc)$ in the sequence of unlinkings (see Remark~\ref{rmk:square and hexagon}). However, this means that $U(ab)$ does not appear further in the sequence and cannot meet
$U(bd)$ to create $U((ab)d)$ in the $(1,1)$-connector realised by the hexagon relation. 
\end{proof}

\begin{thm}[Connector Theorem]
For any two sequences of unlinking of length $n$ and $m$ there
exist a~finite $(n,m)$-connector.    
\end{thm}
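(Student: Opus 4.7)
The plan is to run the Connector Algorithm introduced in the preceding definition and show that it terminates in finitely many iterations, at which point it automatically outputs an $(n,m)$-connector. Feasibility of each iteration is immediate: by Lemma~\ref{lma:(1,1)-connector}, any two unlinkings admit a $(1,1)$-connector, so step~2(c) of the algorithm can always be executed as long as the termination condition of step~2(d) is not yet met. The heart of the argument is therefore termination.

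My strategy for termination is to bound the total number of $(1,1)$-connectors that can appear during the run. Lemma~\ref{lma:uniqueness of (1,1)-connectors} guarantees that no $(1,1)$-connector is used twice, so it suffices to show that only finitely many distinct $(1,1)$-connectors are possible. For hexagon-type connectors, Lemma~\ref{lma:initial indices} restricts any repeated index to come from the finite collection of indices generated by the initial sequences $\boldsymbol{U}(\boldsymbol{i}\boldsymbol{j})$ and $\boldsymbol{U}(\boldsymbol{k}\boldsymbol{l})$, which immediately gives a finite bound on their number. For square-type connectors, the constituent unlinkings must themselves appear as trailing entries of sequences in the list; tracking how the list grows shows that its pool of available unlinkings is generated from the initial sequences, together with the bounded collection of associativity unlinkings produced by hexagons, again yielding a finite bound.

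Once the pool of $(1,1)$-connectors is exhausted, the algorithm cannot continue past step~2(c) and must halt. It cannot halt in any way other than through the termination check in step~2(d), because the list in step~2(b) is explicitly constructed to begin with a sequence containing $\boldsymbol{U}(\boldsymbol{i}\boldsymbol{j})$ at one end and end with a sequence containing $\boldsymbol{U}(\boldsymbol{k}\boldsymbol{l})$ at the other, with consecutive sequences differing by a single $(1,1)$-connector application (or by the last unlinking in the sequence). When no further connector can extend this chain of identities, it already links $\boldsymbol{U}(\boldsymbol{k'}\boldsymbol{l'})\boldsymbol{U}(\boldsymbol{i}\boldsymbol{j})$ to $\boldsymbol{U}(\boldsymbol{i'}\boldsymbol{j'})\boldsymbol{U}(\boldsymbol{k}\boldsymbol{l})$ for suitable $\boldsymbol{U}(\boldsymbol{k'}\boldsymbol{l'})$ and $\boldsymbol{U}(\boldsymbol{i'}\boldsymbol{j'})$, giving the required finite $(n,m)$-connector.

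The main obstacle I anticipate is controlling the square-type $(1,1)$-connectors: unlike hexagons, their indices need not a priori come from the initial alphabet, so bounding them requires careful bookkeeping of how the pool of unlinkings appearing in the list evolves under repeated applications of both square and hexagon moves. The crucial observation is that a square move merely permutes two already-present unlinkings without producing new ones, while a hexagon move introduces at most one associativity unlinking whose index structure is already constrained by Lemma~\ref{lma:initial indices}; combining these observations shows the pool remains bounded in terms of the data of the initial sequences, which closes the termination argument.
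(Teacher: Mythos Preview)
Your proposal is correct and follows essentially the same route as the paper: both run the Connector Algorithm, invoke Lemma~\ref{lma:(1,1)-connector} for feasibility, Lemma~\ref{lma:uniqueness of (1,1)-connectors} to reduce termination to a finiteness count, and Lemma~\ref{lma:initial indices} to bound the hexagon-type connectors. The only cosmetic difference is that the paper packages the argument as a proof by contradiction (non-termination $\Rightarrow$ infinitely many distinct connectors $\Rightarrow$ infinitely many hexagons, since only hexagons create new nodes, contradicting Lemma~\ref{lma:initial indices}), which absorbs your separate treatment of square-type connectors into a single step.
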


\begin{proof}
We prove the theorem indirectly. Let us assume that for given two
sequences of unlinking of length $n$ and $m$ the Connector Algorithm
never stops. From Lemma~\ref{lma:uniqueness of (1,1)-connectors} we know that every $(1,1)$-connector is
unique, therefore there must be an infinite number of hexagons that
create new nodes. However, from Lemma~\ref{lma:initial indices} we know that only indices
from the initial unlinkings can be repeated indices in $(1,1)$-connectors
that appear in the Connector Algorithm, so only a~finite number of
different hexagons can be created in this way. We arrive at a~contradiction,
which means that the the Connector Algorithm has to stop. Lemma~\ref{lma:(1,1)-connector} guarantees that the Connector Algorithm cannot stop for any reason other than constructing the $(n,m)$-connector, which proves the theorem.    
\end{proof}

\subsection{Example of \texorpdfstring{$(2,1)$}{(2,1)}-connector}

Let us illustrate the action of the Connector Algorithm with the example of $(2,1)$-connector between $\boldsymbol{U}(\boldsymbol{i}\boldsymbol{j})=U(45)U(12)$ and $\boldsymbol{U}(\boldsymbol{k}\boldsymbol{l})=U(23)$.  Using the notation from Sec.~\ref{sec:connector}, we have:
\begin{itemize}
\item $n=2$, $U(i_{1}j_{1})=U(12)$, $U(i_{2}j_{2})=U(45)$ ($i_{1}=1$,$j_{1}=2$, $i_{2}=4$, $j_{2}=5$),
\item $m=1$, $U(k_{1}l_{1})=U(23)$, ($k_{1}=2$, $l_{1}=3$).
\end{itemize}
The Connector Algorithm runs as follows:
\begin{enumerate}
\item We start from $U(i_{1}j_{1})=U(12)$ and $U(k_{1}l_{1})=U(23)$ and
apply $(1,1)$ connector. Since $2$ is a~repeated index, we obtain
\begin{equation}
U((12)3)U(23)U(12)=U(1(23))U(12)U(23),
\end{equation}
which we add to the newly created set of identities. Graphically, the outcome can be represented as
\begin{equation}
\begin{tikzcd}[column sep={1cm},row sep={1cm}]
	&& \bullet \\
	\bullet & \bullet && \bullet \\
	& \bullet && \bullet \\
	&& \bullet
	\arrow["{(12)}"{description}, from=4-3, to=3-2]
	\arrow["{(23)}"{description}, from=4-3, to=3-4]
	\arrow["{(45)}"{description}, from=3-2, to=2-1]
	\arrow["{(23)}"{description}, color={rgb,255:red,92;green,92;blue,214}, from=3-2, to=2-2]
	\arrow["{((12)3)}"{description}, color={rgb,255:red,92;green,92;blue,214}, from=2-2, to=1-3]
	\arrow["{(12)}"{description}, color={rgb,255:red,92;green,92;blue,214}, from=3-4, to=2-4]
	\arrow["{(1(23))}"{description}, color={rgb,255:red,92;green,92;blue,214}, from=2-4, to=1-3]
\end{tikzcd}
\end{equation}
\item For $p=2$, we do the following steps:
\begin{enumerate}
\item Take sequences $U((12)3)U(23)U(12)$ and $U(1(23))U(12)U(23)$, and
cut them at length $2$, which results in $U(23)U(12)$ and $U(12)U(23)$.
\item We build our ordered list of sequences:
\begin{enumerate}
\item We start from $\boldsymbol{U}(\boldsymbol{i}\boldsymbol{j})=U(45)U(12)$.
\item The next one is $U(23)U(12)$, because it differs from $U(45)U(12)$
by the last unlinking.
\item The last one is $U(12)U(23)$; it differs from $U(23)U(12)$ by an application of $(1,1)$-connector and it contains $\boldsymbol{U}(\boldsymbol{k}\boldsymbol{l})=U(23)$.
\end{enumerate}
\item From our list of sequences $\left(U(45)U(12),\,U(23)U(12),\,U(12)U(23)\right)$
we take the pair $(U(45)U(12),\,U(23)U(12))$ and apply the $(1,1)$-connector.
Since there is no repeated index, we obtain
\begin{equation}
U(23)U(45)U(12)=U(45)U(23)U(12),
\end{equation}
which we add to the set of identities. Graphically, the outcome can be shown as (with the newly added part shown in blue)
\begin{equation}
\begin{tikzcd}[column sep={1cm},row sep={1cm}]
	\bullet && \bullet \\
	\bullet & \bullet && \bullet \\
	& \bullet && \bullet \\
	&& \bullet
	\arrow["{(12)}"{description}, from=4-3, to=3-2]
	\arrow["{(23)}"{description}, from=4-3, to=3-4]
	\arrow["{(45)}"{description}, from=3-2, to=2-1]
	\arrow["{(23)}"{description}, from=3-2, to=2-2]
	\arrow["{((12)3)}"{description}, from=2-2, to=1-3]
	\arrow["{(12)}"{description}, from=3-4, to=2-4]
	\arrow["{(1(23))}"{description}, from=2-4, to=1-3]
	\arrow["{(45)}"{description}, color={rgb,255:red,92;green,92;blue,214}, from=2-2, to=1-1]
	\arrow["{(23)}"{description}, color={rgb,255:red,92;green,92;blue,214}, from=2-1, to=1-1]
\end{tikzcd}
\end{equation}
\end{enumerate}
\item For $p=3$, we do the following steps:
\begin{enumerate}
\item Our set of identities 
\begin{equation}
\begin{aligned}
&\left\{\, U((12)3)U(23)U(12)=U(1(23))U(12)U(23),\right. \\
&\left. U(23)U(45)U(12)=U(45)U(23)U(12)\, \right\} \label{eq:identities for p=00003D3}
\end{aligned}
\end{equation}
 is given by sequences of length $3$, so we do not need to cut them.
\item We build our list of sequences in the same way as before and get 
\begin{multline*}
(U(23)U(45)U(12),\;U(45)U(23)U(12),\; \\ U((12)3)U(23)U(12),\;U(1(23))U(12)U(23))\,.
\end{multline*}
\item From our list of sequences we take the pair 
\begin{equation*}
(U(45)U(23)U(12),\;U((12)3)U(23)U(12))
\end{equation*}
and apply the $(1,1)$-connector. Since there is no repeated index,
we obtain
\begin{equation}
U((12)3)U(45)U(23)U(12)=U(45)U((12)3)U(23)U(12).
\end{equation}
Combining it with (\ref{eq:identities for p=00003D3}), we obtain
the following identities:
\begin{equation}
\begin{split}U((12)3)U(23)U(45)U(12)= & U((12)3)U(45)U(23)U(12)\\
= & U(45)U((12)3)U(23)U(12)\\
= & U(45)U(1(23))U(12)U(23)\,.
\end{split}
\label{eq:identities for p=00003D4}
\end{equation}
\item The set of identities contains 
\[
U((12)3)U(23)U(45)U(12)=U(45)U(1(23))U(12)U(23),
\]
which is $\boldsymbol{U}(\boldsymbol{k'}\boldsymbol{l'})\boldsymbol{U}(\boldsymbol{i}\boldsymbol{j})=\boldsymbol{U}(\boldsymbol{i'}\boldsymbol{j'})\boldsymbol{U}(\boldsymbol{k}\boldsymbol{l})$ for $\boldsymbol{U}(\boldsymbol{k'}\boldsymbol{l'})=U((12)3)U(23)$
and $\boldsymbol{U}(\boldsymbol{i'}\boldsymbol{j'})=U(45)U(1(23))U(12)$, so the Connector Algorithm stops. The final stage is depicted as
\begin{equation}
\begin{tikzcd}[column sep={1cm},row sep={1cm}]
& \bullet \\
	\bullet && \bullet \\
	\bullet & \bullet && \bullet \\
	& \bullet && \bullet \\
	&& \bullet
	\arrow["{(12)}"{description}, from=5-3, to=4-2]
	\arrow["{(23)}"{description}, from=5-3, to=4-4]
	\arrow["{(45)}"{description}, from=4-2, to=3-1]
	\arrow["{(23)}"{description}, from=4-2, to=3-2]
	\arrow["{((12)3)}"{description}, from=3-2, to=2-3]
	\arrow["{(12)}"{description}, from=4-4, to=3-4]
	\arrow["{(1(23))}"{description}, from=3-4, to=2-3]
	\arrow["{(45)}"{description}, from=3-2, to=2-1]
	\arrow["{(23)}"{description}, from=3-1, to=2-1]
	\arrow["{(45)}"{description}, color={rgb,255:red,92;green,92;blue,214}, from=2-3, to=1-2]
	\arrow["{((12)3)}"{description}, color={rgb,255:red,92;green,92;blue,214}, from=2-1, to=1-2]
\end{tikzcd}
\end{equation}
\end{enumerate}
\end{enumerate}

\section{Unlinking and permutohedra graphs} \label{sec:unlinking}

In this section we show how permutohedra graphs, first discovered in the context of the knots-quivers correspondence,
arise from unlinking of generic quivers. We define a~permutohedron constructed by unlinking using the tools from Sec. \ref{sec:commutation relations}, and then
propose a systematic procedure to obtain a large class of permutohedra graphs made of several permutohedra.
%and prove that all quivers forming the permutohedron can be unlinked to the same quiver. We show similar statements for several permutohedra glued together and illustrate these constructions in various examples.

\subsection{General permutohedron constructed by unlinking}\label{sec:pi_from_U}

Permutohedra graphs for symmetric quivers were introduced in the context of the knots-quivers correspondence as graphs that
encode equivalent quivers for the same knot (see Sec.~\ref{sec:KQ correspondence and permutohedra}, App.~\ref{sec:Statements from the permutohedra paper}, and \cite{JKLNS2105}). However, in this paper we undertake a~different approach based only on the properties of unlinking. This suggests the following general definition:

\begin{dfn}\label{dfn:Permutohedron}
    For $n\in \mathbb{Z}_+$, let $Q$ be a~quiver with at least $n+1$ vertices. Choose $n+1$ distinct vertices of $Q$, denote them as $\iota$, $j_1$, $j_2$, $\dots$, $j_n$, and consider quivers
\begin{equation}
\label{eq:permutohedron from unlinking}
    Q_{\sigma}=U(\iota j_{\sigma(n)})\dots U(\iota j_{\sigma(2)})U(\iota j_{\sigma(1)})Q
\end{equation}
for all permutations $\sigma\in S_n$. We say that such  $\{Q_{\sigma}\}_{\sigma\in S_n}$ form permutohedron $\Pi_n$.
\end{dfn}
We can see that in this construction permutations are very explicit -- the connection with the associated polytope $\Pi_n$ can be seen in the following statement.

\begin{prp}\label{prp:Permutohedron by unlinking}
    If quivers $\{Q_{\sigma}\}_{\sigma\in S_n}$ form permutohedron $\Pi_n$, then there exists a~bijection $\phi : \{Q_{\sigma}\}_{\sigma\in S_n}\to \Pi_n$ such that: 
\begin{itemize}
    \item Each quiver corresponds to a~different vertex of the permutohedron $\Pi_n$.
    \item Each pair of quivers that differ by a~transposition of neighbouring unlinkings corresponds to an edge of the permutohedron $\Pi_n$. 
\end{itemize}
Moreover, the pair of quivers connected by an edge can be unlinked to the same quiver by a~single unlinking.
\end{prp}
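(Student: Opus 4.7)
The plan is to establish the three assertions in turn. First I would define the candidate bijection $\phi$ by sending $Q_\sigma$ to the vertex of $\Pi_n$ labelled by $\sigma\in S_n$; since $|S_n|=n!$ equals the number of vertices of $\Pi_n$, it suffices to prove injectivity of $\sigma\mapsto Q_\sigma$. For this I would use the ordered bookkeeping built into the definition of unlinking (\ref{eq:unlinking definition}): by the convention fixed after (\ref{eq:unlinking definition}), the last $n$ entries of the generating-parameter vector of $Q_\sigma$ are
\begin{equation*}
    q^{-1}x_\iota x_{j_{\sigma(1)}},\; q^{-1}x_\iota x_{j_{\sigma(2)}},\;\ldots,\; q^{-1}x_\iota x_{j_{\sigma(n)}},
\end{equation*}
listed in exactly the order in which the unlinkings were performed. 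Distinct permutations therefore produce distinct ordered vectors $\boldsymbol{x}_\sigma$, so $Q_\sigma\neq Q_{\sigma'}$ whenever $\sigma\neq\sigma'$, and $\phi$ is a bijection.

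For the edge correspondence, I would recall that two vertices of $\Pi_n$ are adjacent precisely when the associated permutations differ by an adjacent transposition. Under $\phi$ this matches exactly the operation of swapping two neighbouring unlinkings $U(\iota j_a)$ and $U(\iota j_b)$ in the defining sequence (\ref{eq:permutohedron from unlinking}), which is what the proposition calls an edge.

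The substantive step is the last claim, that the two quivers joined by such an edge admit a common single unlinking. I would apply the hexagon relation (\ref{eq:first hexagon}) locally at the swapped positions. Let $Q_0$ denote the quiver reached after the first $k-1$ unlinkings, and let $a=\sigma(k)$, $b=\sigma(k+1)$. Setting $i=j_a$, $j=\iota$, $k=j_b$ in (\ref{eq:first hexagon}), and invoking the symmetry $U(j_a\iota)=U(\iota j_a)$ together with the identification $(j_a\iota)=(\iota j_a)$, the hexagon reads
\begin{equation*}
    U((\iota j_a)j_b)\,U(\iota j_b)\,U(\iota j_a)\,Q_0 \;=\; U(j_a(\iota j_b))\,U(\iota j_a)\,U(\iota j_b)\,Q_0.
\end{equation*}
Next I would use the square relation (\ref{eq:square}) to commute the associativity unlinkings $U((\iota j_a)j_b)$ and $U(j_a(\iota j_b))$ past all subsequent unlinkings $U(\iota j_{\sigma(\ell)})$ with $\ell>k+1$; this is legitimate because the label pairs $\{(\iota j_a),j_b\}$ and $\{\iota,j_{\sigma(\ell)}\}$ are disjoint (the values $j_{\sigma(\ell)}$ are distinct by bijectivity of $\sigma$, and the freshly created node $(\iota j_a)$ is different from every original node). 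The outcome is the identity $U((\iota j_a)j_b)\,Q_\sigma = U(j_a(\iota j_b))\,Q_{\sigma'}$, with the two associativity unlinkings identified as in Remark~\ref{rmk:square and hexagon}, which is precisely the asserted common unlinking.

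The main obstacle I expect is the careful bookkeeping in the last step: verifying that every associativity unlinking introduced by the hexagon genuinely commutes with each later $U(\iota j_{\sigma(\ell)})$, and that the identification of the two associativity unlinkings on the right-hand side is consistent with the node-labelling convention. The first two parts — injectivity and the edge-to-transposition matching — are essentially definitional once the ordered-vector viewpoint of Section~\ref{sec:Multi-cover skein relations} is adopted.
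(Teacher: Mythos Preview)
Your proposal is correct and follows essentially the same approach as the paper: define $\phi(Q_\sigma)=\sigma$, appeal to the ordered generating-parameter vector for injectivity, and for the edge claim apply the hexagon relation (\ref{eq:first hexagon}) at positions $k,k+1$ and then use the square relation (\ref{eq:square}) to commute the associativity unlinkings $U((\iota j_a)j_b)$ and $U(j_a(\iota j_b))$ past the remaining $U(\iota j_{\sigma(\ell)})$. The only cosmetic difference is that the paper commutes the associativity unlinkings \emph{into} the middle first and applies the hexagon there, whereas you apply the hexagon locally first and then commute outwards; the disjointness check you flag as the ``main obstacle'' is exactly the observation the paper makes, namely that the nodes $j_{\sigma(k)},j_{\sigma(k+1)},(\iota j_{\sigma(k)}),(\iota j_{\sigma(k+1)})$ are never touched by the later operators $U(\iota j_{\sigma(\ell)})$ with $\ell>k+1$.
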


\begin{proof}
The map $\phi$ given by
\[
   \{Q_{\sigma}\}_{\sigma\in S_n} \ni U(\iota j_{\sigma(n)})\dots U(\iota j_{\sigma(2)})U(\iota j_{\sigma(1)})Q = Q_{\sigma} \overset{\phi}{\longmapsto} \sigma \in \Pi_n
\]
is a~ bijection in which each quiver corresponds to a~different vertex of the permutohedron $\Pi_n$ and each pair of quivers that differ by a~transposition of neighbouring unlinkings corresponds to an edge of the permutohedron $\Pi_n$.

Therefore, we need to show that for any $k\in \{1,2,\dots,n-1\}$  a~pair of quivers\footnote{We use shorter notation to denote quivers $U( \sigma(n)\iota)\dots U(\iota j_{\sigma(k+1)})U(\iota j_{\sigma(k)})\dots U( \sigma(1) \iota) Q$ and $U( \sigma(n)\iota) \dots U(\iota j_{\sigma(k)})U(\iota j_{\sigma(k+1)})\dots U( \sigma(1) \iota) Q$ respectively.}
\begin{equation*}
\dots U(\iota j_{\sigma(k+1)})U(\iota j_{\sigma(k)})\dots Q  \qquad\text{and}\qquad  \dots U(\iota j_{\sigma(k)})U(\iota j_{\sigma(k+1)})\dots Q
\end{equation*}
can be unlinked to the same quiver. 
 
 Consider unlinkings $U((j_{\sigma(k)}\iota)j_{\sigma(k+1)})$ and $U(j_{\sigma(k)}(\iota j_{\sigma(k+1)}))$. Since nodes $j_{\sigma(k)}$, $j_{\sigma(k+1)}$, $( j_{\sigma(k)}\iota)$, $(\iota j_{\sigma(k+1)})$ are not unlinked by operators located on the left of transposed pair\footnote{These operators are  $U(\iota j_{\sigma(n)})$, $U(\iota j_{\sigma(n-1)})$,$\dots$, $U(\iota j_{\sigma(k+3)})$, $U(\iota j_{\sigma(k+2)})$ .}, they all commute with unlinkings $U(( j_{\sigma(k)}\iota)j_{\sigma(k+1)})$ and $U(j_{\sigma(k)}(\iota j_{\sigma(k+1)}))$. In consequence the hexagon relation leads to
 \begin{equation}
\begin{split}
    U(( j_{\sigma(k)}\iota)j_{\sigma(k+1)})  \dots & U(\iota j_{\sigma(k+1)})U(\iota j_{\sigma(k)})\dots Q \\
    & = \dots U(( j_{\sigma(k)}\iota)j_{\sigma(k+1)}) U(\iota j_{\sigma(k+1)})U(\iota j_{\sigma(k)})\dots Q \\
    & = \dots U(j_{\sigma(k)}(\iota j_{\sigma(k+1)})) U(\iota j_{\sigma(k)}) U(\iota j_{\sigma(k+1)})\dots Q \\
    & =  U(j_{\sigma(k)}(\iota j_{\sigma(k+1)})) \dots U(\iota j_{\sigma(k)}) U(\iota j_{\sigma(k+1)})\dots Q \,.
\end{split}
\end{equation}
This means that each edge of the permutohedron corresponds to a~pair of quivers that can be unlinked to the same quiver, which we wanted to show.
\end{proof}

\subsection{Example -- permutohedron \texorpdfstring{$\Pi_2$}{Pi2}}\label{sec:Example - Pi2}

%\subsection{Example -- permutohedron \texorpdfstring{$\Pi_2$}{Pi2} and  \texorpdfstring{$\Pi_3$}{Pi3}}

%\subsubsection*{Permutohedron \texorpdfstring{$\Pi_2$}{Pi2}}

Let us illustrate Definition~\ref{dfn:Permutohedron} and Proposition~\ref{prp:Permutohedron by unlinking} with the example of permutohedron~$\Pi_2$. We have $n=2$ and consider arbitrary quiver $Q$ with at least 3 nodes. We pick three nodes of~$Q$, denote them as $\iota, j_1, j_2$, and consider quivers 
\begin{equation}
     Q_{(1,2)}=U(\iota j_{2})U(\iota j_{1})Q\ , \qquad \qquad Q_{(2,1)}=U(\iota j_{1})U(\iota j_{2})Q\ ,
\end{equation}
where $(1,2)$ and $(2,1)$ are permutations of two elements. Comparing with Definition~\ref{dfn:Permutohedron}, we can see that quivers $Q_{(1,2)}$ and $Q_{(2,1)}$ form permutohedron $\Pi_2$.

Moving to Proposition~\ref{prp:Permutohedron by unlinking}, we can see that the map $\phi$ given by
\begin{equation}
\begin{split}
    U(\iota j_{2})U(\iota j_{1})Q=Q_{(1,2)} & \overset{\phi}{\longmapsto} (1,2) \in \Pi_2\ , \\
    U(\iota j_{1})U(\iota j_{2})Q=Q_{(2,1)} & \overset{\phi}{\longmapsto} (2,1) \in \Pi_2
\end{split}
\end{equation}
is a~ bijection in which each quiver corresponds to a~vertex of the permutohedron $\Pi_2$. Moreover, $Q_{(1,2)}$ and $Q_{(2,1)}$ are connected by transposition of $U(\iota j_{1})$ and $U(\iota j_{2})$, which corresponds to an edge of the permutohedron connecting these vertices:

\[\begin{tikzcd}
	{Q_{(1,2)}} && {Q_{(2,1)}} \\
	{U(\iota j_1)Q} && {U(\iota j_2)Q} \\
	& Q
	\arrow["{\iota j_1}", from=3-2, to=2-1]
	\arrow["{\iota j_2}", from=2-1, to=1-1]
	\arrow["{\iota j_2}"', from=3-2, to=2-3]
	\arrow["{\iota j_1}"', from=2-3, to=1-3]
\end{tikzcd}\]

Finally, from the hexagon relation we know that 
\begin{equation}
    U(j_{2}(\iota j_{1}))U(\iota j_{2})U(\iota j_{1}) = U((j_{2} \iota ) j_{1})U(\iota j_{1})U(\iota j_{2})\ ,
\end{equation}
so $Q_{(1,2)}$ and $Q_{(2,1)}$ can be unlinked to the same quiver by one unlinking:
\begin{equation}
\label{eq:Pi_2}
    U(j_{2}(\iota j_{1}))Q_{(1,2)} = U((j_{2} \iota ) j_{1})Q_{(2,1)} \,.
\end{equation}

\subsection{Example -- permutohedron \texorpdfstring{$\Pi_3$}{Pi3}}\label{sec:Creating permutohedron Pi3}

Let us move to permutohedron $\Pi_3$ and show how it is created by unlinkings. Following Definition~\ref{dfn:Permutohedron}, we consider $U(\iota j_{\sigma(3)})U(\iota j_{\sigma(2)}) U(\iota j_{\sigma(1)})Q$ for all permutations $\sigma\in S_3$ and generic quiver $Q$ (we only assume that it contains at least four nodes). If we we relabel $(j_1,j_2,j_3)$ as $(j,k,l)$, the application of three unlinkings in different order can be represented by the top-left diagram in Fig. \ref{fig:pi3_bottom_stages}.

Now we can verify that it indeed satisfies the two properties from Proposition~\ref{prp:Permutohedron by unlinking}. The end nodes of this initial ``skeleton'' correspond to six elements of $\Pi_3$. Each pair of neighbouring operators can be appended once to yield an identity (applying this to quivers, it means that those which form an edge of permutohedron, can be unlinked once to the same quiver).
%Following the  Permutohedron Theorem, we first have to apply connectors to each pair $(\boldsymbol{U}^i_{(j,k,l)},\boldsymbol{U}^i_{\tau(j,k,l)})$, where $\tau$ is a~transposition.
%We will use the graphical notation for clarity. Thus, at the initial stage we can see only the ``initial skeleton'' formed by permutations:
In order to show it, we can start from computing the connector between $U(\iota l)U(\iota k)U(\iota j)$ and $U(\iota k)U(\iota l)U(\iota j)$, which amounts to applying the hexagon relation
\begin{equation}
    U(l(\iota k))U(\iota l)U(\iota k)U(\iota j) = U(k(\iota l))U(\iota k)U(\iota l)U(\iota j).
\end{equation}
Applying this to every permutation of $(j,k,l)$, we get the top-right diagram in Fig. \ref{fig:pi3_bottom_stages}.
%\newpage
\noindent We still have to connect other transpositions of neighbouring  operators, such as 
\[U(\iota l)U(\iota k)U(\iota j) \leftrightarrow U(\iota l)U(\iota j)U(\iota k)\,. \]
From the proof of Proposition~\ref{prp:Permutohedron by unlinking} we know that it can be done by application of
\begin{equation}
    U(k(\iota j))U(\iota l)U(\iota k)U(\iota j) = U(j(\iota k))U(\iota l)U(\iota j)U(\iota k)\, .
\end{equation}
This gives the filled-in diagram shown in Fig. \ref{fig:pi3_bottom_stages}, bottom.

\begin{figure}[h!]
    \centering
    \includegraphics[scale=0.75]{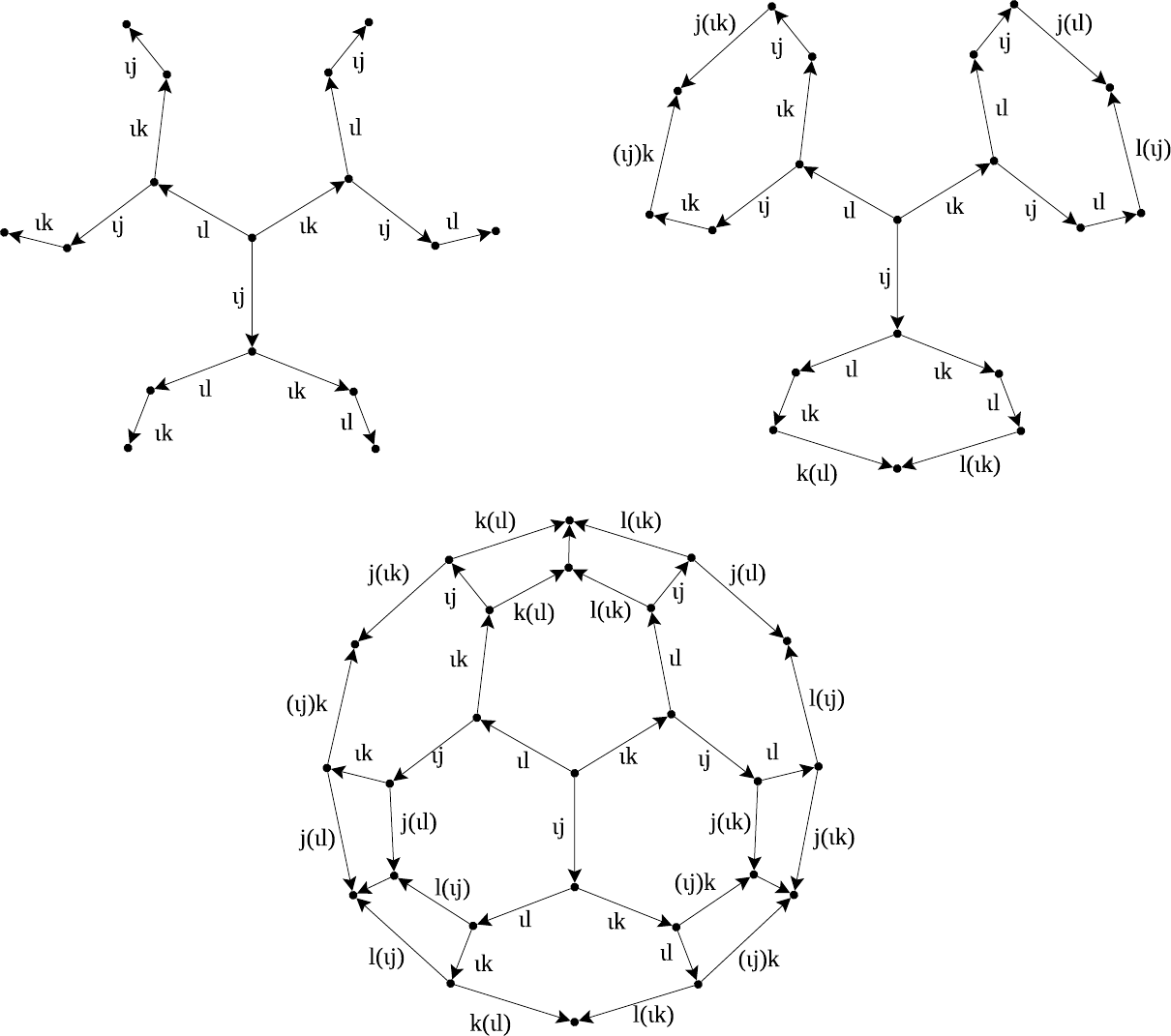}
    \caption{Three stages of the permutohedron $\Pi_3$ from unlinking.}
    \label{fig:pi3_bottom_stages}
\end{figure}

At this point our job is done -- we thus have constructed permutohedron $\Pi_3$, determined how
the nodes are organized, and what unlinkings have to be performed for them to be pairwise connected.

\subsection{Permutohedra graph constructed by unlinking}

In \cite{JKLNS2105} there is an analysis of graphs made of several permutohedra glued together and corresponding to the same knot. We will follow this path, but using only quivers and unlinking.

\begin{dfn}\label{dfn:permutohedra graph}
We say that quivers $ \{Q_{\sigma}\}_{\sigma = 1,\dots, m}$ form a~permutohedra graph if
\begin{itemize}
    \item $\{Q_{\sigma}\}_{\sigma = 1,\dots, m}$ is a~union of sets of quivers $\{Q_{\rho}\}_{\rho\in S_{n_v}}$ that form permutohedra $\Pi_{n_v}$:\footnote{Note that we label permutations by numbers $ 1,\dots, m$.}
        \[
            \{Q_{\sigma}\}_{\sigma = 1,\dots, m} = \bigcup_v \{Q_{\rho}\}_{\rho\in S_{n_v}}
        \]
    \item The graph associated to $ \{Q_{\sigma}\}_{\sigma = 1,\dots, m}$ by bijection $\phi$ is connected.
\end{itemize}
\end{dfn}

Let us see how we can build a~large class of permutohedra graphs by applying unlinking to a~generic quiver.

\begin{prp}\label{prp:Permutohedra graph}
     Consider a~tree (i.e. a~connected acyclic undirected graph) with $p$~vertices and assign a~permutation group $S_{n_v}$, $n_v\in \mathbb{Z}_+$, to each vertex $v$ and a~pair of permutations $\pi_v\in S_{n_v},\pi_w\in S_{n_w}$ to an edge connecting vertices $v$ and $w$. If we fix a~vertex $v$ and consider paths to all other vertices, then this assignment induces a~map 
     \[
     \psi: v \longmapsto (\pi_1, \pi_2, \dots, \pi_{v-1}, \pi_{v+1}, \dots, \pi_p)\ ,
     \]
     where each $\pi_u$ is a~permutation assigned to the other end of the path that connects $v$ and~$u$.\footnote{For example, if we have a~tree $v$---$w$---$u$ with permutations $\pi_v, \pi_w$ assigned to $v$---$w$ and $\pi'_w, \pi_u$ assigned to $w$---$u$, then $\psi(v)=(\pi_w,\pi_u)$.}

     Let $Q$ be a~quiver with at least $p+\sum_{v=1}^p n_v$ nodes. Choose $p+\sum_{v=1}^p n_v$ nodes of $Q$, denote them as $\iota_v$, $j_{v,1}$, $j_{v,2}$, $\dots$, $j_{v,n_v}$, and consider a~sequence of unlinkings
    \begin{equation}
    \boldsymbol{U}(\boldsymbol{\iota_v}\boldsymbol{j_{\rho}})=U(\iota_v j_{v,\rho(n_v)})\dots U(\iota_v j_{v,\rho(2)})U(\iota_v j_{v,\rho(1)})    
    \end{equation}
    for each $v=1,\dots,p$ and permutation $\rho \in S_{n_v}$. 
    If we define
    \begin{equation}
    Q_{v,\rho} = \boldsymbol{U}(\boldsymbol{\iota_v}\boldsymbol{j_{\rho}})      \boldsymbol{U}(\boldsymbol{\iota_p}\boldsymbol{j_{\pi_p}}) \dots \boldsymbol{U}(\boldsymbol{\iota_{v+1}}\boldsymbol{j_{\pi_{v+1}}}) \boldsymbol{U}(\boldsymbol{\iota_{v-1}}\boldsymbol{j_{\pi_{v-1}}}) \dots \boldsymbol{U}(\boldsymbol{\iota_1}\boldsymbol{j_{\pi_1}}) Q \ ,
    \end{equation}
    where $(\pi_1, \pi_2, \dots, \pi_{v-1}, \pi_{v+1}, \dots, \pi_p)=\psi(v)$, then quivers
    \[
    \bigcup_{v=1}^p \{Q_{v,\rho}\}_{\rho\in S_{n_v}}
    \]
    form a~permutohedra graph.
\end{prp}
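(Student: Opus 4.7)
My plan is to verify the two conditions of Definition~\ref{dfn:permutohedra graph} in turn. For the first condition, I would fix a vertex $v$ of the tree and write $Q_{v,\rho} = \boldsymbol{U}(\boldsymbol{\iota_v}\boldsymbol{j_\rho})\,\tilde Q_v$, where
\[
\tilde Q_v = \boldsymbol{U}(\boldsymbol{\iota_p}\boldsymbol{j_{\pi_p}}) \cdots \boldsymbol{U}(\boldsymbol{\iota_{v+1}}\boldsymbol{j_{\pi_{v+1}}}) \boldsymbol{U}(\boldsymbol{\iota_{v-1}}\boldsymbol{j_{\pi_{v-1}}}) \cdots \boldsymbol{U}(\boldsymbol{\iota_1}\boldsymbol{j_{\pi_1}})\, Q
\]
is the composition of all blocks except the $v$-th. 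Since unlinking only appends nodes and modifies the adjacency matrix, without ever deleting existing nodes, the $n_v + 1$ distinct nodes $\iota_v, j_{v,1}, \ldots, j_{v,n_v}$ survive in $\tilde Q_v$. Hence Definition~\ref{dfn:Permutohedron} and Proposition~\ref{prp:Permutohedron by unlinking}, applied with base quiver $\tilde Q_v$ and chosen nodes $\iota_v, j_{v,1}, \ldots, j_{v,n_v}$, directly identify $\{Q_{v,\rho}\}_{\rho \in S_{n_v}}$ with a permutohedron $\Pi_{n_v}$.

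For the second condition, connectedness, the crucial observation is that every pair of adjacent vertices $v, w$ of the tree yields two permutohedra that share a common quiver, namely $Q_{v,\pi_v} = Q_{w,\pi_w}$, where $(\pi_v,\pi_w)$ is the pair assigned to the edge $v$--$w$. To prove this I would appeal to the square relation~\eqref{eq:square}: each block $\boldsymbol{U}(\boldsymbol{\iota_u}\boldsymbol{j_{\pi_u}})$ touches only the original nodes $\iota_u, j_{u,*}$ and creates internal nodes of the form $(\iota_u j_{u,*})$, whose labels are disjoint across different indices $u$. Therefore any two blocks $\boldsymbol{U}(\boldsymbol{\iota_u}\boldsymbol{j_{\pi_u}})$ and $\boldsymbol{U}(\boldsymbol{\iota_{u'}}\boldsymbol{j_{\pi_{u'}}})$ with $u \neq u'$ commute as units by iterated use of~\eqref{eq:square}.

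Next, I would verify that for each $u \neq v, w$ the permutation $\pi_u$ appearing in $\psi(v)$ coincides with the one in $\psi(w)$. Removing the edge $v$--$w$ splits the tree into two components $T_v \ni v$ and $T_w \ni w$; for any such $u$ lying in either component, the last edge of the path from $v$ to $u$ equals the last edge of the path from $w$ to $u$, so the permutation at $u$'s end is unambiguous. Thus $\psi(v)$ and $\psi(w)$ agree on all slots except that $\psi(v)$ contains $\pi_w$ in the $w$-slot while $\psi(w)$ contains $\pi_v$ in the $v$-slot. Combined with the block commutativity, this yields
\[
Q_{v,\pi_v} = \boldsymbol{U}(\boldsymbol{\iota_v}\boldsymbol{j_{\pi_v}})\,\boldsymbol{U}(\boldsymbol{\iota_w}\boldsymbol{j_{\pi_w}}) \prod_{u \neq v, w} \boldsymbol{U}(\boldsymbol{\iota_u}\boldsymbol{j_{\pi_u}})\, Q = Q_{w,\pi_w}.
\]
Because each permutohedron is internally connected and the $p$ permutohedra are glued at shared quivers along each edge of the tree, connectedness of the tree lifts to connectedness of the union, confirming the second condition.

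The main obstacle I anticipate is the combinatorial bookkeeping: ensuring that the permutation $\pi_u$ attached to each vertex $u$ is genuinely independent of the base vertex from which $\psi$ is read off, and that the node sets touched by different blocks are indeed disjoint even after internal node creation. Once these two facts are carefully pinned down, the rest of the argument is a routine application of Definition~\ref{dfn:Permutohedron}, Proposition~\ref{prp:Permutohedron by unlinking}, and the square move from Proposition~\ref{prp:all basic relations}; no hexagon moves or deeper relations from Section~\ref{sec:commutation relations} are required.
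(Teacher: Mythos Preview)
Your proposal is correct and follows essentially the same approach as the paper: you verify each $\{Q_{v,\rho}\}_{\rho\in S_{n_v}}$ is a permutohedron by factoring off the remaining blocks as a base quiver $\tilde Q_v$, and you establish the gluing $Q_{v,\pi_v}=Q_{w,\pi_w}$ for adjacent tree vertices via block commutativity (square moves from disjoint index sets) together with the observation that $\psi(v)$ and $\psi(w)$ agree on all slots except the $v$- and $w$-slots. Your tree-path argument for this agreement is slightly more explicit than the paper's, but the logic and the ingredients are the same.
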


\begin{proof}
    If we write 
    \begin{equation}\label{eq:permutohedra in the graph}
    \begin{split}
        Q_{v,\rho} & =  U(\iota_v j_{v,\rho(n_v)})\dots U(\iota_v j_{v,\rho(2)})U(\iota_v j_{v,\rho(1)})Q' \ ,\\
        Q' & = \boldsymbol{U}(\boldsymbol{\iota_p}\boldsymbol{j_{\pi_p}}) \dots \boldsymbol{U}(\boldsymbol{\iota_{v+1}}\boldsymbol{j_{\pi_{v+1}}}) \boldsymbol{U}(\boldsymbol{\iota_{v-1}}\boldsymbol{j_{\pi_{v-1}}}) \dots \boldsymbol{U}(\boldsymbol{\iota_1}\boldsymbol{j_{\pi_1}}) Q \ ,
    \end{split}
    \end{equation} 
    then we can see that for each $v=1,\dots,p$ quivers $\{Q_{v,\rho}\}_{\rho\in S_{n_v}}$ form permutohedron~$\Pi_{n_v}$ according to Definition~\ref{dfn:Permutohedron}. This means that vertices of the tree correspond to permutohedra. In order to see that edges correspond to quivers that are gluing points, consider arbitrary edge connecting vertices $v$ and $w$ with assigned permutations $\pi_v, \pi_w$, and the following quivers:
    \begin{equation}\label{eq:gluing point quivers}
    \begin{split}
        Q_{v,\pi_v} & =  \boldsymbol{U}(\boldsymbol{\iota_v}\boldsymbol{j_{\pi_v}}) \boldsymbol{U}(\boldsymbol{\iota_p}\boldsymbol{j_{\pi_p}}) \dots \boldsymbol{U}(\boldsymbol{\iota_{v+1}}\boldsymbol{j_{\pi_{v+1}}}) \boldsymbol{U}(\boldsymbol{\iota_{v-1}}\boldsymbol{j_{\pi_{v-1}}}) \dots \boldsymbol{U}(\boldsymbol{\iota_1}\boldsymbol{j_{\pi_1}}) Q \ ,\\
        Q_{w,\pi_w} & =  \boldsymbol{U}(\boldsymbol{\iota_w}\boldsymbol{j_{\pi_w}}) \boldsymbol{U}(\boldsymbol{\iota_p}\boldsymbol{j_{\pi_p}}) \dots \boldsymbol{U}(\boldsymbol{\iota_{w+1}}\boldsymbol{j_{\pi_{w+1}}}) \boldsymbol{U}(\boldsymbol{\iota_{w-1}}\boldsymbol{j_{\pi_{w-1}}}) \dots \boldsymbol{U}(\boldsymbol{\iota_1}\boldsymbol{j_{\pi_1}}) Q\,.
    \end{split}
    \end{equation}     
    Since $v$ and $w$ are connected by a~single edge, $\psi(v)$ amended by $\pi_v$ and $\psi(w)$ amended by $\pi_w$ are both equal to\footnote{We implicitly assumed $v<w$, but one can exchange $v$ and $w$ and repeat the whole argument for $w>v$.}
    \[
        (\pi_1, \pi_2, \dots, \pi_{v-1}, \pi_v, \pi_{v+1}, \dots, \pi_{w-1}, \pi_w, \pi_{w+1},\dots, \pi_p) \,.
    \]
    On the other hand, we can see that $\boldsymbol{U}(\boldsymbol{\iota_u}\boldsymbol{j_{\pi_u}})$ commute with each other for all $u=1,2,\dots, p$, which comes from the fact that in different sequences all indices are different (note that unlinkings inside each sequence are kept in the same order determined by permutations $\rho$ and $\sigma$).
    Combining these two facts we learn that
    \begin{equation}
         Q_{v,\pi_v} = \boldsymbol{U}(\boldsymbol{\iota_p}\boldsymbol{j_{\pi_p}}) \dots \boldsymbol{U}(\boldsymbol{\iota_{w}}\boldsymbol{j_{\pi_{w}}}) \dots \boldsymbol{U}(\boldsymbol{\iota_{v}}\boldsymbol{j_{\pi_{v}}}) \dots  \boldsymbol{U}(\boldsymbol{\iota_1}\boldsymbol{j_{\pi_1}}) Q = Q_{w,\pi_w} \,.
    \end{equation}
    From Eqs. (\ref{eq:permutohedra in the graph}-\ref{eq:gluing point quivers}) we know that $\phi(Q_{v,\pi_v})\in \Pi_{n_v}$ and $\phi(Q_{w,\pi_w})\in \Pi_{n_w}$, so the quiver $Q_{v,\pi_v} = Q_{w,\pi_w}$ can be viewed (via bijection $\phi$) as a~gluing point of permutohedra $\Pi_{n_v}$ and $\Pi_{n_w}$, represented by the edge that connects vertices $v$ and $w$ in our tree.

    Since all vertices of the tree are connected by exactly one path, the definition of quivers $Q_{v,\rho}$ is consistent and the graph associated to $\bigcup_{v=1}^p \{Q_{v,\rho}\}_{\rho\in S_{n_v}}$ by bijection $\phi$ is connected. This means that quivers $\bigcup_{v=1}^p \{Q_{v,\rho}\}_{\rho\in S_{n_v}}$ form a~permutohedra graph according to Definition~\ref{dfn:permutohedra graph}.
\end{proof}

\subsection{Example -- permutohedra graph containing two \texorpdfstring{$\Pi_2$}{Pi2}}\label{sec:Two Pi_2}

Let us see how the construction given in Proposition~\ref{prp:Permutohedra graph} looks in the case of the permutohedra graph containing two $\Pi_2$. 

We start from a~tree 
\[1\text{---}2\] 
composed of $p=2$ vertices connected by an edge. We choose $n_1=n_2=2$, which means that we assign permutation group $S_2$ to each vertex. We set $\pi_1 = \pi_2 = (1,2) \in S_2$, which means that we assign a~trivial permutation of two elements to each end of the edge. The path connecting vertices $1$ and $2$ contains just a~single edge, so
\begin{equation}
    \psi(1) = (\pi_2) = ((1,2)), \qquad \psi(2) = (\pi_1) = ((1,2)) \,.
\end{equation}
We take any quiver $Q$ with at least $p+n_1+n_2=6$ nodes, choose six of them, and denote as $\iota_1,j_{1,1},j_{1,2}, \iota_2,j_{2,1},j_{2,2}$. Then we use the following sequences of unlinking:
\begin{align}\label{eq: two Pi_2}
    \boldsymbol{U}(\boldsymbol{\iota_1}\boldsymbol{j_{(1,2)}}) &= U(\iota_1 j_{1,2})U(\iota_1 j_{1,1}) \ , &
    \boldsymbol{U}(\boldsymbol{\iota_2}\boldsymbol{j_{(1,2)}}) &= U(\iota_2 j_{2,2})U(\iota_2 j_{2,1}) \ , \\
    \boldsymbol{U}(\boldsymbol{\iota_1}\boldsymbol{j_{(2,1)}}) &= U(\iota_1 j_{1,1})U(\iota_1 j_{1,2}) \ , &
    \boldsymbol{U}(\boldsymbol{\iota_2}\boldsymbol{j_{(2,1)}}) &= U(\iota_2 j_{2,1})U(\iota_2 j_{2,2}) \ , \nonumber
\end{align}
to define quivers %which determine 0-skeleton of a~permutohedra graph \pk{I want to go step by step and show that they form a~permutohedra graph in the next paragraph, following the reasoning from the proof.}
\begin{align}\label{eq: two Pi_2 cont'd}
    Q_{1,(1,2)} &= \boldsymbol{U}(\boldsymbol{\iota_1}\boldsymbol{j_{(1,2)}})\boldsymbol{U}(\boldsymbol{\iota_2}\boldsymbol{j_{(1,2)}})Q \ , &
    Q_{2,(1,2)} &= \boldsymbol{U}(\boldsymbol{\iota_2}\boldsymbol{j_{(1,2)}})\boldsymbol{U}(\boldsymbol{\iota_1}\boldsymbol{j_{(1,2)}})Q \ , \\
    Q_{1,(2,1)} &= \boldsymbol{U}(\boldsymbol{\iota_1}\boldsymbol{j_{(2,1)}})\boldsymbol{U}(\boldsymbol{\iota_2}\boldsymbol{j_{(1,2)}})Q \ , &
    Q_{2,(2,1)} &= \boldsymbol{U}(\boldsymbol{\iota_2}\boldsymbol{j_{(2,1)}})\boldsymbol{U}(\boldsymbol{\iota_1}\boldsymbol{j_{(1,2)}})Q \,. \nonumber
\end{align}
Note that the last (i.e. the leftmost) sequence of unlinkings acting on $Q$ follows indices of the quiver. The remaining sequence of unlinkings acting on $Q$ (the one next to $Q$) is determined by $\psi(1)=(\pi_2)=((1,2))$ in the left column and $\psi(2)=(\pi_1)=((1,2))$ in the right column. 

Now we will follow the reasoning from the proof of Proposition~\ref{prp:Permutohedra graph} to see that quivers $\{Q_{1,(1,2)}, Q_{1,(2,1)} \} \cup \{Q_{2,(1,2)}, Q_{2,(2,1)} \}$ form a~permutohedra graph composed of two $\Pi_2$ glued together. 

If we recall Sec.~\ref{sec:Creating permutohedron Pi3} and compare it with
\begin{align}\label{eq:Two Pi_2 from definition}
    Q_{1,(1,2)} &= U(\iota_1 j_{1,2})U(\iota_1 j_{1,1})Q' \ , &
    Q_{2,(1,2)} &= U(\iota_2 j_{2,2})U(\iota_2 j_{2,1})Q'' \ , \\
    Q_{1,(2,1)} &= U(\iota_1 j_{1,1})U(\iota_1 j_{1,2})Q' \ , &
    Q_{2,(2,1)} &= U(\iota_2 j_{2,1})U(\iota_2 j_{2,2})Q'' \ , \nonumber
\end{align}
where $Q'=\boldsymbol{U}(\boldsymbol{\iota_2}\boldsymbol{j_{(1,2)}})Q$ and $Q''=\boldsymbol{U}(\boldsymbol{\iota_1}\boldsymbol{j_{(1,2)}})Q$ \footnote{Quivers $Q'$ and $Q''$ are determined by $\psi(1)$ and $\psi(2)$ respectively -- see the previous paragraph and Eq. \eqref{eq: two Pi_2 cont'd}.}, then we can see that $\{Q_{1,(1,2)}, Q_{1,(2,1)} \}$ form permutohedron $\Pi_2$ which corresponds to vertex $1$ in the tree. Similarly, the bijection $\phi$ maps $\{Q_{2,(1,2)}, Q_{2,(2,1)} \}$ to $\Pi_2$ which corresponds to vertex $2$ in the tree.

In the next step we focus on the quivers $Q_{1,(1,2)}$ and $Q_{2,(1,2)}$. Since $\psi(1)$ amended by $\pi_1$ and $\psi(2)$ amended by $\pi_2$ are both equal to $((1,2),(1,2))$ and $\boldsymbol{U}(\boldsymbol{\iota_1}\boldsymbol{j_{(1,2)}})$ commutes with $\boldsymbol{U}(\boldsymbol{\iota_2}\boldsymbol{j_{(1,2)}})$, we have
\begin{equation}
    Q_{1,(1,2)} = \boldsymbol{U}(\boldsymbol{\iota_1}\boldsymbol{j_{(1,2)}})\boldsymbol{U}(\boldsymbol{\iota_2}\boldsymbol{j_{(1,2)}})Q = \boldsymbol{U}(\boldsymbol{\iota_2}\boldsymbol{j_{(1,2)}})\boldsymbol{U}(\boldsymbol{\iota_1}\boldsymbol{j_{(1,2)}})Q = Q_{2,(1,2)} \,.
\end{equation}
We can see that $Q_{1,(1,2)}$ and $ Q_{2,(1,2)}$ are really the same quiver which can be viewed (by bijection $\phi$) as a~gluing point of two $\Pi_2$, represented by the edge that connects vertices $1$ and $2$ in our tree. The whole construction is summarised in Fig.~\ref{fig:two Pi2 s}.

\begin{figure}
\[\begin{tikzcd}[column sep={1cm},row sep={1cm}]
	&& \bullet && \bullet \\
	& \textcolor{black}{Q_{1,(2,1)}} && \textcolor{black}{Q_{1,(1,2)}=Q_{2,(1,2)}} && \textcolor{black}{Q_{2,(2,1)}} \\
	\bullet && \bullet && \bullet && \bullet \\
	& \bullet && \bullet && \bullet \\
	&& \bullet && \bullet \\
	&&& Q
	\arrow["j_{22}\iota_2"{description}, from=4-6, to=3-7]
	\arrow["j_{21}\iota_2"{description}, from=4-6, to=3-5]
	\arrow["j_{12}\iota_1"{description}, from=4-2, to=3-1]
	\arrow["j_{11}\iota_1"{description}, from=4-2, to=3-3]
	\arrow["j_{12}\iota_1"{description}, from=3-3, to=2-4]
	\arrow["j_{22}\iota_2"{description}, from=3-5, to=2-4]
	\arrow["j_{22}\iota_2"{description}, from=4-4, to=3-3]
	\arrow["j_{12}\iota_1"{description}, from=4-4, to=3-5]
	\arrow["j_{11}\iota_1"{description}, from=5-3, to=4-4]
	\arrow["j_{22}\iota_2"{description}, from=5-3, to=4-2]
	\arrow["j_{21}\iota_2"{description}, from=5-5, to=4-4]
	\arrow["j_{12}\iota_1"{description}, from=5-5, to=4-6]
	\arrow["j_{11}\iota_1"{description}, from=6-4, to=5-5]
	\arrow["j_{21}\iota_2"{description}, from=6-4, to=5-3]
	\arrow[from=2-4, to=1-3]
	\arrow[from=2-4, to=1-5]
	\arrow[from=2-2, to=1-3]
	\arrow["j_{11}\iota_1"{description}, from=3-1, to=2-2]
	\arrow["j_{21}\iota_2"{description}, from=3-7, to=2-6]
	\arrow[from=2-6, to=1-5]
\end{tikzcd}\]

\caption{Construction of a~permutohedra graph which consists of two copies of $\Pi_2$ glued at a~point. The vertices of such graph are formed by quivers $Q_{1,(2,1)}$, $Q_{1,(1,2)}=Q_{2,(2,2)}$ and $Q_{2,(2,1)}$.}

\label{fig:two Pi2 s}

\end{figure}

%%%%%%%%%%%%%%%%%%%%%%%%%%%%%%%%%%%%%%%%%%%%%%%%%%%%

\section{Universal quivers for permutohedra graphs constructed by unlinking}\label{sec:Universal quivers}

In the remaining part of the paper we focus on universal quivers, which we define as follows:

\begin{dfn}
     Quiver $\hat Q$ is called a~universal quiver for quivers $Q_1,Q_2,\dots,Q_m$ if there exist sequences of unlinking $\boldsymbol{U}(\boldsymbol{i_1}\boldsymbol{j_{1}}),\boldsymbol{U}(\boldsymbol{i_2}\boldsymbol{j_{2}}),\dots,\boldsymbol{U}(\boldsymbol{i_m}\boldsymbol{j_{m}})$ such that
\begin{equation}
\hat{Q}=\boldsymbol{U}(\boldsymbol{i_{k}}\boldsymbol{j_{k}})Q_{k}\qquad\forall k=1,\dots,m.
\end{equation}
\end{dfn}

In this section we show the existence of universal quivers for permutohedra constructed by unlinking, i.e. following Definition \ref{dfn:Permutohedron}, and illustrate it with examples of permutohedra $\Pi_2$ and $\Pi_3$. Similarly, we prove the existence of universal quivers for permutohedra graphs constructed by unlinking, i.e. following Definition \ref{dfn:permutohedra graph} and Proposition \ref{prp:Permutohedra graph}, and apply it to the case of permutohedra graph containing two $\Pi_2$.

\subsection{Proof of the Permutohedron Theorem}

In the first step of our analysis, we combine the Connector Theorem with the fact that quivers forming permutohedron arise from unlinking the same quiver in different order to prove Theorem~\ref{thm:Permutohedron Theorem}.

\begin{proof}[Proof of the Permutohedron Theorem]
From Definition~\ref{dfn:Permutohedron} we know that for every $\{Q_{\sigma}\}_{\sigma\in S_n}$ there exists a~sequence of unlinkings $U(\iota j_{\sigma(n)})\dots U(\iota j_{\sigma(2)})U(\iota j_{\sigma(1)})$ and a~quiver $Q$ such that:
\begin{equation}
    Q_{\sigma}=U(\iota j_{\sigma(n)})\dots U(\iota j_{\sigma(2)})U(\iota j_{\sigma(1)})Q \,.
\end{equation}

From the Connector Theorem we know that for each pair of permutations $\sigma_{1},\sigma_{2}\in S_{n}$ there exist finite sequences of unlinking $\boldsymbol{U}(\boldsymbol{k_{1}}\boldsymbol{l_{1}}),\boldsymbol{U}(\boldsymbol{k_{2}}\boldsymbol{l_{2}})$ such that 
\begin{equation}
\begin{split}\boldsymbol{U}(\boldsymbol{k_{1}}\boldsymbol{l_{1}})U(\iota j_{\sigma_1(n)})\dots U(\iota j_{\sigma_1(1)})Q &= \boldsymbol{U}(\boldsymbol{k_{2}}\boldsymbol{l_{2}})U(\iota j_{\sigma_2(n)})\dots U(\iota j_{\sigma_2(1)})Q\,,\\
\boldsymbol{U}(\boldsymbol{k_{1}}\boldsymbol{l_{1}})Q_{\sigma_{1}} &= \boldsymbol{U}(\boldsymbol{k_{2}}\boldsymbol{l_{2}})Q_{\sigma_{2}}\,.
\end{split}
\end{equation}
If the resulting quivers are different for some pairs, we can use the Connector Theorem again (this time starting with respective $\boldsymbol{U}(\boldsymbol{k}\boldsymbol{l})U(\iota j_{\sigma(n)})\dots U(\iota j_{\sigma(1)})$), and repeating this procedure we eventually obtain sequences of unlinking $\boldsymbol{U}(\boldsymbol{i_{\sigma}}\boldsymbol{j_{\sigma}})$ and a~quiver $\hat{Q}$ such that\footnote{We label unlinkings $\boldsymbol{U}(\boldsymbol{i_{\sigma}}\boldsymbol{j_{\sigma}})$ by permutations $\sigma \in S_n$. Equivalently, we could number all permutations in $S_n$ and assign these numbers to unlinkings and quivers.} 
\begin{equation}\label{eq:Universal quiver permutohedron}
\boldsymbol{U}(\boldsymbol{i_{\sigma}}\boldsymbol{j_{\sigma}})Q_{\sigma}=\hat{Q}\qquad\forall\sigma \in S_n\ ,
\end{equation}
which we wanted to show.    
\end{proof}

\subsection{Examples -- universal quivers for permutohedra \texorpdfstring{$\Pi_2$}{Pi2} and \texorpdfstring{$\Pi_3$}{Pi3}}\label{sec:Contracting permutohedron Pi3}

\subsubsection*{Universal quiver for \texorpdfstring{$\Pi_2$}{Pi2}}
Comparing the proof of the Permutohedron Theorem with Sec.~\ref{sec:Creating permutohedron Pi3}, we can see that the application of $\boldsymbol{U}(\boldsymbol{k_{1}}\boldsymbol{l_{1}})= U(j_{2}(\iota j_{1}))$ to $Q_{\sigma_{1}}=Q_{(1,2)}$ and $\boldsymbol{U}(\boldsymbol{k_{1}}\boldsymbol{l_{1}})=U((j_{2}\iota )j_{1})$ to $Q_{\sigma_{2}}=Q_{(2,1)}$ leads to the same quiver and there are no more permutations in $S_2$. This means that the universal quiver and corresponding unlinkings are given by
\begin{equation}
\begin{split}
    \hat{Q} &= \boldsymbol{U}(\boldsymbol{i_{(1,2)}}\boldsymbol{j_{(1,2)}})Q_{(1,2)} = U(j_{2}(\iota j_{1}))Q_{(1,2)} \\
    &= \boldsymbol{U}(\boldsymbol{i_{(2,1)}}\boldsymbol{j_{(2,1)}})Q_{(2,1)} = U((j_{2}\iota )j_{1})Q_{(1,2)} \,.
\end{split}
\end{equation}

\subsubsection*{Universal quiver for \texorpdfstring{$\Pi_3$}{Pi3}}

We proceed with a construction of the universal quiver for permutohedron $\Pi_3$ which we have constructed in Sec.~\ref{sec:Creating permutohedron Pi3} using permutations of unlinkings. Our starting point is the final diagram from Sec.~\ref{sec:Creating permutohedron Pi3} where yellow dots mark quivers which form permutohedron~$\Pi_3$ (according to Definition~\ref{dfn:Permutohedron}).

\begin{figure}[h!]
    \centering
    \includegraphics[scale=0.75]{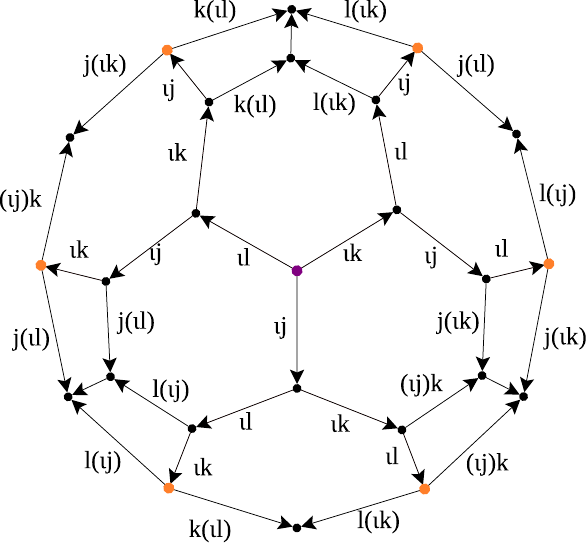}
%    \caption{Caption}
%    \label{fig:enter-label}
\end{figure}

From now on, we only need to focus on those unlinkings which emanate from the yellow dots. Thus, denoting $a=(\iota j), b=(\iota k), c=(\iota l)$, we get the top-left diagram in Fig.~\ref{fig:pi3_contraction}.
The rest of our construction corresponds to moving inwards, thus obtaining a~3d graph in which the ``equator'' corresponds to permutohedron $\Pi_3$ (see Fig.~\ref{fig:Pi3_Rocket}).
\begin{figure}[h!]
    \centering
    \includegraphics[scale=0.75]{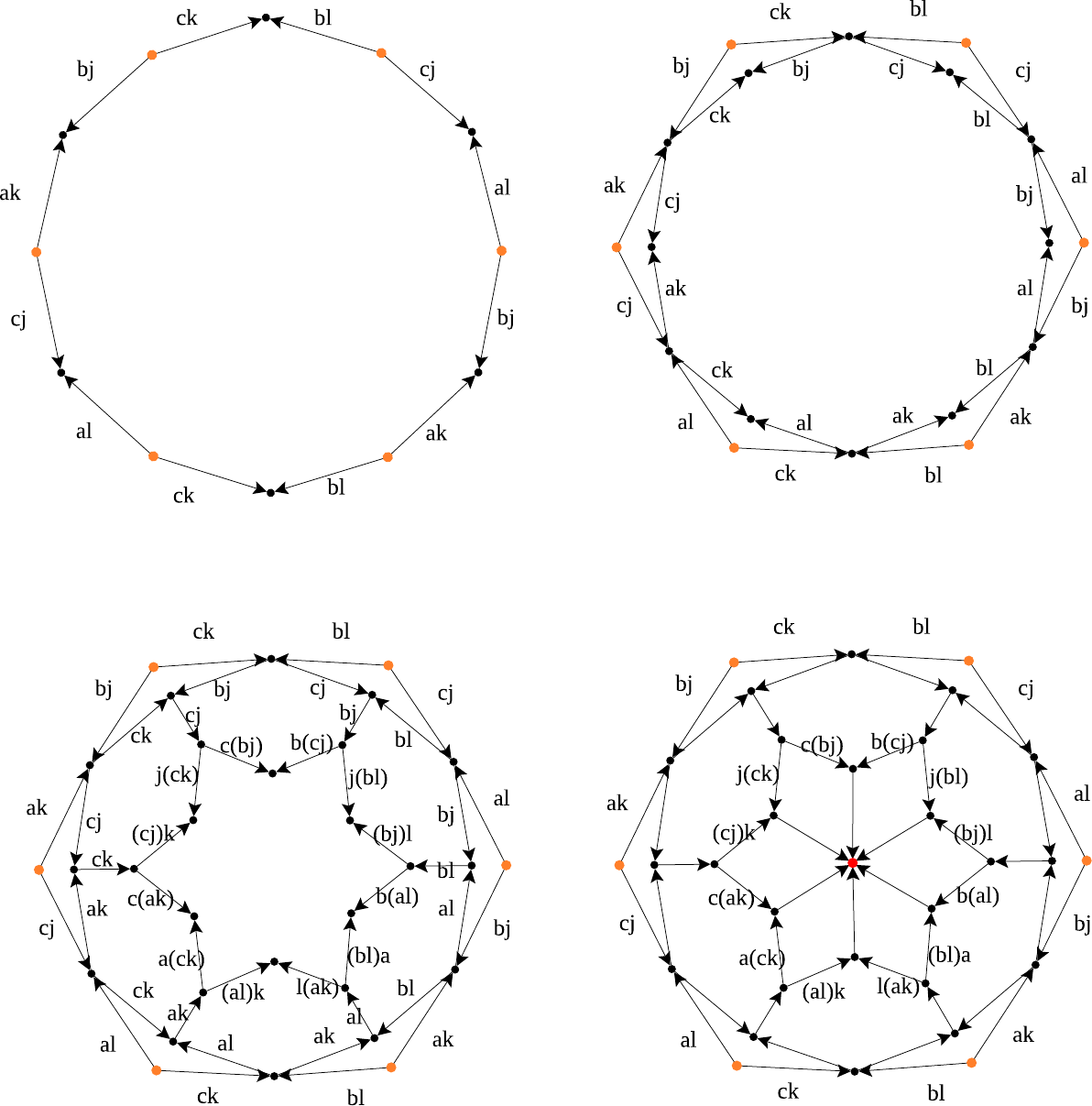}
    \caption{Construction of the universal quiver for permutohedron $\Pi_3$.}
    \label{fig:pi3_contraction}
\end{figure}

The first step is to deal with all six pairs of adjacent unlinkings:
\begin{equation}
\begin{aligned}
 \{U(ak),U(cj)\},\{U(al),& U(ck)\},\{U(bl),U(ak)\},\\ 
 \{U(bj),U(al)\},\{U(cj),& U(bl)\},\{U(ck),U(bj)\}.
\end{aligned}
\end{equation}
Since all indices in every pair are distinct, we can apply the square identity (\ref{eq:square}) to every such pair and get the top-right diagram in Fig.~\ref{fig:pi3_contraction}. A the next step we can apply the hexagon identity (\ref{eq:first hexagon}) to each of the following pairs of unlinkings
\begin{equation}
\begin{aligned}
\{U(ck),U(cj)\},\{U(bj),& U(cj)\},\{U(bl),U(bj)\}, \\
\{U(al),U(bl)\},\{U(ak),& U(al)\},\{U(ck),U(ak)\},
\end{aligned}
\end{equation}
to get the bottom-left diagram on Fig.~\ref{fig:pi3_contraction}.
Lastly, we have to apply square identity \eqref{eq:square} to connect pairs
\begin{equation}
\begin{aligned}
\{U(j(ck)),U(c(bj))\},\{U(b(cj)),& U(j(bl))\}, \{U((bj)l),U(b(al))\},  \\
\{U((bl)a),U(l(ak))\},\{U((al)k),& U(a(ck))\}, \{U(c(ak)),U((cj)k)\}.
\end{aligned}
\end{equation}
Note that the hexagon relation from the previous step leads to the following identification of nodes (see Remark~\ref{rmk:square and hexagon}):
\begin{equation}
\begin{aligned}
  (al) =((\iota j)l) = ((j\iota )l) &= (j(\iota l)) = (jc) = (cj)\,,   \\
(ak)  = ((j\iota )k) &= (j(\iota k)) = (bj)\,, \\
(ck) = ((l \iota)k) &= (l (\iota k)) = (bl) \,.
\end{aligned}
\end{equation}
This identification is crucial to close the structure -- otherwise we would have ended with an infinite ``cylinder'' of unlinkings.
As a~result, we finally close permutohedron $\Pi_3$ after five iterations, as shows the bottom-right diagram on Fig.~\ref{fig:pi3_contraction}. 
Note that all these pictures (including the ones from Sec.~\ref{sec:Creating permutohedron Pi3}) can be combined into one, shown in the introduction as Fig.~\ref{fig:Pi3_Rocket}.

\subsection{Proof of the Permutohedra Graph Theorem}

In the previous section we analysed universal quivers for permutohedra created by unlinking. Now we move to permutohedra graphs and prove Theorem~\ref{thm:Permutohedra Graph Theorem}.

\begin{proof}[Proof of the Permutohedra Graph Theorem]
From Definition~\ref{dfn:permutohedra graph} we know that the permutohedra graph is connected and made of several permutohedra. Let us take any two of them that are glued together (i.e. have nonempty intersection), denote them by $\Pi_{n_1}$ and $\Pi_{n_2}$, and denote a~quiver that is a~gluing point by  $Q_{1|2}$ (strictly speaking, we mean that $\phi(Q_{1|2})\in \Pi_{n_1} \cap \Pi_{n_2}$, where $\phi$ is a~ bijection from Proposition~\ref{prp:Permutohedron by unlinking}, and if there are more such quivers, we choose one of them). From the Permutohedron Theorem we know that for $\Pi_{n_1}$ as well as $\Pi_{n_2}$ there exists a~universal quiver -- $\hat{Q}_{1}$ and $\hat{Q}_{2}$ respectively -- and sequences of unlinking -- $\boldsymbol{U}(\boldsymbol{k_{\rho}}\boldsymbol{l_{\rho}})$ and $\boldsymbol{U}(\boldsymbol{m_{\pi}}\boldsymbol{n_{\pi}})$ -- such that
\begin{equation}
\hat{Q}_{1}=\boldsymbol{U}(\boldsymbol{k_{\rho}}\boldsymbol{l_{\rho}})Q_{\rho}\,,\qquad\hat{Q}_{2}=\boldsymbol{U}(\boldsymbol{m_{\pi}}\boldsymbol{n_{\pi}})Q_{\pi}\label{eq:closing two permutohedra}
\end{equation}
for all $\rho\in S_{n_1}, \pi\in S_{n_2}$. Since $\phi(Q_{1|2})\in \Pi_{n_1} \cap \Pi_{n_2}$, we know that there exist sequences of unlinking $\boldsymbol{U}(\boldsymbol{k_{1|2}}\boldsymbol{l_{1|2}})$ and $\boldsymbol{U}(\boldsymbol{m_{1|2}}\boldsymbol{n_{1|2}})$ such that
\begin{equation}
\label{eq:gluing}
\hat{Q}_{1}=\boldsymbol{U}(\boldsymbol{k_{1|2}}\boldsymbol{l_{1|2}})Q_{1|2}\,,\qquad\hat{Q}_{2}=\boldsymbol{U}(\boldsymbol{m_{1|2}}\boldsymbol{n_{1|2}})Q_{1|2}\,.
\end{equation}
From the Connector Theorem we know that there exist finite sequences of unlinking $\boldsymbol{U}(\boldsymbol{k'_{1|2}}\boldsymbol{l'_{1|2}})$ and $\boldsymbol{U}(\boldsymbol{m'_{1|2}}\boldsymbol{n'_{1|2}})$ such that
\begin{equation}
\boldsymbol{U}(\boldsymbol{m'_{1|2}}\boldsymbol{n'_{1|2}})\boldsymbol{U}(\boldsymbol{k_{1|2}}\boldsymbol{l_{1|2}})Q_{1|2}=\boldsymbol{U}(\boldsymbol{k'_{1|2}}\boldsymbol{l'_{1|2}})\boldsymbol{U}(\boldsymbol{m_{1|2}}\boldsymbol{n_{1|2}})Q_{1|2}\,.
\end{equation}
Combining it with (\ref{eq:closing two permutohedra}), we learn that
\begin{equation}
\boldsymbol{U}(\boldsymbol{m'_{1|2}}\boldsymbol{n'_{1|2}})\boldsymbol{U}(\boldsymbol{k_{\rho}}\boldsymbol{l_{\rho}})Q_{\rho}=\boldsymbol{U}(\boldsymbol{k'_{1|2}}\boldsymbol{l'_{1|2}})\boldsymbol{U}(\boldsymbol{m_{\pi}}\boldsymbol{n_{\pi}})Q_{\pi}=\hat{Q}_{1|2}
\end{equation}
for all $\rho\in S_{n_1}, \pi\in S_{n_2}$. We will shortly write it as
\begin{equation}
\hat{Q}_{1|2}=\boldsymbol{U}(\boldsymbol{i_{1|2,\sigma}}\boldsymbol{j_{1|2,\sigma}})Q_{\sigma}
\end{equation}
for all quivers that form permutohedra $\Pi_{n_1}$ and $\Pi_{n_2}$.

In the next step we take another quiver that is a~gluing point of $\Pi_{n_1}$ or $\Pi_{n_2}$ with some other permutohedron $\Pi_{n_3}$ and repeat the reasoning above, but this time we use universal quivers
$\hat{Q}_{1|2}$ and $\hat{Q}_{3}$ and respective sequences of unlinking. In consequence, we obtain the universal quiver $\hat{Q}_{1|2|3}$ and sequences of unlinking $\boldsymbol{U}(\boldsymbol{i_{1|2|3,\sigma}}\boldsymbol{j_{1|2|3,\sigma}})$ such that
\begin{equation}
\hat{Q}_{1|2|3}=\boldsymbol{U}(\boldsymbol{i_{1|2|3,\sigma}}\boldsymbol{j_{1|2|3,\sigma}})Q_{\sigma}
\end{equation}
for all quivers that form permutohedra $\Pi_{n_1}$, $\Pi_{n_2}$ and $\Pi_{n_3}$.

Repeating this procedure for all $p$ permutohedra composing the graph, we find that the universal quiver $\hat{Q}=\hat{Q}_{1|\dots|p}$ and sequences of unlinking $\boldsymbol{U}(\boldsymbol{i_{\sigma}}\boldsymbol{j_{\sigma}})=\boldsymbol{U}(\boldsymbol{i_{1|\dots|p,\sigma}}\boldsymbol{j_{1|\dots|p,\sigma}})$ satisfy
\begin{equation}
\hat{Q}=\boldsymbol{U}(\boldsymbol{i_{\sigma}}\boldsymbol{j_{\sigma}})Q_{\sigma}
\end{equation}
for all quivers $Q_{\sigma}$ that form the permutohedra graph, which is what we wanted to show.
\end{proof}

\subsection{Example -- universal quiver for permutohedra graph containing \texorpdfstring{\newline}{ } two \texorpdfstring{$\Pi_2$}{Pi2}}

Let us illustrate Permutohedra Graph Theorem with the example of a~permutohedra graph containing two $\Pi_2$, described in Sec.~\ref{sec:Two Pi_2}. We have quivers $\{Q_{1,(1,2)}, Q_{1,(2,1)} \}$ that form permutohedron $\Pi_{n_1}$ and quivers $\{Q_{2,(1,2)}, Q_{2,(2,1)} \}$ that form permutohedron $\Pi_{n_2}$, where $n_1=n_2=2$. Moreover, we know that gluing comes from the fact that $Q_{1,(1,2)}$ and $Q_{2,(1,2)}$ are really the same quiver which will be also denoted by $Q_{1|2}$\,.

From Sec.~\ref{sec:Contracting permutohedron Pi3} we know that for $\Pi_{n_1}$ the universal quiver is given by
\begin{equation}\label{eq:Pi_n_1}
    \hat{Q}_{1} = U((j_{12}\iota_1)j_{11})Q_{1,(2,1)} = U(j_{12}(\iota_1 j_{11}))Q_{1,(1,2)} \,.
\end{equation}
Similarly, for $\Pi_{n_2}$ the universal quiver is given by
\begin{equation}\label{eq:Pi_n_2}
    \hat{Q}_{2} = U((j_{21}\iota_2)j_{22})Q_{2,(1,2)} = U(j_{21}(\iota_2 j_{22}))Q_{2,(2,1)} \,.
\end{equation}
If we recall that $Q_{1,(1,2)}=Q_{2,(1,2)}=Q_{1|2}$ and apply Connector Theorem to $U(j_{12}(\iota_1 j_{11}))Q_{1|2}$ and $U((j_{21}\iota_2)j_{22})Q_{1|2}$, we obtain
\begin{equation}
    U((j_{21}\iota_2)j_{22})U(j_{12}(\iota_1 j_{11}))Q_{1|2} = U(j_{12}(\iota_1 j_{11}))U((j_{21}\iota_2)j_{22})Q_{1|2} \,.
\end{equation}
Combining it with Eqs. (\ref{eq:Pi_n_1}-\ref{eq:Pi_n_2}) we learn that the universal quiver for the whole permutohedra graph is given by
\begin{equation}
\begin{split}
    \hat{Q} &= U((j_{21}\iota_2)j_{22})U((j_{12}\iota_1)j_{11})Q_{1,(2,1)} = U((j_{21}\iota_2)j_{22})U(j_{12}(\iota_1 j_{11}))Q_{1,(1,2)} \\
    &= U(j_{12}(\iota_1 j_{11}))U((j_{21}\iota_2)j_{22})Q_{2,(1,2)} = U(j_{12}(\iota_1 j_{11}))U(j_{21}(\iota_2 j_{22}))Q_{2,(2,1)} \,.
\end{split}
\end{equation}
Summing up, we have obtained the following diagram

\[\begin{tikzcd}
	&& {\hat{Q}} \\
	& {\hat{Q}_1} && {\hat{Q}_2} \\
	{Q_{1,(2,1)}} && {Q_{1,(1,2)}=Q_{2,(1,2)}} && {Q_{2,(2,1)}}
	\arrow["{(j_{12}\iota_1)j_{11}}", from=3-1, to=2-2]
	\arrow["{j_{12}(\iota_1j_{11})}"', from=3-3, to=2-2]
	\arrow["{(j_{21}\iota_2)j_{22}}", from=3-3, to=2-4]
	\arrow["{(j_{21}\iota_2)j_{22}}"', from=3-5, to=2-4]
	\arrow["{j_{12}(\iota_1j_{11})}"', from=2-4, to=1-3]
	\arrow["{(j_{21}\iota_2)j_{22}}", from=2-2, to=1-3]
\end{tikzcd}\]

Note that this can be combined with Fig.~\ref{fig:two Pi2 s} into a network of unlinkings that starts from and ends on a single quiver.

\section{Universal quivers for knots}\label{sec:Universal quivers for knots}

In this section we discuss the main application of our results. In \cite{JKLNS2105} it was shown that quivers corresponding to the same knot form permutohedra graphs that arise from splitting (see App.~\ref{sec:Statements from the permutohedra paper}). Using Permutohedron Theorem and Permutohedra Graph Theorem, we can show that they can be all unlinked to the same quiver.

\subsection{Splitting by unlinking} 

In Proposition~\ref{prp:Permutohedron by unlinking} we showed that the application of a~sequence of $n$ unlinkings that share one index leads to the structure of permutohedron $\Pi_n$. On the other hand, \cite{JKLNS2105} provides an~analysis of permutohedra composed of equivalent quivers corresponding to the same knot, which can be generated by the procedure of splitting (see App.~\ref{sec:Statements from the permutohedra paper}). This brings a~natural question about the relationship between permutohedra coming from unlinkings and permutohedra coming from splitting.

\begin{prp}
\label{prp:splitting by unlinking}
Every splitting can be realised by an application of a~sequence of unlinkings to the prequiver if we allow for the presence of an extra row/column\footnote{Since we consider symmetric quivers, rows and columns are in 1-1 correspondence. When we talk about row/column of $Q$, we mean both the row/column in the adjacency matrix $C$ and the corresponding entry of the vector of generating parameters $\boldsymbol{x}$.}. 
\end{prp}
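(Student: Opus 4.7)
The plan is to set up an explicit dictionary between the splitting procedure recalled in Appendix~\ref{sec:Statements from the permutohedra paper} and a sequence of unlinkings $U(\iota j_n)\cdots U(\iota j_1)$ applied to the prequiver augmented by one auxiliary node $\iota$. First I would extract from the definition of splitting the matrix entries and generating parameters of the output quiver when a~chosen set of nodes $j_1,\dots,j_n$ of the prequiver is split: the new nodes come with self-loops and off-diagonal arrow counts that are built from row/column sums of the prequiver together with the arrow counts to the node that is being split. Because the unlinking formula (\ref{eq:unlinking definition}) produces a new node whose arrows to any external node $k$ are $C_{ik}+C_{jk}$, whose self-loop count is $C_{ii}+C_{jj}+2C_{ij}-1$, and whose generating parameter is $q^{-1}x_ix_j$, the output of one $U(\iota j_k)$ already has precisely the linear-in-rows structure that splitting prescribes for the single node it creates.

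Second, I would augment the prequiver by an extra row/column corresponding to an auxiliary node $\iota$. The entries $C_{\iota\iota}$, $C_{\iota j_k}$ and $C_{\iota r}$ (for $r$ outside the split set) and the parameter $x_\iota$ would be \emph{calibrated} so that the sequential application
\begin{equation*}
U(\iota j_n)\,U(\iota j_{n-1})\,\cdots\,U(\iota j_1)
\end{equation*}
applied to this augmented prequiver reproduces, row by row and column by column, the matrix of the split quiver. The calibration is read off by comparing the two sides: the contribution $C_{\iota j_k}+C_{\iota r}$ on the new row created at step $k$ must equal the arrow count that splitting assigns between the $k$-th split node and $r$, and the self-loop $C_{\iota\iota}+C_{j_kj_k}+2C_{\iota j_k}-1$ must match the self-loop prescribed by splitting; the modification $C_{\iota j_k}\to C_{\iota j_k}-1$ accounts for the $-1$ shifts appearing in the splitting formulas. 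The generating parameter identification $x_{(\iota j_k)}=q^{-1}x_\iota x_{j_k}$ is matched by a single global choice of $x_\iota$.

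Third, I would observe that since the unlinkings $U(\iota j_k)$ all share the common index $\iota$, any permutation $\sigma$ of the order of these unlinkings produces a~quiver differing only by a relabelling of new nodes, which is exactly the symmetry property that makes the family of split quivers fit into a permutohedron (compare Proposition~\ref{prp:Permutohedron by unlinking}). This is not needed for the proposition itself, but it provides an immediate consistency check that the construction produces the correct equivalence class.

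The main obstacle will be the bookkeeping of the calibration in Step~2: one has to ensure that the entries of the extra row/column can be chosen \emph{consistently and in a single stroke}, so that all $n$ sequential unlinkings --- which repeatedly decrement $C_{\iota j_k}$ by $1$ and rearrange arrow counts with the other split nodes --- land on the exact splitting output for every node simultaneously. I expect this to reduce to checking a small number of linear identities relating the prequiver entries, the shift parameters appearing in the definition of splitting, and the entries assigned to the auxiliary row/column. Once those identities are verified, the proposition follows by induction on $n$, the inductive step being a single application of (\ref{eq:unlinking definition}) together with the explicit form of one splitting move as recorded in Appendix~\ref{sec:Statements from the permutohedra paper}.
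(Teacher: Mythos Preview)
Your overall strategy coincides with the paper's: augment the prequiver by one auxiliary node $\iota$, choose its row/column and variable so that the sequence $U(\iota j_{\sigma(n)})\cdots U(\iota j_{\sigma(1)})$ reproduces the split quiver, and verify this by comparing the matrix entries. The paper carries this out explicitly, fixing $x_\iota=q\kappa$, $\check C'_{\iota,\text{spectator }s}=h_s$, $\check C'_{\iota,j_k}=k+1$, $\check C'_{\iota\iota}=l-2k-1$, and then displaying the two matrices obtained from the two orderings of $U(i\iota)$ and $U(j\iota)$; your ``calibration'' step is exactly this, left implicit.

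There is one genuine misconception in your Step~3, and it is not harmless. You assert that permuting the order of the unlinkings $U(\iota j_k)$ produces ``a quiver differing only by a relabelling of new nodes''. This is false: unlinkings sharing an index do \emph{not} commute (this is the content of the hexagon relations~(\ref{eq:first hexagon})--(\ref{eq:second hexagon})), and the resulting quivers $Q_\sigma$ are genuinely distinct, not merely relabelled --- that is precisely why Proposition~\ref{prp:Permutohedron by unlinking} places them at \emph{different} vertices of the permutohedron. In the paper's explicit computation (equations for $U(j\iota)U(i\iota)\check C'$ versus $U(i\iota)U(j\iota)\check C'$) the two matrices differ in the location of a single ``$+1$'', and this is exactly the discrepancy that Definition~\ref{def:splitting} records between splittings with $\sigma(i)<\sigma(j)$ and $\sigma(i)>\sigma(j)$. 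So the dependence on ordering is not a consistency check to be waved away: it is the mechanism by which \emph{every} splitting (i.e.\ every permutation $\sigma$) is realised from the \emph{same} augmented prequiver. If you believed the orderings all gave isomorphic outputs, you would only have produced one splitting, not all of them, and the proposition would not follow.
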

\begin{proof}
Let us consider  $(k,l)$-splitting of $n$ nodes of prequiver $\check{Q}$ with permutation $\sigma\in S_{n}$ in the presence of $s$ spectators with corresponding shifts $h_{1},\dots,h_{s}$, and multiplicative factor $\kappa$ (see Definition \ref{def:splitting}). 
Following the idea from \cite{EKL2108}, let us consider a~quiver $\check{Q'}$ which adjacency matrix is given by the one of $\check{Q}$ plus one extra row/column. We will denote its index by $\iota$ and take $x_{\iota}=q \kappa$ (we always assume that this is the last row/column; recall that we write only the upper-triangular part of the symmetric matrix):
\begin{equation}
\label{eq:checkQ' definition}
\begin{split}
    \check{C}'&=\left[\begin{array}{ccccccc}
     &  &  &  &  &  & h_{1}\\
     &  &  &  &  &  & \vdots\\
     &  &  &  &  &  & h_{s}\\
     &  &  & \check{C} &  &  & k+1\\
     &  &  &  &  &  & \vdots\\
     &  &  &  &  &  & k+1\\
    &  &  &  &  &  & l-2k-1
    \end{array}\right]    \\
    &=\left[\begin{array}{ccccccc}
    \check{C}_{11} &  &  &  & \cdots &  & h_{1}\\
     & \ddots &  &  &  &  & \vdots\\
     &  & \check{C}_{ss} &  & \cdots &  & h_{s}\\
     &  &  & \check{C}_{s+1,s+1} & \cdots &  & k+1\\
     &  &  &  & \ddots &  & \vdots\\
     &  &  &  &  & \check{C}_{s+n,s+n} & k+1\\
    &  &  &  &  &  & l-2k-1
    \end{array}\right].  
\end{split}
\end{equation}
Without loss of generality, we consider unlinkings $U(i\iota)$ and $U(j\iota)$ ($i,j\in\{s+1,\dots,s+n\}$) acting in two different orders (we allow for other unlinkings from the set $\{U(s+1,\iota),\dots,U(s+n,\iota)\}$ to act before, in between, or after them). On one hand we have
\begin{align} 
 \dots  U(j\iota) &\dots  U(i\iota)\dots\check{C'}= \nonumber \\
&=  \dots U(j\iota)\dots U(i\iota)\dots\left[\begin{array}{cccccccc}
\ddots &  &  & \vdots &  & \vdots &  & \vdots\\
 & \check{C}_{ss} & \cdots & \check{C}_{si} & \cdots & \check{C}_{sj} & \cdots & h_{s}\\
&  &  \ddots & \vdots &  & \vdots &  & \vdots\\
&  &  &  \check{C}_{ii} & \cdots & \check{C}_{ij} & \cdots & k+1\\
&  &  &  & \ddots & \vdots &  & \vdots\\
&  &  &  &  & \check{C}_{jj} & \cdots & k+1\\
&  &  &  &  &  & \ddots & \vdots\\
&  &  &  &  &  &  & l-2k-1
\end{array}\right] \label{eq:splitting from unlinking 1}\\
&=  \dots U(j\iota)\dots\left[\begin{array}{ccccccccc}
\ddots &  &  & \vdots & \vdots &  & \vdots &  & \vdots\\
 & \check{C}_{ss} & \cdots & \check{C}_{si} & \check{C}_{si}+h_{s} & \cdots & \check{C}_{sj} & \cdots & h_{s}\\
& & \ddots & \vdots &  &  & \vdots &  & \vdots\\
& & & \check{C}_{ii} & \check{C}_{ii}+k & \cdots & \check{C}_{ij} & \cdots & k\\
& & & & \check{C}_{ii}+l & \cdots & \check{C}_{ij}+k+1 & \cdots & l-k-1\\
& & & & & \ddots & \vdots &  & \vdots\\
& & & & & & \check{C}_{jj} & \cdots & k+1\\
& & & & & & & \ddots & \vdots\\
& & & & & & & & l-2k-1
\end{array}\right] \nonumber\\
&=  \left[\begin{array}{cccccccccc}
\ddots &  &  & \vdots & \vdots &  & \vdots & \vdots &  & \vdots\\
 & \check{C}_{ss} & \cdots & \check{C}_{si} & \check{C}_{si}+h_{s} & \cdots & \check{C}_{sj} & \check{C}_{sj}+h_{s} & \cdots & h_{s}\\
&  & \ddots & \vdots & \vdots &  & \vdots & \vdots &  & \vdots\\
&  &  & \check{C}_{ii} & \check{C}_{ii}+k & \cdots & \check{C}_{ij} & \check{C}_{ij}+k & \cdots & k\\
&  &  &  & \check{C}_{ii}+l & \cdots & \check{C}_{ij}+k+1 & \check{C}_{ij}+l & \cdots & l-k-1\\
&  &  &  &  & \ddots & \vdots & \vdots &  & \vdots\\
&  &  &  &  &  & \check{C}_{jj} & \check{C}_{jj}+k & \cdots & k\\
&  &  &  &  &  &  & \check{C}_{jj}+l & \cdots & l-k-1\\
 &  &  &  &  &  &  &  & \ddots & \vdots\\
 &  &  &  &  &  &  &  &  & l-2k-1
\end{array}\right]\,, \nonumber
\end{align}
on the other:
\begin{align}  
\label{eq:splitting from unlinking 2}
\dots  U(i\iota) & \dots  U(j\iota)\dots\check{C'}=\\
= & \left[\begin{array}{cccccccccc}
\ddots &  &  & \vdots & \vdots &  & \vdots & \vdots &  & \vdots\\
 & \check{C}_{ss} & \cdots & \check{C}_{si} & \check{C}_{si}+h_{s} & \cdots & \check{C}_{sj} & \check{C}_{sj}+h_{s} & \cdots & h_{s}\\
&  & \ddots & \vdots & \vdots &  & \vdots & \vdots &  & \vdots\\
&  &  & \check{C}_{ii} & \check{C}_{ii}+k & \cdots & \check{C}_{ij} & \check{C}_{ij}+k+1 & \cdots & k\\
&  &  &  & \check{C}_{ii}+l & \cdots & \check{C}_{ij}+k & \check{C}_{ij}+l & \cdots & l-k-1\\
&  &  &  &  & \ddots & \vdots & \vdots &  & \vdots\\
&  &  &  &  &  & \check{C}_{jj} & \check{C}_{jj}+k & \cdots & k\\
&  &  &  &  &  &  & \check{C}_{jj}+l & \cdots & l-k-1\\
 &  &  &  &  &  &  &  & \ddots & \vdots\\
 &  &  &  &  &  &  &  &  & l-2k-1
\end{array}\right]. \nonumber
\end{align}
In both cases the variables associated to new nodes arising from unlinkings
$U(i\iota)$ and $U(j\iota)$ are given by $x_{i}\kappa$ and $x_{j}\kappa$.

Analysing these adjacency matrices and referring to Definition \ref{def:splitting}, we can see that $U(s+\sigma(n),\iota)\dots U(s+\sigma(1),\iota)\check{Q'}$ is the matrix that arises from $(k,l)$-splitting of $n$ nodes of $\check{Q}$ with permutation $\sigma\in S_{n}$ in the presence of $s$ spectators with corresponding shifts $h_{1},\dots,h_{s}$, multiplicative factor $\kappa$, and one extra row/column 
\begin{equation}
\label{eq:extra row/column}
    (h_{1}\,,\,\dots,h_{s}\,,\,k\,,\,l-k-1\,,\,k,\,l-k-1\,,\,\dots\,,\,k\,,\,l-k-1\,,\,l-2k-1)\,,
\end{equation}
that we denoted by $\iota$.\footnote{One can compare matrices (\ref{eq:splitting from unlinking 1}) and (\ref{eq:splitting from unlinking 2}) with  Definition \ref{def:splitting} and check that the position of the extra unit follows exactly the
inverses in permutation $\sigma$.}
\end{proof}

\begin{rmk}
\label{rmk:generalisation to other permutohedra}
    Splitting requires equal number of arrows between the extra node and the nodes that are split: 
\begin{equation}
\label{eq:splitting condition}
    \check{C}_{s+1,\iota}=\check{C}_{s+2,\iota}=\dots = \check{C}_{s+n,\iota}=k+1\ ,
\end{equation}
which is not required for quivers forming permutohedron in general. 
\end{rmk}

In Sec.~\ref{sec:Universal quivers} we analysed quivers forming permutohedra and permutohedra graphs which do not necessarily satisfy condition \eqref{eq:splitting condition}. On the other hand, in this section we focus on permutohedra graphs for quivers corresponding to knots, which were found in \cite{JKLNS2105} and always come from splitting.\footnote{In case of splitting, the variable associated to the extra node is fixed to be $q^{-1}\kappa$, whereas in general it can be arbitrary.}

\subsection{Universal quivers for quivers coming from splitting}

Before applying theorems from Sec.~\ref{sec:Universal quivers}, we will show that extra rows/columns (which appeared e.g. in Proposition ~\ref{prp:splitting by unlinking}) do not alter sequences of unlinking leading to the universal quiver.

\begin{lma}
\label{lma:erasing row/column}
Assume that quivers $Q'_{1},Q'_{2},\dots Q'_{m}$ can be unlinked to the same universal quiver $\hat{Q}'$ by sequences of unlinking $\boldsymbol{U}(\boldsymbol{i_1}\boldsymbol{j_{1}}),\boldsymbol{U}(\boldsymbol{i_2}\boldsymbol{j_{2}}),\dots,\boldsymbol{U}(\boldsymbol{i_m}\boldsymbol{j_{m}})$:
\begin{equation}
\hat{Q}'=\boldsymbol{U}(\boldsymbol{i_{k}}\boldsymbol{j_{k}})Q'_{k}\qquad\forall k=1,\dots,m,
\end{equation}
and that all $Q'_{k}$ ($k=1,\dots,m$) share the same row/column that does not participate in any of the unlinkings, i.e. the corresponding node is not unlinked by any of the operators from  $\boldsymbol{U}(\boldsymbol{i_{k}}\boldsymbol{j_{k}})$ ($k=1,\dots,m$). 

If we erase this row/column in every $Q'_{k}$ ($k=1,\dots,m$) as well as (its enlarged version) in $\hat{Q}'$, and denote the resulting quivers $Q_{k}$ ($k=1,\dots,m$) and $\hat{Q}$ respectively, then they are related by the same sequences of unlinking:
\begin{equation}
\hat{Q}=\boldsymbol{U}(\boldsymbol{i_{k}}\boldsymbol{j_{k}})Q_{k}\qquad\forall k=1,\dots,m.
\end{equation}
\end{lma}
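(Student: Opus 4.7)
The plan is to prove the lemma by induction on the length $n$ of a sequence of unlinking, showing that erasing a non-participating row/column commutes with the application of any single unlinking $U(ab)$; the lemma then follows immediately by applying this commutativity step-by-step to each $\boldsymbol{U}(\boldsymbol{i_k}\boldsymbol{j_k})$ and identifying the results with $\hat Q$ (the erased $\hat Q'$).

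First I would record the following local observation. Given a quiver with a distinguished row/column $\iota$, and two other nodes $a,b\neq \iota$, the unlinking $U(ab)$ according to \eqref{eq:unlinking definition} modifies only the block of $C$ that involves the indices $a$ and $b$, and appends a new row/column labelled $(ab)$. The entry of the $\iota$-row/column is untouched by the change $C_{ab}\mapsto C_{ab}-1$, and the newly appended entry $C_{\iota,(ab)}=C_{\iota a}+C_{\iota b}$ depends \emph{only} on the pre-existing entries in the $\iota$-row, not on any entry of any other row. Similarly, none of the entries of $C_{a,(ab)}$, $C_{b,(ab)}$, $C_{(ab),(ab)}$, or the modified $C_{ab}$ depends on the $\iota$-row/column. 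In other words, the $\iota$-row/column is passively propagated: its values sit in disjoint positions from the computations that update the rest of the matrix.

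From this observation I get the key commutation: let $\varepsilon$ denote the operation of erasing the $\iota$-row/column from a quiver (together with the entry $x_\iota$ from $\boldsymbol x$). Then for any $U(ab)$ with $a,b\neq\iota$ and $a,b$ nodes of the quiver,
\begin{equation}
U(ab)\,\varepsilon = \varepsilon\, U(ab),
\end{equation}
because the action of $U(ab)$ on the erased quiver produces exactly the same modifications on the surviving entries, and the appended row/column $(ab)$ contains the same values in the surviving positions as the unerased version. The $\boldsymbol{x}$-part is handled identically: $U(ab)$ appends $q^{-1}x_a x_b$, which involves no $x_\iota$.

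Now I induct on the length of $\boldsymbol{U}(\boldsymbol{i_k}\boldsymbol{j_k})$. By hypothesis none of its unlinkings acts on $\iota$, nor on any descendant of $\iota$ (there are none, since $\iota$ is never an argument of a $U$). So at every intermediate stage the row/column $\iota$ remains present and uninvolved, and the commutation above applies step by step, yielding
\begin{equation}
\varepsilon\,\hat Q' = \varepsilon\,\boldsymbol{U}(\boldsymbol{i_k}\boldsymbol{j_k})Q'_k = \boldsymbol{U}(\boldsymbol{i_k}\boldsymbol{j_k})\,\varepsilon\, Q'_k = \boldsymbol{U}(\boldsymbol{i_k}\boldsymbol{j_k}) Q_k
\end{equation}
for each $k=1,\dots,m$, and by definition $\varepsilon\,\hat Q' = \hat Q$. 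This gives the desired identity $\hat Q=\boldsymbol{U}(\boldsymbol{i_k}\boldsymbol{j_k}) Q_k$ for all $k$. The main obstacle is purely bookkeeping: one must make sure that at every intermediate stage of the sequence the $\iota$-row/column is still present and still does not appear as an argument of any remaining unlinking; this is guaranteed by the hypothesis that $\iota$ is not unlinked by any operator in $\boldsymbol{U}(\boldsymbol{i_k}\boldsymbol{j_k})$ and by the fact that unlinking only creates new nodes whose labels involve the indices being unlinked, never $\iota$.
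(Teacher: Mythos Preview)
Your proof is correct and follows essentially the same approach as the paper's: the paper simply asserts that the unlinkings act as the identity on the shared row/column and hence commute with the erasure operation, drawing the corresponding commutative square, while you spell out this same commutation at the level of a single unlinking and then extend it by induction along the sequence. The additional detail you provide (that no entry outside the $\iota$-row/column depends on the $\iota$-row/column under $U(ab)$) is precisely the content behind the paper's one-line justification.
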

%\hd{What is the needed size of the quiver $m + 1$?}\pk{The size of the quivers is arbitrary. We only know that $Q'_k$ is one node bigger than $Q_k$. $m$ does not correspond to the size of the quiver here -- we can change it, but most reasonable letters are already taken.}

\begin{proof}
Since the node corresponding to the shared row/column is not unlinked by any of the operators from sequences  $\boldsymbol{U}(\boldsymbol{i_{k}}\boldsymbol{j_{k}})$ ($k=1,\dots,m$), they all act on it as identity. In consequence, the following diagram commutes:
\[
\begin{array}{ccccc}
 &  & \textrm{acting by }\boldsymbol{U}(\boldsymbol{i_{k}}\boldsymbol{j_{k}})\\
 & Q'_{k} & \longrightarrow & \hat{Q}'\\
\textrm{erasing the row/column} & \downarrow &  & \downarrow & \textrm{erasing the row/column}\\
 & Q_{k} & \longrightarrow & \hat{Q}\\
 &  & \textrm{acting by }\boldsymbol{U}(\boldsymbol{i_{k}}\boldsymbol{j_{k}})
\end{array}
\]
which proves the statement.    
\end{proof}

Having Lemma~\ref{lma:erasing row/column}, we are ready to prove the existence of a~universal quiver for any permutohedron coming from splitting. 

\begin{thm}\label{thm:Universal quiver for splitting}
For any quivers $\{Q_{\sigma}\}_{\sigma=1,\dots,n!}$ arising from $(k,l)$-splitting of $n$ nodes of $\check{Q}$ in the presence of $s$ spectators with corresponding shifts $h_{1},\dots,h_{s}$ and multiplicative factor~$\kappa$ there exists a~universal quiver $\hat{Q}$.
\end{thm}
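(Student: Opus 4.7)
The plan is to combine Proposition~\ref{prp:splitting by unlinking}, the Permutohedron Theorem (Theorem~\ref{thm:Permutohedron Theorem}), and Lemma~\ref{lma:erasing row/column} into a three-step argument.

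First, I would enlarge $\check{Q}$ to $\check{Q}'$ by appending the extra row/column $\iota$ specified in Eq.~\eqref{eq:checkQ' definition}, with the assignment $x_\iota = q\kappa$. By Proposition~\ref{prp:splitting by unlinking}, for each $\sigma\in S_n$ the quiver
\[
Q'_\sigma := U(s+\sigma(n),\iota)\cdots U(s+\sigma(1),\iota)\,\check{Q}'
\]
coincides with the splitting result $Q_\sigma$ augmented by one extra row/column $\iota$ whose content is the vector \eqref{eq:extra row/column} and, crucially, does not depend on $\sigma$ (as can be read off from comparing \eqref{eq:splitting from unlinking 1} with \eqref{eq:splitting from unlinking 2}).

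Next, I would observe that the family $\{Q'_\sigma\}_{\sigma\in S_n}$ fits Definition~\ref{dfn:Permutohedron} of a permutohedron $\Pi_n$ obtained from unlinking, with $\iota$ playing the role of the common node of the generating sequences and $j_k = s+k$. Theorem~\ref{thm:Permutohedron Theorem} then supplies a universal quiver $\hat{Q}'$ and sequences $\boldsymbol{U}(\boldsymbol{i_\sigma}\boldsymbol{j_\sigma})$ such that
\[
\hat{Q}' = \boldsymbol{U}(\boldsymbol{i_\sigma}\boldsymbol{j_\sigma})\,Q'_\sigma \qquad \forall \sigma\in S_n.
\]

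The hard part will be showing that no unlinking appearing inside any $\boldsymbol{U}(\boldsymbol{i_\sigma}\boldsymbol{j_\sigma})$ acts on the bare node $\iota$. To do so I would track the Connector Algorithm of Sec.~\ref{sec:connector}: the initial unlinkings $U(\iota j_k)$ all share $\iota$ as a common index, and every hexagon they trigger produces auxiliary unlinkings of the form $U((\iota j_k)j_l)$ and $U(j_k(\iota j_l))$ in which $\iota$ is already nested inside a compound node. Invoking Lemma~\ref{lma:initial indices} inductively on the depth of the Connector Algorithm, every subsequently generated unlinking involves either bare $j$-indices or compound nodes containing $\iota$ internally, but never the bare symbol $\iota$. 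The explicit $\Pi_3$ computation in Sec.~\ref{sec:Contracting permutohedron Pi3} (Fig.~\ref{fig:pi3_contraction}) serves as the template for this behaviour.

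Finally, since the $\iota$-row/column is shared by all $Q'_\sigma$ and remains untouched by every operator in $\boldsymbol{U}(\boldsymbol{i_\sigma}\boldsymbol{j_\sigma})$, Lemma~\ref{lma:erasing row/column} applies. Erasing this row/column from each $Q'_\sigma$ returns the original splitting quiver $Q_\sigma$, while erasing the corresponding enlarged row/column from $\hat{Q}'$ defines $\hat{Q}$. The identities $\hat{Q}' = \boldsymbol{U}(\boldsymbol{i_\sigma}\boldsymbol{j_\sigma})Q'_\sigma$ then descend to $\hat{Q} = \boldsymbol{U}(\boldsymbol{i_\sigma}\boldsymbol{j_\sigma})Q_\sigma$ for every $\sigma$, establishing $\hat{Q}$ as the desired universal quiver for the splitting permutohedron.
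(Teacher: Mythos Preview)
Your proposal follows essentially the same three-step architecture as the paper's proof: realise the splitting via unlinking of the enlarged prequiver $\check{Q}'$ (Proposition~\ref{prp:splitting by unlinking}), apply the Permutohedron Theorem to the family $\{Q'_\sigma\}$, then erase the shared $\iota$-row/column via Lemma~\ref{lma:erasing row/column}. The paper dispatches the non-participation of $\iota$ in a single sentence (``From the Connector Algorithm we know that the extra row/column does not participate''), whereas you attempt a more explicit justification. One small caveat: your appeal to Lemma~\ref{lma:initial indices} is not quite the right citation, since that lemma concerns which indices can be \emph{repeated} in $(1,1)$-connectors, and $\iota$ \emph{is} an initial index so is not excluded by it; the correct observation is simply that every hexagon triggered by swapping $U(\iota j_a)$ and $U(\iota j_b)$ produces associativity unlinkings $U(j_a(\iota j_b))$ with $\iota$ already nested, and this property propagates inductively through the algorithm (as your $\Pi_3$ reference illustrates). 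With that adjustment your argument is sound and matches the paper's route.
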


\begin{proof}
From Proposition~\ref{prp:splitting by unlinking} we know that for every $Q_{\sigma}$, $\sigma\in\{1,\dots,n!\}$ there exists a~sequence of unlinkings $U(s+\sigma(n),\iota)\dots U(s+\sigma(1),\iota)$  and quivers $\check{Q'}$, $Q'_{\sigma}$ such that:
\begin{itemize}
    \item $U(s+\sigma(n),\iota)\dots U(s+\sigma(1),\iota)\check{Q'}=Q'_{\sigma}$ ,
    \item $\check{C}'$ arises from $C$ by adding an extra row/column\footnote{This is the row/column that we denoted by $\iota$ and it is crucial that
it the same for all $Q_{\sigma}$. We overload $\sigma$ a~little:
sometimes it is a~number from the set $\{1,\dots,n!\}$, and sometimes
it is a~permutation of $n$~elements. However, we hope that every
use is clear from the context and one can define a~bijection between
these sets if necessary.}: 
\begin{equation}
\label{eq:extra row/column before}
    (h_{1}\,,\,\dots,h_{s}\,,\,k+1,\,k+1,\,\dots\,,\,k+1\,,\,l-2k-1)\,,
\end{equation}
\item $C'_{\sigma}$ arises from $C_{\sigma}$ by adding an extra row/column: 
\begin{equation}
\label{eq:extra row/column after}
    (h_{1}\,,\,\dots,h_{s}\,,\,k\,,\,l-k-1\,,\,k,\,l-k-1\,,\,\dots\,,\,k\,,\,l-k-1\,,\,l-2k-1)\,.
\end{equation}
\end{itemize}

From the Permutohedron Theorem we know that there exist sequences of unlinking $\boldsymbol{U}(\boldsymbol{i_{\sigma}}\boldsymbol{j_{\sigma}})$ and the quiver $\hat{Q}'$ such that 
\begin{equation}
\boldsymbol{U}(\boldsymbol{i_{\sigma}}\boldsymbol{j_{\sigma}})Q'_{\sigma}=\hat{Q}'\qquad\forall\sigma=1,\dots,n!\,.
\end{equation}

From the Connector Algorithm we know that the extra row/column does not participate and remains the same for all quivers. Using Lemma~\ref{lma:erasing row/column}, we can erase the extra row/column from $Q'_{\sigma}$ and $\hat{Q}'$ and obtain
\begin{equation}
\boldsymbol{U}(\boldsymbol{i_{\sigma}}\boldsymbol{j_{\sigma}})Q_{\sigma}=\hat{Q}\qquad\forall\sigma=1,\dots,n!\ ,
\end{equation}
which we wanted to show.    
\end{proof}

\subsection{Example -- universal quiver for permutohedron \texorpdfstring{$\Pi_2$}{Pi2} for knot \texorpdfstring{$4_1$}{41}}

Let us show how the construction from Proposition~\ref{prp:splitting by unlinking}, Lemma~\ref{lma:erasing row/column}, and Theorem~\ref{thm:Universal quiver for splitting} works in the case of permutohedron $\Pi_2$ formed by the quivers corresponding to the knot~$4_1$. We have $n=n!=2$:
\begin{equation}
    \{Q_{\sigma}\}_{\sigma=1,\dots,n!}=\{Q_1,Q_2\}=\{(C_1,\boldsymbol{x_1}),(C_2,\boldsymbol{x_2})\}
\end{equation}
where the adjacency matrices and generating parameters are given by \cite{JKLNS2105,EKL1910}:

\begin{equation}\label{eq:figure-eight quivers}
\begin{split}
    C_1 & =\left[\begin{array}{ccccc}
    0  &- 1 & -1 &  0 &  0\\
    -1 & -2 & -2 & -1 &  \mathbf{-1}\\
    -1 & -2 & -1 &   \mathbf{0} &  0\\
    0  & -1 &   0 &  1 &  1\\
    0  &  -1 &  0 &  1 &  2
    \end{array}\right]\,, \qquad  \qquad 
     C_2=\left[\begin{array}{ccccc}
    0  &- 1 & -1 &  0 &  0\\
    -1 & -2 & -2 & -1 &  \mathbf{0}\\
    -1 & -2 & -1 &  \mathbf{-1} & 0\\
    0  & -1 &  -1 &  1 & 1\\
    0  &   0 &  0 &  1 & 2
    \end{array}\right] \,,\\
    \boldsymbol{x_1} & = \left[ 1, a^{-2} q^2 (-t)^{-2}, q^{-1} (-t)^{-1}, q (-t), a^2 q^{-2} (-t)^2 \right]
     = \boldsymbol{x_2}  \,.
\end{split}
\end{equation}

These quivers are related by a transposition $C_{25}\leftrightarrow C_{34}$ and satisfy conditions (\ref{eq:center of mass contition}-\ref{eq:theorem second case}). They arise from the~prequiver $\check{Q}=(\check{C},\boldsymbol{\check{x}})$ given by \cite{JKLNS2105}
\begin{equation}
\begin{split}
    \label{eq:figure-eight prequiver}
    \check{C} & =\left[\begin{array}{ccc}
    0  & -1 & 0 \\
    -1  & -2 & -1 \\
    0  & -1 & 1
    \end{array}\right], \\
    \boldsymbol{\check{x}} & =
   [ 1 , a^{-2} q^2 (-t)^{-2}, q (-t) ] \,.
\end{split}
\end{equation}
as a~result of $(0,1)$-splitting of nodes number 2 and 3, with $h_1=0$ and $\kappa=-a^2q^{-3}t$ (see Definition \ref{def:splitting}).

Following the procedure from the proof of the Proposition~\ref{prp:splitting by unlinking}, we consider a~quiver $\check{Q'}=(\check{C}',\boldsymbol{\check{x}'})$ which comes from adding one extra row/column given by \eqref{eq:checkQ' definition} to the prequiver:
\begin{equation}\label{eq:Enlarged prequiver 4_1}
\begin{split}
    \check{C}'&=\left[\begin{array}{ccc:c}
     &  &  & h_{1}\\
     & \check{C} &  & k+1\\
     &  &  & k+1\\
     \hdashline
    h_{1} & k+1 & k+1 & l-2k-1
    \end{array}\right]=\left[\begin{array}{ccc:c}
    0 & -1 & 0 & 0\\
    -1 & -2 & -1 & 1\\
    0 & -1 & 1 & 1\\
    \hdashline
    0 & 1 & 1 & 0
    \end{array}\right]\ ,\\
    \boldsymbol{\check{x}'} & = [\boldsymbol{\check{x}}, q\kappa] = 
   [ 1 , a^{-2} q^2 (-t)^{-2}, q (-t) \dashline a^2q^{-2}(-t)] \,.
\end{split}
\end{equation}
We apply unlinkings $U(i\iota)=U(24)$ and $U(j\iota)=U(34)$ in two different orders and obtain
\begin{equation}
U(34)U(24)\check{C}'=
\left[\begin{array}{ccccc:c}
    0  &- 1 & -1 &  0 &  0 &  0\\
    -1 & -2 & -2 & -1 &  -1 &  0\\
    -1 & -2 & -1 &   0 &  0 &  0\\
    0  & -1 &   0 &  1 &  1 &  0\\
    0  &  -1 &  0 &  1 &  2 &  0\\
    \hdashline
    0  & 0 & 0 &  0 &  0 &  0
    \end{array}\right]
    =C'_{1}
\end{equation}
and
\begin{equation}
U(24)U(34)\check{C}'=
\left[\begin{array}{ccccc:c}
    0  &- 1 & -1 &  0 &  0 &  0\\
    -1 & -2 & -2 & -1 &  0 &  0\\
    -1 & -2 & -1 &  -1 &  0 &  0\\
    0  & -1 &  -1 &  1 &  1 &  0\\
    0  &  0 &  0 &  1 &  2 &  0\\
    \hdashline
    0  & 0 & 0 &  0 &  0 &  0
    \end{array}\right]
    =C'_{2}\,,
\end{equation}
where $C'_{1}$ and $C'_{2}$ arise from $C_{1}$ and $C_{2}$ by adding an extra row/column given by (\ref{eq:extra row/column}).  In both cases the generating parameters are given by\footnote{Note that we adjusted the ordering to be able to compare two cases easily. 
}
\begin{equation}
    \boldsymbol{x'}  = \left[ 1, a^{-2} q^2 (-t)^{-2}, q^{-1} (-t)^{-1}, q (-t), a^2 q^{-2} (-t)^2 \dashline a^2q^{-2}(-t) \right] \,.
\end{equation}
From the Connector Theorem we know that there exist finite sequences of unlinking $\boldsymbol{U}(\boldsymbol{k_{1}}\boldsymbol{l_{1}}),\boldsymbol{U}(\boldsymbol{k_{2}}\boldsymbol{l_{2}})$ such that 
\begin{equation}
\begin{split}\boldsymbol{U}(\boldsymbol{k_{1}}\boldsymbol{l_{1}})U(24)U(34)\check{Q'} & = \boldsymbol{U}(\boldsymbol{k_{2}}\boldsymbol{l_{2}})U(34)U(24)\check{Q'}\,,\\
\boldsymbol{U}(\boldsymbol{k_{1}}\boldsymbol{l_{1}})Q'_{1} & = \boldsymbol{U}(\boldsymbol{k_{2}}\boldsymbol{l_{2}})Q'_{2}\,.
\end{split}
\end{equation}
In this case they are given by\footnote{We have to be careful about the numbering of nodes. Since we put new nodes in between the old, the numbering changes and node $(24)$ becomes $3$, $3$ becomes $4$, and $(34)$ becomes $5$.} $\boldsymbol{U}(\boldsymbol{k_{1}}\boldsymbol{l_{1}})=U(2(34))$ and $\boldsymbol{U}(\boldsymbol{k_{2}}\boldsymbol{l_{2}})=U(3(24))$ and they lead to
\begin{equation}
\label{eq:41 universal}
\begin{split}
    \hat{Q}' & = U(2(34))U(24)U(34)\check{Q'}=  U(3(24))U(34)U(24)\check{Q'}\\
    & = U(2(34))Q'_{1} = U(3(24))Q'_{2} \ ,\\
    \hat{C}' & = \left[\begin{array}{cccccc:c}
        0  &- 1 & -1 &  0 &  0 & -1 & 0\\
        -1 & -2 & -2 & -1 &  -1 & -3 & 0\\
        -1 & -2 & -1 &  -1 &  0 & -2 & 0\\
        0  & -1 &  -1 &  1 &  1 & 0 & 0\\
        0  &  -1 &  0 &  1 &  2 & 1 & 0\\
        -1 & -3 & -2 & 0 & 1 & -1 & 0 \\
        \hdashline
        0  & 0 & 0 &  0 &  0 &  0 & 0
        \end{array}\right]\ , \\
    \boldsymbol{\hat{x}'} & = \left[ 1, a^{-2} q^2 (-t)^{-2}, q^{-1} (-t)^{-1}, q (-t), a^2 q^{-2} (-t)^2, 1 \dashline a^2q^{-3}(-t) \right] \,.  
\end{split}
\end{equation}
%\pk{Dima, Helder - can you please cross-check this computation?} \hd{Sure! Confirmed.}\pk{Thanks!}

Using Lemma~\ref{lma:erasing row/column} we can erase the extra row/column and obtain the universal quiver $\hat{Q}=(\hat{C},\boldsymbol{\hat{x}})$:\footnote{Here we have the same subtlety with indices.}
\begin{equation}
\begin{split}
    \hat{C} & = \left[\begin{array}{cccccc}
        0  &- 1 & -1 &  0 &  0 & -1 \\
        -1 & -2 & -2 & -1 &  -1 & -3 \\
        -1 & -2 & -1 &  -1 &  0 & -2 \\
        0  & -1 &  -1 &  1 &  1 & 0 \\
        0  &  -1 &  0 &  1 &  2 & 1 \\
        -1 & -3 & -2 & 0 & 1 & -1 
        \end{array}\right] \ , \\
    \boldsymbol{\hat{x}} & = \left[ 1, a^{-2} q^2 (-t)^{-2}, q^{-1} (-t)^{-1}, q (-t), a^2 q^{-2} (-t)^2, 1 \right] \,.
\end{split}
\end{equation}
We can cross-check that it is related to the initial quivers forming permutohedron $\Pi_2$  by unlinkings  $\boldsymbol{U}(\boldsymbol{k_{1}}\boldsymbol{l_{1}})$ and $\boldsymbol{U}(\boldsymbol{k_{2}}\boldsymbol{l_{2}})$:\footnote{Comparison with \eqref{eq:41 universal} requires adaptation of the numbering of nodes described in the previous footnote.}
\begin{equation}
\begin{split}
    \hat{Q}&=\boldsymbol{U}(\boldsymbol{k_{1}}\boldsymbol{l_{1}})Q_{1} = U(25)Q_{1}\\
    &= \boldsymbol{U}(\boldsymbol{k_{2}}\boldsymbol{l_{2}})Q_{2} = U(34)Q_{2} \,.
\end{split}
\end{equation}

Note that permutohedron $\Pi_2$ discussed in this section can be viewed as a~special case of permutohedron  $\Pi_2$  from Sec.~\ref{sec:Example - Pi2}. Quiver $Q$, which is arbitrary there, here is given by  $\check{Q'}=(\check{C}',\boldsymbol{\check{x}'})$ from \eqref{eq:Enlarged prequiver 4_1}.  Three vertices that we chose are $\iota =4$, $j_1=2$, $j_2=3$, so $Q_{(1,2)}=U(\iota j_2)U(\iota j_1)Q$ is given by $Q'_{2}=U(34)U(24)\check{Q'}$ here, and analogously $Q_{(2,1)}=U(\iota j_1)U(\iota j_2)Q$ is given by $Q'_{1}=U(24)U(34)\check{Q'}$. 

Consequently, the universal quiver with extra row/column (and corresponding unlinkings) discussed here can be viewed as a~special case of the universal quiver for permutohedron $\Pi_2$ from Sec.~\ref{sec:Contracting permutohedron Pi3}. More precisely, $\hat{Q} =U(j_{2}(\iota j_{1}))Q_{(1,2)} = U((j_{2}\iota )j_{1})Q_{(2,1)}$ is given by 
$\hat{Q}' = U(3(24))Q'_{2} = U(2(34))Q'_{1}$ here.

The relation between quivers from this section and those from  Sections~\ref{sec:Example - Pi2} and ~\ref{sec:Contracting permutohedron Pi3} is a~reflection of Remark~\ref{rmk:generalisation to other permutohedra}.
Namely, we could have chosen different entries in the extra row/column (specifically, we could have $\check{C}'_{24}\neq \check{C}'_{34}$) and still find the universal quiver, as guaranteed by the Permutohedron Theorem. However, in that case the quivers would no longer come from splitting and correspond to the $4_1$ knot.

\subsection{Universal quivers for permutohedra graphs associated to knots}

In this section we will combine the results of \cite{JKLNS2105}, Theorem~\ref{thm:Universal quiver for splitting}, and the Permutohedra Graph Theorem to show the following statement:

\begin{thm}\label{thm:Universal quiver for knot}
Consider quiver $Q_K$ which corresponds to the knot $K$.  If  quivers  $\{Q_{\sigma}\}_{\sigma=1,\dots,m}$ are related to $Q_K$ by a~sequence of disjoint transpositions (\ref{eq:transposition}) and satisfy conditions (\ref{eq:center of mass contition}-\ref{eq:theorem second case}), then there exists a~universal quiver $\hat{Q}$ for $\{Q_K, Q_{1}, Q_{2},\dots,Q_{m}\}$.
\end{thm}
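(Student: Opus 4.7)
The plan is to assemble three tools already proven in the paper: the translation of splitting into unlinking (Proposition~\ref{prp:splitting by unlinking}), the existence of a universal quiver for each splitting-permutohedron (Theorem~\ref{thm:Universal quiver for splitting}), and the gluing procedure of the Permutohedra Graph Theorem (Theorem~\ref{thm:Permutohedra Graph Theorem}). First, I would invoke the results of \cite{JKLNS2105} recalled in Sec.~\ref{sec:KQ correspondence and permutohedra} and App.~\ref{sec:Statements from the permutohedra paper}: the hypothesis that $Q_K$ and $\{Q_\sigma\}_{\sigma=1,\dots,m}$ are related by disjoint transpositions satisfying conditions (\ref{eq:center of mass contition}-\ref{eq:theorem second case}) guarantees that these quivers assemble into a permutohedra graph whose individual permutohedra each arise from a splitting of some prequiver.

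Next, for each permutohedron $\Pi_{n_v}$ in this graph I would use Proposition~\ref{prp:splitting by unlinking} to realise its splitting as a sequence of unlinkings $U(s+\sigma(n_v),\iota_v)\dots U(s+\sigma(1),\iota_v)$ applied to an enlarged prequiver $\check{Q}'_v$ obtained by appending one extra row/column $\iota_v$ as in \eqref{eq:checkQ' definition}. This places each $\Pi_{n_v}$ precisely in the framework of Definition~\ref{dfn:Permutohedron}, so that Theorem~\ref{thm:Universal quiver for splitting} supplies a universal quiver $\hat{Q}_v$ (with the extra row/column already erased via Lemma~\ref{lma:erasing row/column}) for the quivers populating $\Pi_{n_v}$.

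The last step is to perform the graph-level gluing. Since the permutohedra in the graph are joined at common quivers, I would iterate the argument from the proof of the Permutohedra Graph Theorem: for any two neighbouring permutohedra $\Pi_{n_v}, \Pi_{n_w}$ sharing a quiver $Q_{v|w}$, the Connector Theorem applied to the unlinking sequences leading from $Q_{v|w}$ to $\hat{Q}_v$ and $\hat{Q}_w$ produces a common universal quiver $\hat{Q}_{v|w}$. Iterating this through the tree underlying the graph yields a single universal quiver $\hat{Q} = \hat{Q}_{1|\dots|p}$ for $\{Q_K,Q_1,\dots,Q_m\}$.

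The main obstacle I expect lies in the compatibility of the auxiliary rows/columns introduced by different splittings: each permutohedron $\Pi_{n_v}$ comes with its own node $\iota_v$ (with parameter $q\kappa_v$) and its own enlarged prequiver $\check{Q}'_v$, and at a gluing quiver the two enlargements coming from the neighbouring splittings need not coincide. The remedy is to perform all connector constructions upstairs at the level of enlarged quivers $\hat{Q}'_v$, where the extra rows/columns are passive throughout every unlinking sequence, and only at the very end invoke Lemma~\ref{lma:erasing row/column} to strip them off simultaneously. This ensures that the commuting diagrams used in Lemma~\ref{lma:erasing row/column} hold along every edge of the iterated gluing, so that the final unlinking sequences $\boldsymbol{U}(\boldsymbol{i_\sigma}\boldsymbol{j_\sigma})$ descend unchanged to the original quivers and produce the desired $\hat{Q}$.
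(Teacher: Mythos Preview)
Your three-step plan is essentially the same as the paper's proof: invoke \cite{JKLNS2105} to get the permutohedra graph structure with each permutohedron coming from a splitting, apply Theorem~\ref{thm:Universal quiver for splitting} to each permutohedron, and then glue via the Connector Theorem as in the proof of Theorem~\ref{thm:Permutohedra Graph Theorem}.

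Two remarks, however. First, the ``main obstacle'' you anticipate is not one. Theorem~\ref{thm:Universal quiver for splitting} already hands back, for each permutohedron, unlinking sequences $\boldsymbol{U}(\boldsymbol{i_\sigma}\boldsymbol{j_\sigma})$ acting on the \emph{original} quivers $Q_\sigma$ (the extra row/column $\iota_v$ has been erased via Lemma~\ref{lma:erasing row/column} inside that proof). Consequently, at a gluing quiver $Q_{v|w}$ the two sequences leading to $\hat{Q}_v$ and $\hat{Q}_w$ act on the same object, and the Connector Theorem applies directly with no auxiliary nodes present. Your proposed remedy of carrying all the $\iota_v$ simultaneously and stripping only at the end is unnecessary and would in fact introduce the very compatibility problem you are trying to avoid (at a gluing point you would have to append \emph{both} extra columns in a mutually consistent way, which is not addressed).

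Second, you skip a small but genuine step that the paper makes explicit: one first enlarges $\{Q_K,Q_1,\dots,Q_m\}$ to the \emph{maximal} set of quivers related to $Q_K$ by such transpositions, so that the result of \cite{JKLNS2105} guarantees a \emph{connected} permutohedra graph. The universal quiver for this enlarged set is then automatically universal for the original subset. Without this step you have not justified that the permutohedra formed by your given quivers are linked into a single connected graph, which is what the iterative gluing requires.
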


\begin{proof}
From \cite{JKLNS2105} we know that $\{Q_{\sigma}\}_{\sigma=1,\dots,m}$ correspond to the same knot $K$. Without loss of generality, we assume that the set $\{Q_K, Q_{1}, Q_{2},\dots,Q_{m}\}$ contains all quivers which correspond to the same knot and satisfy conditions (\ref{eq:center of mass contition}-\ref{eq:theorem second case}). If this is not the case, we can apply the procedure described in \cite{JKLNS2105}, find all of them, and add to the set of quivers. When we show the existence of the universal quiver for this enlarged set, it will automatically prove the statement for the initial set.

From \cite{JKLNS2105} we know that $\{Q_K, Q_{1}, Q_{2},\dots,Q_{m}\}$ can be bijectively mapped to a~graph composed of several permutohedra glued together and each permutohedron is an image of quivers arising from $(k,l)$-splitting of $n$ nodes of $\check{Q}$ in the presence of $s$ spectators with corresponding shifts $h_{1},\dots,h_{s}$ and multiplicative factor $\kappa$ .  Theorem~\ref{thm:Universal quiver for splitting} assures that for these quivers (that come from one kind of splitting and are mapped to one permutohedron) there exists a~universal quiver.

The last remaining step is the argument from the proof of the Permutohedra Graph Theorem. Successively taking the quivers which are gluing points of different permutohedra and applying the Connector Theorem we eventually find the universal quiver $\hat{Q}$ and sequences of unlinking $\boldsymbol{U}(\boldsymbol{i_{\sigma}}\boldsymbol{j_{\sigma}})$ such that 
\begin{equation}
\hat{Q}=\boldsymbol{U}(\boldsymbol{i_{\sigma}}\boldsymbol{j_{\sigma}})Q_{\sigma}\qquad\forall\sigma=K,1,\dots,m\ ,
\end{equation}
which we wanted to show.    
\end{proof}

\begin{rmk}\label{rmk:non-uniqueness of universal quiver}
We remark that in general the universal quiver is not unique. For example, in the context of the knots-quivers correspondence, it is unique for $4_1$, $5_1$, but not for $7_1$.
There are two main sources of this non-uniqueness: the first one is the choice of sequences of unlinking from the gluing point to the universal quivers for each permutohedron (e.g. the choice of $\boldsymbol{U}(\boldsymbol{k_{1|2}}\boldsymbol{l_{1|2}})$ and $\boldsymbol{U}(\boldsymbol{m_{1|2}}\boldsymbol{n_{1|2}})$ in \eqref{eq:gluing}).
Different choices of such connecting sequences may lead to different resulting quivers. On the other hand, after a~pair of sequences is chosen, we need to connect them via the Connector Algorithm from Sec.~\ref{sec:connector}. While doing so, every time we encounter a~repeated index in $\{U(ij),U(jk)\}$, we must apply either hexagon (\ref{eq:first hexagon}) or hexagon (\ref{eq:second hexagon}), and either choice is valid at each step of calculation. This leads to a~multitude of universal quivers already for knot $7_1$.\footnote{Note the complexity of the permutohedra graph for $7_1$ in comparison with relatively simple graphs for $4_1$ and $5_1$, where computation of the universal quiver does not require application of the hexagon identity.} It is an~ interesting problem to determine the structure of all universal quivers for a given set of quivers (in particular, determine which of them are minimal), and we leave it for future investigation.
\end{rmk}

\subsection{Example -- universal quiver for knot \texorpdfstring{$5_1$}{51}}
%\section{Example of $5_1$ permutohedra graph.}

From \cite{JKLNS2105} we know that the following three quivers form
a permutohedra graph made of two $\Pi_{2}$ glued together:
\begin{align}\label{eq:5_1 three quivers}
    C_{1} &=\left[\begin{array}{ccccc}
    0 & 1 & 1 & 3 & 2\\
    1 & 2 & 3 & 3 & 3\\
    1 & 3 & 3 & 4 & 4\\
    3 & 3 & 4 & 4 & 4\\
    2 & 3 & 4 & 4 & 5
    \end{array}\right],\qquad C_{1|2}=\left[\begin{array}{ccccc}
    0 & 1 & 1 & 3 & 3\\
    1 & 2 & 2 & 3 & 3\\
    1 & 2 & 3 & 4 & 4\\
    3 & 3 & 4 & 4 & 4\\
    3 & 3 & 4 & 4 & 5
    \end{array}\right],\qquad C_{2}=\left[\begin{array}{ccccc}
    0 & 1 & 1 & 3 & 3\\
    1 & 2 & 2 & 3 & 4\\
    1 & 2 & 3 & 3 & 4\\
    3 & 3 & 3 & 4 & 4\\
    3 & 4 & 4 & 4 & 5
    \end{array}\right]\,,\nonumber\\
        \boldsymbol{x_1} & = \boldsymbol{x_{1|2}} = \boldsymbol{x_2} = \left[ x a^4 q^{−4},
    x a^4 q^{−2} (−t)^2,
    x a^6 q^{−5} (−t)^3,
    x a^4 (−t)^{4},
    x a^6 q^{−3} (−t)^5 \right] \,.
\end{align}
Our goal is to give a~detailed analysis of this permutohedra graph, using all the tools we have obtained so far.
We know that quivers  $Q_1=(C_1,\boldsymbol{x_1})$ and $Q_{1|2}=(C_{1|2},\boldsymbol{x_{1|2}})$  arise from  $(1,3)$-splitting of nodes number 1 and 2 in the prequiver
\begin{equation}
\check{C}_{1}=\left[\begin{array}{ccc}
0 & 1 & 3 \\
1 & 2 & 3 \\
3 & 3 & 4 \\
\end{array}\right] \,, \qquad
\boldsymbol{\check{x}_1} = \left[ x a^4 q^{−4},
    x a^4 q^{−2} (−t)^2,
    x a^4 (−t)^{4}\right] \,,
\end{equation}
where node number 3 is a~spectator with $h_3 = 1$ and the multiplicative factor is given by $\kappa = −a^2 q^{−1} t^3$ \cite{JKLNS2105}.  Similarly,  quivers  $Q_{1|2}=(C_{1|2},\boldsymbol{x_{1|2}})$   and   $Q_2=(C_2,\boldsymbol{x_2})$ arise from  $(0,1)$-splitting of nodes number 2 and 3 in the prequiver
\begin{equation}
\check{C}_{2}=\left[\begin{array}{ccc}
0 & 1 & 3 \\
1 & 2 & 3 \\
3 & 3 & 4 \\
\end{array}\right] \,, \qquad
\boldsymbol{\check{x}_2} = \left[ x a^4 q^{−4},
    x a^4 q^{−2} (−t)^2,
    x a^4 (−t)^{4}\right] \,,
\end{equation}
where node number 1 is a~spectator with $h_1 = 0$ and the multiplicative factor is given by $\kappa = −a^2 q^{−3} t$ \cite{JKLNS2105}.

Following the proof of Proposition~\ref{prp:splitting by unlinking}, we can realise above splittings by unlinking of the following enlarged prequivers:
\begin{align}
    \check{C}'_{1} &=\left[\begin{array}{ccc:c}
    0 & 1 & 3 & 2\\
    1 & 2 & 3 & 2\\
    3 & 3 & 4 & 1\\
    \hdashline
    2 & 2 & 1 & 0
    \end{array}\right], & 
    \check{C}'_{2} &=\left[\begin{array}{ccc:c}
    0 & 1 & 3 & 0\\
    1 & 2 & 3 & 1\\
    3 & 3 & 4 & 1\\
    \hdashline
    0 & 1 & 1 & 0
    \end{array}\right] \,, \\
    \boldsymbol{\check{x}'_1} &= \left[ x a^4 q^{−4},
    x a^4 q^{−2} t^2,
    x a^4 t^{4} \dashline −a^2 t^3\right] \,, & \boldsymbol{\check{x}'_2} &= \left[ x a^4 q^{−4},
    x a^4 q^{−2} t^2,
    x a^4 t^{4} \dashline −a^2 q^{−2} t\right]\,. \nonumber
\end{align}

In order to avoid confusion, we will denote the extra row/column of $\check{C}'_{1}$ by $\iota_1$ and the extra row/column of $\check{C}'_{2}$ by $\iota_2$ (in both cases it refers to the node number 4). Using these variables, we can write

\begin{align}\label{eq:5_1 quivers with prime}
    U(\iota_1 2)U(1 \iota_1)\check{C}'_{1} = &\ 
    \left[
\begin{array}{cccccc}
 0 & 1 & 3 & 1 & 1 & 2 \\
 1 & 2 & 3 & 1 & 3 & 3 \\
 3 & 3 & 4 & 1 & 4 & 4 \\
 1 & 1 & 1 & 0 & 1 & 1 \\
 1 & 3 & 4 & 1 & 3 & 4 \\
 2 & 3 & 4 & 1 & 4 & 5 \\
\end{array}
\right],
\quad
U(1 \iota_1)U(\iota_1 2)\check{C}'_{1} =
\left[
\begin{array}{cccccc}
 0 & 1 & 3 & 1 & 3 & 1 \\
 1 & 2 & 3 & 1 & 3 & 2 \\
 3 & 3 & 4 & 1 & 4 & 4 \\
 1 & 1 & 1 & 0 & 1 & 1 \\
 3 & 3 & 4 & 1 & 5 & 4 \\
 1 & 2 & 4 & 1 & 4 & 3 \\
\end{array}
\right]
\\ \label{eq:5_1 quivers with prime 2}
U(\iota_2 3)U(2 \iota_2)\check{C}'_{2} = &\
\left[
\begin{array}{cccccc}
 0 & 1 & 3 & 0 & 1 & 3 \\
 1 & 2 & 3 & 0 & 2 & 3 \\
 3 & 3 & 4 & 0 & 4 & 4 \\
 0 & 0 & 0 & 0 & 0 & 0 \\
 1 & 2 & 4 & 0 & 3 & 4 \\
 3 & 3 & 4 & 0 & 4 & 5 \\
\end{array}\right],
\quad
U(2 \iota_2)U(\iota_2 3)\check{C}'_{2} =
\left[
\begin{array}{cccccc}
 0 & 1 & 3 & 0 & 3 & 1 \\
 1 & 2 & 3 & 0 & 4 & 2 \\
 3 & 3 & 4 & 0 & 4 & 3 \\
 0 & 0 & 0 & 0 & 0 & 0 \\
 3 & 4 & 4 & 0 & 5 & 4 \\
 1 & 2 & 3 & 0 & 4 & 3 \\
\end{array}
\right].
\end{align}
In order to obtain original quivers for knot $5_1$, given in (\ref{eq:5_1 three quivers}), from (\ref{eq:5_1 quivers with prime}) and (\ref{eq:5_1 quivers with prime 2}), we need to remove the fourth row and column in each augmented quiver. However, with this extra row and column it is easy to see why each permutohedron $\Pi_2$ has a~universal quiver. Let us switch to the diagrammatic view of (\ref{eq:5_1 quivers with prime}) and (\ref{eq:5_1 quivers with prime 2}):
\[\begin{tikzcd}
	& \bullet && {\bullet\ \bullet} && \bullet \\
	\bullet && \bullet && \bullet && \bullet \\
	& {\check{C}_1'} &&&& {\check{C}'_2}
	\arrow["{2\iota_2}"', from=2-7, to=1-6]
	\arrow["{\iota_12}", from=3-2, to=2-3]
	\arrow["{1\iota_1}"', from=3-2, to=2-1]
	\arrow["{2\iota_2}"', from=3-6, to=2-5]
	\arrow["{\iota_23}", from=3-6, to=2-7]
	\arrow["{1\iota_1}", from=2-3, to=1-4]
	\arrow["{\iota_23}"', from=2-5, to=1-4]
	\arrow["{\iota_12}", from=2-1, to=1-2]
\end{tikzcd}\]
Evidently, this picture can be closed-up by applying hexagon relation (\ref{eq:first hexagon}) for the left and right
permutohedron independently:

\begin{equation}\label{eq:5_1 diagram with primes}
\begin{tikzcd}
	&& {\hat{C}_1'} && {\hat{C}_2'} \\
	& \bullet && {\bullet\ \bullet} && \bullet \\
	\bullet && \bullet && \bullet && \bullet \\
	& {\check{C}_1'} &&&& {\check{C}'_2}
	\arrow["{2\iota_2}"', from=3-7, to=2-6]
	\arrow["{\iota_12}", from=4-2, to=3-3]
	\arrow["{1\iota_1}"', from=4-2, to=3-1]
	\arrow["{2\iota_2}"', from=4-6, to=3-5]
	\arrow["{\iota_23}", from=4-6, to=3-7]
	\arrow["{1\iota_1}", from=3-3, to=2-4]
	\arrow["{\iota_23}"', from=3-5, to=2-4]
	\arrow["{\iota_12}", from=3-1, to=2-2]
	\arrow["{1(\iota_12)}", from=2-4, to=1-3]
	\arrow["{(1\iota_1)2}", from=2-2, to=1-3]
	\arrow["{(2\iota_2)3}"', from=2-4, to=1-5]
	\arrow["{2(\iota_23)}"', from=2-6, to=1-5]
\end{tikzcd}\end{equation}

\noindent where %\dn{Add change of variables.}

\begin{equation}
    \hat{C}_1' = \left[\begin{array}{ccccccc}
 0 & 1 & 3 & 1 & 1 & 2 & 2 \\
 1 & 2 & 3 & 1 & 2 & 3 & 4 \\
 3 & 3 & 4 & 1 & 4 & 4 & 7 \\
 1 & 1 & 1 & 0 & 1 & 1 & 2 \\
 1 & 2 & 4 & 1 & 3 & 4 & 5 \\
 2 & 3 & 4 & 1 & 4 & 5 & 7 \\
 2 & 4 & 7 & 2 & 5 & 7 & 10 \\
\end{array}\right]\,,
\qquad
\hat{C}_2' = \left[\begin{array}{ccccccc}
 0 & 1 & 3 & 0 & 3 & 1 & 4 \\
 1 & 2 & 3 & 0 & 3 & 2 & 5 \\
 3 & 3 & 4 & 0 & 4 & 3 & 7 \\
 0 & 0 & 0 & 0 & 0 & 0 & 0 \\
 3 & 3 & 4 & 0 & 5 & 4 & 8 \\
 1 & 2 & 3 & 0 & 4 & 3 & 6 \\
 4 & 5 & 7 & 0 & 8 & 6 & 14 \\
\end{array}\right]\,.
\end{equation}

In other words, it amounts to application of Permutohedron Theorem~\ref{thm:Permutohedron Theorem} two times, one for each permutohedron $\Pi_2$.
The two quivers on top ($\hat{C}_1'$ and $\hat{C}_2'$) are universal
for each permutohedron $\Pi_2$, and are obtained from unlinking of $\check{C}_1'$ and $\check{C}_2'$, respectively.
(The ordering of nodes in $\hat{C}_1'$ and $\hat{C}_2'$ given above, corresponds to the leftmost and rightmost paths in diagram
(\ref{eq:5_1 diagram with primes}).)
However, due to the presence of the extra row and column we cannot immediately apply the universal quiver theorem for the full permutohedra graph. Instead, we first truncate all quivers by removing $\iota_1$ and $\iota_2$, or equivalently the fourth row and column in (\ref{eq:5_1 quivers with prime}) and (\ref{eq:5_1 quivers with prime 2}), and apply Lemma 
\ref{lma:erasing row/column}. As a~result, the two middle quivers merge into one, and the remaining part of the diagram is

\[
\begin{tikzcd}
	& {\hat{C}_1} && {\hat{C}_2} \\
	{C_1} && {C_{1|2}} && {C_2}
	\arrow["{(1\iota_1)2}", from=2-1, to=1-2]
	\arrow["{1(\iota_12)}", from=2-3, to=1-2]
	\arrow["{(2\iota_2)3}"', from=2-3, to=1-4]
	\arrow["{2(\iota_23)}"', from=2-5, to=1-4]
\end{tikzcd}
\]
\begin{equation}
\hat{C}_{1}=\left[\begin{array}{cccccc}
0 & 1 & 1 & 3 & 2 & 2\\
1 & 2 & 2 & 3 & 3 & 4\\
1 & 2 & 3 & 3 & 4 & 5\\
3 & 3 & 3 & 4 & 4 & 7\\
2 & 3 & 4 & 4 & 5 & 7\\
2 & 4 & 5 & 7 & 7 & 10
\end{array}\right],
\quad
\hat{C}_{2}=\left[\begin{array}{cccccc}
0 & 1 & 1 & 3 & 2 & 4\\
1 & 2 & 2 & 3 & 3 & 5\\
1 & 2 & 3 & 3 & 4 & 6\\
3 & 3 & 3 & 4 & 4 & 7\\
2 & 3 & 4 & 4 & 5 & 8\\
4 & 5 & 6 & 7 & 8 & 14
\end{array}\right].
\end{equation}
(In order to have a~visual comparison with (\ref{eq:5_1 three quivers}), we re-ordered the nodes according to the number of loops.) 

Importantly, we have ``lost'' the bottom part -- this is due to removal the quiver node which has been subject to unlinking in the previous graph. However, the upper part with two universal quivers survives, because no unlinking on the upper part take as an argument either $\iota_1$ or $\iota_2$. Therefore, having now this diagram and truncated quivers, we can easily connect the two quivers $\hat{C}_1$ and $\hat{C}_2$ (by applying Theorem~\ref{thm:Universal quiver for knot}) by choosing two paths from the gluing quiver $C_{1|2}$ and connecting them.
This choice is unique in this case and, after applying square identity (\ref{eq:square}), we get:

\begin{equation}\label{eq:diagram of 5_1 permutohedron and universal quiver}
\begin{tikzcd}
	&& {\hat{C}} \\
	& {\hat{C}_1} && {\hat{C}_2} \\
	{C_1} && {C_{1|2}} && {C_2}
	\arrow["{(1\iota_1)2}", from=3-1, to=2-2]
	\arrow["{1(\iota_12)}", from=3-3, to=2-2]
	\arrow["{(2\iota_2)3}"', from=3-3, to=2-4]
	\arrow["{2(\iota_23)}"', from=3-5, to=2-4]
	\arrow["{1(\iota_12)}"', from=2-4, to=1-3]
	\arrow["{(2\iota_2)3}", from=2-2, to=1-3]
\end{tikzcd}
\end{equation}
We thus learn that the universal quiver for (\ref{eq:5_1 three quivers}) is given by
\begin{equation}
\hat{C}=  \left[\begin{array}{ccccccc}
0 & 1 & 1 & 3 & 2 & 2 & 4\\
1 & 2 & 2 & 3 & 3 & 4 & 5\\
1 & 2 & 3 & 3 & 4 & 5 & 6\\
3 & 3 & 3 & 4 & 4 & 7 & 7\\
2 & 3 & 4 & 4 & 5 & 7 & 8\\
2 & 4 & 5 & 7 & 7 & 10 & 12\\
4 & 5 & 6 & 7 & 8 & 12 & 14
\end{array}\right]\,.
\end{equation}

We may ask if this permutohedra graph comes from a~construction similar to that of Proposition~\ref{prp:Permutohedra graph}, where we start from a~single quiver and apply unlinking to generate the whole permutohedra graph. Let us return to the diagram (\ref{eq:5_1 diagram with primes}) with two distinct quivers $\check{C}_1$ and $\check{C}_2$ which generate each permutohedron $\Pi_2$. It is interesting to see how this picture can be completed
so that the whole graph is obtained from a~single quiver at the bottom.

\begin{figure}[h!]
\centering
\begin{tikzcd}
	&&& \hat{C}'' \\
	&& \bullet && \bullet \\
	& C_1'' && C_{1|2}'' && C_2'' \\           % permutohedron level
	\bullet && \bullet && \bullet && \bullet \\
	& {\check{C}_1''} && \bullet && {\check{C}_2''} \\
	&& \bullet && \bullet \\
	&& \bullet && \bullet \\
	&&& C''
	\arrow["{(1\iota_1)2}", from=3-2, to=2-3]
	\arrow["", from=3-4, to=2-3]
	\arrow["", from=3-4, to=2-5]
	\arrow["{2(\iota_2 3)}"', from=3-6, to=2-5]
	\arrow["{1(\iota_12)}"', from=2-5, to=1-4]
	\arrow["{(2\iota_2)3}", from=2-3, to=1-4]
	\arrow["{1\iota_1}"', from=5-2, to=4-1]
	\arrow["{\iota_12}", from=4-1, to=3-2]
	\arrow["{\iota_12}", from=5-2, to=4-3]
	\arrow["{1\iota_1}"', from=4-3, to=3-4]
	\arrow["{2\iota_2}"', from=5-6, to=4-5]
	\arrow["{\iota_23}", from=4-5, to=3-4]
	\arrow["{\iota_23}", from=5-6, to=4-7]
	\arrow["{2\iota_2}"', from=4-7, to=3-6]
	\arrow["{1\iota_1}", from=5-4, to=4-5]
	\arrow["{\iota_23}"', from=5-4, to=4-3]
	\arrow["{\iota_12}", from=6-3, to=5-4]
	\arrow["{\iota_23}"', from=6-3, to=5-2]
	\arrow["{1\iota_1}", from=6-5, to=5-6]
	\arrow["{2\iota_2}"', from=6-5, to=5-4]
	\arrow["{2\iota_2}", from=8-4, to=7-3]
	\arrow["{\iota_1(2\iota_2)}", from=7-3, to=6-3]
	\arrow["{\iota_12}"', from=8-4, to=7-5]
	\arrow["{(\iota_12)\iota_2}"', from=7-5, to=6-5]
\end{tikzcd}
\caption{The complete unlinking structure for permutohedra graph for knot $5_1$.}
\label{fig:5_1 permutohedra full diagram}
\end{figure}

It turns out to be possible and the solution is presented 
in Fig.~\ref{fig:5_1 permutohedra full diagram}, where we start from a single quiver $Q''=(C'',\boldsymbol{x''})$  and apply unlinking to generate permutohedra graph consisting two $\Pi_2$ and the universal quiver. The structure of unlinkings presented in this diagram can be applied to any quiver $Q''$ containing nodes $\{1,2,3,\iota_1,\iota_2\}$ but for 
\begin{equation}
\begin{split}
    C'' &= \left[
    \begin{array}{ccccc}
     0 & 1 & 3 & 0 & 2 \\
     1 & 2 & 3 & 1 & 2 \\
     3 & 3 & 4 & 1 & 1 \\
     0 & 1 & 1 & 0 & \alpha \\
     2 & 2 & 1 & \alpha & 0 \\
    \end{array}
    \right] \,,   \\
    \boldsymbol{x''} &= [x_1,x_2,x_3,x_{\iota_1},x_{\iota_2}] \,,
\end{split}
\end{equation}
we can reproduce quivers corresponding to the knot $5_1$ for arbitrary $\alpha$. More precisely, after removing nodes $\iota_1,\iota_2,(2\iota_2),(\iota_1(2\iota_2)),(\iota_23)$ from $\check{C}_1''$, and $\iota_1,\iota_2,(\iota_12),((\iota_12)\iota_2),(1\iota_1)$ from $\check{C}_2''$, quivers from Fig.~\ref{fig:5_1 permutohedra full diagram} reproduce quivers from (\ref{eq:diagram of 5_1 permutohedron and universal quiver}).\footnote{As a~cross-check, note that $C''$ must have size at least 5. Then, unlinking it three times yields $\check{C}_1''$ and $\check{C}_2''$, which now have size 8. Therefore, the above mentioned truncation corresponds to the map $\check{C}_i''\mapsto \check{C}_i$ and gives the initial permutohedra graph (\ref{eq:5_1 three quivers}).} We emphasise that truncation procedure does not preserve the motivic generating series, but it does preserve the structure of unlinkings, as prescribed by Lemma~\ref{lma:erasing row/column}.

On one hand there are many similarities between Figs.~\ref{fig:two Pi2 s} and~\ref{fig:5_1 permutohedra full diagram}.
However, the shape of the diagram is slightly different because of the repeated index in the initial unlinkings $U(\iota_1 2)$ and $U(2\iota_2)$ applied to $C''$. Therefore, this case does not precisely fall into the construction of Proposition~\ref{prp:Permutohedra graph}, but nevertheless it is quite close to that. We thus believe that a~larger class of permutohedra graphs (in particular those coming from knots) arise from a~single quiver and multiple unlinking operations applied to this quiver.
We leave the investigation of such more general graphs, as well as interpretation of this ``minimal'' quiver, for a~future work.

\subsection{Example -- universal quiver for knot \texorpdfstring{$7_1$}{71}}

We finish this section with the analysis of the permutohedra graph for knot $7_1$. Although the computations are lengthier, the method still follows the proof of Theorem~\ref{thm:Universal quiver for knot}:
\begin{enumerate}
    \item We take the permutohedra graph for knot $7_1$ which was found in \cite{JKLNS2105}.
    \item Each permutohedron in this graph comes from splitting a prequiver, so first we use Theorem~\ref{thm:Universal quiver for splitting} to find a universal quiver for every permutohedron.
    \item We take the gluing points of different permutohedra  and sequences of unlinking leading to respective universal quivers and apply the Connector Algorithm.
    \item We repeat this procedure for every pair of permutohedra until we find a universal quiver of the knot $7_1$.
\end{enumerate}

The permutohedra graph is depicted on Fig.~\ref{fig:permutohedron 7_1} and consists of 13 quivers.
\begin{figure}[h!]
\centering
\input{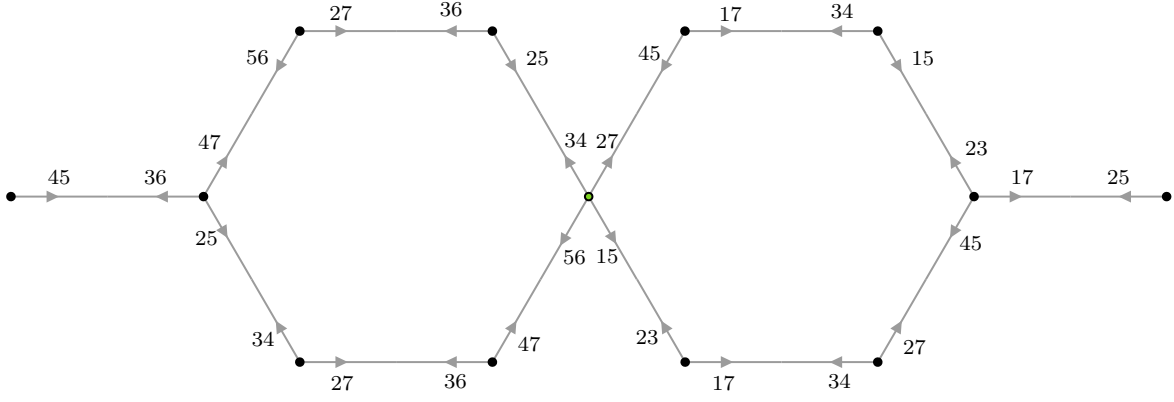}
\caption{Permutohedra graph for knot $7_1$, formed by two $\Pi_3$ glued at the center, and two outer $\Pi_2$ glued to each larger permutohedron.
We also show unlinking operations, which connect every two quivers
sharing the same edge of a permutohedron.}
\label{fig:permutohedron 7_1}
\end{figure}
The quiver 
\begin{equation}\label{eq:7_1Quiver}
\begin{split}
     C &=
    \left[
    \begin{array}{ccccccc}
         0 & 1 & 1 & 3 & 3 & 5 & 5 \\
         1 & 2 & 2 & 3 & 3 & 5 & 5 \\
         1 & 2 & 3 & 4 & 4 & 6 & 6 \\
         3 & 3 & 4 & 4 & 4 & 5 & 5 \\
         3 & 3 & 4 & 4 & 5 & 6 & 6 \\
         5 & 5 & 6 & 5 & 6 & 6 & 6 \\
         5 & 5 & 6 & 5 & 6 & 6 & 7 \\
    \end{array}
    \right], \\
    \boldsymbol{x} &= \left[a^6 q^{-6}, a^6 t^2 q^{-2}, a^8 t^3 q^{-4}, a^6 q^2 t^4, a^8 t^5, a^6 q^6 t^6, a^8 q^4 t^7\right],
\end{split}
\end{equation}
is the gluing node of two permutohedra $\Pi_3$, denoted as the central (green) node in Fig.~\ref{fig:permutohedron 7_1}. The prequivers 
\begin{align}
    \check{C}_1 &=
    \left[
    \begin{array}{cccc}
         0 & 1 & 3 & 5 \\
         1 & 2 & 3 & 5 \\
         3 & 3 & 4 & 5 \\
         5 & 5 & 5 & 6 \\
    \end{array}
    \right], &
    \check{C}_2 &=
    \left[
    \begin{array}{cccc}
         6 & 5 & 5 & 5 \\
         5 & 0 & 1 & 3 \\
         5 & 1 & 2 & 3 \\
         5 & 3 & 3 & 4 \\
    \end{array}
    \right], \\
    \boldsymbol{\check{x}}_1  & = \left[a^6q^{-6}, a^6 t^2q^{-4}, a^6 t^4q^{-2},a^6 t^6\right], &
    \boldsymbol{\check{x}}_2  & = \left[a^6 t^6, a^6q^{-6}, a^6 t^2 q^{-4}, a^6 t^4q^{-2}\right] \nonumber
\end{align}
correspond to the left and right $\Pi_3$ in Fig.~\ref{fig:permutohedron 7_1} respectively. To obtain quivers in the left $\Pi_3$, we perform a~$(0, 1)$-splitting of the three nodes with 2, 4 and 6 loops in $\check{C}_1$. Analogously, we get quivers in the right $\Pi_3$ by applying a~$(1, 3)$-splitting to the three nodes with 0, 2 and 4 loops in $\check{C}_2$. Additionally, there are two $\Pi_2$ glued to each of the $\Pi_3$ which also come from splitting, but we omit their prequivers for the sake of brevity. 

In the next step, we derive the universal quivers for each $\Pi_3$ following the steps from the proof of Theorem~\ref{thm:Universal quiver for splitting}. The resulting quivers are
\begin{equation}\label{eq:7_1 Left universal quiver}
    \begin{split}
        \hat{C}_1 = &
        \left[
        \begin{array}{ccccccc:ccccc}
             0 & 1 & 1 & 3 & 3 & 5 & 5 & 8 & 4 & 6 & 9 & 9 \\
             1 & 2 & 2 & 3 & 3 & 5 & 5 & 8 & 5 & 7 & 10 & 10 \\
             1 & 2 & 3 & 3 & 4 & 5 & 6 & 9 & 6 & 8 & 11 & 12 \\
             3 & 3 & 3 & 4 & 4 & 5 & 5 & 9 & 7 & 8 & 12 & 12 \\
             3 & 3 & 4 & 4 & 5 & 5 & 6 & 10 & 8 & 9 & 13 & 14 \\
             5 & 5 & 5 & 5 & 5 & 6 & 6 & 11 & 10 & 11 & 16 & 16 \\
             5 & 5 & 6 & 5 & 6 & 6 & 7 & 12 & 11 & 12 & 17 & 18 \\
             \hdashline
             8 & 8 & 9 & 9 & 10 & 11 & 12 & 22 & 19 & 21 & 30 & 31 \\
             4 & 5 & 6 & 7 & 8 & 10 & 11 & 19 & 14 & 17 & 24 & 25 \\
             6 & 7 & 8 & 8 & 9 & 11 & 12 & 21 & 17 & 20 & 28 & 29 \\
             9 & 10 & 11 & 12 & 13 & 16 & 17 & 30 & 24 & 28 & 41 & 41 \\
             9 & 10 & 12 & 12 & 14 & 16 & 18 & 31 & 25 & 29 & 41 & 44 \\
        \end{array}
        \right], \\
        \boldsymbol{\hat{x}}_1 = & \left[a^6q^{-6}, a^6 t^2q^{-2}, a^8 t^3q^{-4}, a^6 q^2 t^4, a^8 t^5, a^6 q^6 t^6, a^8 q^4 t^7 \dashline \right.\\
         & \left. a^{14} q^5 t^{11}, a^{14} t^7q^{-3}, a^{14} q t^9, a^{20} q^2 t^{13}, a^{22} t^{14}\right],
    \end{split}
\end{equation}
 for the left $\Pi_3$ and 
\begin{equation}\label{eq:7_1 Right universal quiver}
    \begin{split}
        \hat{C}_2 = &
        \left[
        \begin{array}{ccccccc:ccccc}
             0 & 1 & 1 & 3 & 2 & 5 & 4 & 2 & 5 & 4 & 5 & 6 \\
             1 & 2 & 2 & 3 & 3 & 5 & 4 & 4 & 6 & 5 & 7 & 8 \\
             1 & 2 & 3 & 4 & 4 & 6 & 6 & 5 & 8 & 7 & 9 & 11 \\
             3 & 3 & 4 & 4 & 4 & 5 & 5 & 7 & 8 & 8 & 11 & 12 \\
             2 & 3 & 4 & 4 & 5 & 6 & 6 & 7 & 9 & 8 & 11 & 13 \\
             5 & 5 & 6 & 5 & 6 & 6 & 6 & 11 & 11 & 11 & 16 & 17 \\
             4 & 4 & 6 & 5 & 6 & 6 & 7 & 10 & 11 & 11 & 15 & 17 \\
             \hdashline
             2 & 4 & 5 & 7 & 7 & 11 & 10 & 10 & 15 & 13 & 17 & 20 \\
             5 & 6 & 8 & 8 & 9 & 11 & 11 & 15 & 18 & 17 & 23 & 26 \\
             4 & 5 & 7 & 8 & 8 & 11 & 11 & 13 & 17 & 16 & 21 & 24 \\
             5 & 7 & 9 & 11 & 11 & 16 & 15 & 17 & 23 & 21 & 29 & 32 \\
             6 & 8 & 11 & 12 & 13 & 17 & 17 & 20 & 26 & 24 & 32 & 38 \\
        \end{array}
        \right], \\
        \boldsymbol{\hat{x}}_2 = & \left[a^6 q^{-6}, a^6 t^2 q^{-2}, a^8 t^3 q^{-4}, a^6 q^2 t^4, a^8 t^5, a^6 q^6 t^6, a^8 q^4 t^7 \dashline \right. \\
         & \left.a^{14} t^5 q^{-7}, a^{14} q t^9, a^{14} t^7 q^{-3}, a^{20} t^9 q^{-6}, a^{22} t^{12} q^{-4}\right],
    \end{split}
\end{equation}
for the right $\Pi_3$. 
The above universal quivers are represented as the yellow nodes centered on each $\Pi_3$ in Fig.~\ref{fig:Web unlinkings 7_1}. 

Moving to the next step, we select any two sequences of unlinking which transform $C$ into $\hat{C}_1$ and $\hat{C}_2$ respectively. As written in Remark~\ref{rmk:non-uniqueness of universal quiver}, different choices of such paths lead to different, but equivalent universal quivers. Our choice is presented in Fig.~\ref{fig:Web unlinkings 7_1} as two paths which start from the central gluing point and end at the center of the left and right permutohedron. Each path has length 5, so we need to construct the $(5,5)$-connector using the Connector Algorithm.
The resulting structure is presented in Fig.~\ref{fig:Sanctuary diagram 7_1}. The red node at the top of the structure represents the universal quiver for both permutohedra $\Pi_3$.
\newpage
\begin{figure}[h!]
\centering
\input{tikz/Pi_knot71.tikz}
\caption{Unlinkings of the permutohedra graph for $7_1$ with highlighted sequences that start at the center quiver (in green) and end at the universal quivers of each $\Pi_3$ (in yellow).}
\label{fig:Web unlinkings 7_1}
\end{figure}
\begin{figure}[h!]
\centering
\input{tikz/two_pi3.tikz}
\caption{Connecting the two universal quivers ($\hat{C}_1$, $\hat{C}_2$) for each permutohedron $\Pi_3$.}
\label{fig:Sanctuary diagram 7_1}
\end{figure}
\newpage
Finally, in order to obtain the universal quiver for the whole permutohedra graph, we need to take into account the two external $\Pi_2$. This is achieved by unlinking the top quiver in Fig.~\ref{fig:Sanctuary diagram 7_1} twice more 
--
once for the left $\Pi_2$ by applying $U(36)$ and once for the right by applying $U(17)$. The reason for this is the following.
Let us focus on the left half of the permutohedra graph, shown in Fig.~\ref{fig:knot71_pi_half}. We first have to connect the universal quiver for the left $\Pi_2$, which is given by applying $\boldsymbol{U_l}=U(36)$ to one of the quivers in~$\Pi_3$ (the red path).
On the other hand, starting from the same quiver and following the purple path $\boldsymbol{U_r}=U(2(47))U((25)7)U(27)U(25)U(47)$, we obtain universal quiver $\hat{C}_1$ for~$\Pi_3$. Thus, we first have to connect $\boldsymbol{U_r}$ and $\boldsymbol{U_l}$. Note that $U(36)$ commutes with every unlinking in $\boldsymbol{U_r}$, so the
$(1,5)$-connector is a simple application of square identities, and the resulting quiver is obtained from $\hat{C}_1$ by applying $U(36)$.
Analogously, on the right half of the permutohedra graph, we apply $U(17)$ to $\hat{C}_2$. Finally, notice that the two paths
in Fig.~\ref{fig:Sanctuary diagram 7_1} which start from $\hat{C}_1$, $\hat{C}_2$ and lead to the universal quiver for both $\Pi_3$, also do not contain unlinking of the nodes 1, 3, 6, 7. We thus conclude that the application of $U(36)U(17)$ (or $U(17)U(36)$, since they commute) connects the external legs and we obtain the universal quiver for the whole permutohedra graph for knot $7_1$:

\bigskip

\begin{equation*}\label{eq:7_1 Universal quiver}
\begin{aligned}
 \hat{C} &=  \left[\scalebox{0.8}{$
\begin{array}{ccccccccccccccccccccccc}
 0 & 1 & 1 & 3 & 2 & 5 & 3 & 2 & 5 & 4 & 5 & 6 & 7 & 7 & 11 & 4 & 6 & 9 & 8 & 8 & 12 & 6 & 3 \\
 1 & 2 & 2 & 3 & 3 & 5 & 4 & 4 & 6 & 5 & 7 & 8 & 8 & 9 & 13 & 5 & 7 & 10 & 10 & 11 & 15 & 7 & 5 \\
 1 & 2 & 3 & 3 & 4 & 4 & 6 & 5 & 8 & 7 & 9 & 11 & 9 & 10 & 16 & 6 & 8 & 11 & 12 & 13 & 19 & 7 & 7 \\
 3 & 3 & 3 & 4 & 4 & 5 & 5 & 7 & 8 & 8 & 11 & 12 & 9 & 12 & 17 & 7 & 8 & 12 & 12 & 15 & 20 & 8 & 8 \\
 2 & 3 & 4 & 4 & 5 & 5 & 6 & 7 & 9 & 8 & 11 & 13 & 10 & 12 & 18 & 8 & 9 & 13 & 14 & 16 & 22 & 9 & 8 \\
 5 & 5 & 4 & 5 & 5 & 6 & 6 & 10 & 11 & 11 & 16 & 16 & 11 & 16 & 22 & 10 & 11 & 16 & 16 & 21 & 27 & 10 & 11 \\
 3 & 4 & 6 & 5 & 6 & 6 & 7 & 10 & 11 & 11 & 15 & 17 & 12 & 16 & 23 & 11 & 12 & 17 & 18 & 22 & 29 & 12 & 10 \\
 2 & 4 & 5 & 7 & 7 & 10 & 10 & 10 & 15 & 13 & 17 & 20 & 18 & 20 & 30 & 12 & 15 & 22 & 23 & 25 & 35 & 15 & 12 \\
 5 & 6 & 8 & 8 & 9 & 11 & 11 & 15 & 18 & 17 & 23 & 26 & 20 & 26 & 37 & 16 & 19 & 27 & 28 & 34 & 45 & 19 & 16 \\
 4 & 5 & 7 & 8 & 8 & 11 & 11 & 13 & 17 & 16 & 21 & 24 & 19 & 24 & 35 & 15 & 18 & 26 & 26 & 31 & 42 & 18 & 15 \\
 5 & 7 & 9 & 11 & 11 & 16 & 15 & 17 & 23 & 21 & 29 & 32 & 27 & 33 & 48 & 20 & 25 & 36 & 36 & 42 & 57 & 25 & 20 \\
 6 & 8 & 11 & 12 & 13 & 16 & 17 & 20 & 26 & 24 & 32 & 38 & 30 & 37 & 54 & 23 & 27 & 39 & 41 & 48 & 65 & 27 & 23 \\
 7 & 8 & 9 & 9 & 10 & 11 & 12 & 18 & 20 & 19 & 27 & 30 & 22 & 29 & 41 & 19 & 21 & 30 & 31 & 38 & 50 & 20 & 19 \\
 7 & 9 & 10 & 12 & 12 & 16 & 16 & 20 & 26 & 24 & 33 & 37 & 29 & 37 & 53 & 23 & 27 & 39 & 40 & 47 & 63 & 26 & 23 \\
 11 & 13 & 16 & 17 & 18 & 22 & 23 & 30 & 37 & 35 & 48 & 54 & 41 & 53 & 77 & 34 & 39 & 56 & 58 & 70 & 93 & 38 & 34 \\
 4 & 5 & 6 & 7 & 8 & 10 & 11 & 12 & 16 & 15 & 20 & 23 & 19 & 23 & 34 & 14 & 17 & 24 & 25 & 29 & 40 & 16 & 15 \\
 6 & 7 & 8 & 8 & 9 & 11 & 12 & 15 & 19 & 18 & 25 & 27 & 21 & 27 & 39 & 17 & 20 & 28 & 29 & 35 & 47 & 19 & 18 \\
 9 & 10 & 11 & 12 & 13 & 16 & 17 & 22 & 27 & 26 & 36 & 39 & 30 & 39 & 56 & 24 & 28 & 41 & 41 & 50 & 67 & 27 & 26 \\
 8 & 10 & 12 & 12 & 14 & 16 & 18 & 23 & 28 & 26 & 36 & 41 & 31 & 40 & 58 & 25 & 29 & 41 & 44 & 52 & 70 & 28 & 26 \\
 8 & 11 & 13 & 15 & 16 & 21 & 22 & 25 & 34 & 31 & 42 & 48 & 38 & 47 & 70 & 29 & 35 & 50 & 52 & 61 & 83 & 34 & 30 \\
 12 & 15 & 19 & 20 & 22 & 27 & 29 & 35 & 45 & 42 & 57 & 65 & 50 & 63 & 93 & 40 & 47 & 67 & 70 & 83 & 113 & 46 & 41 \\
 6 & 7 & 7 & 8 & 9 & 10 & 12 & 15 & 19 & 18 & 25 & 27 & 20 & 26 & 38 & 16 & 19 & 27 & 28 & 34 & 46 & 18 & 18 \\
 3 & 5 & 7 & 8 & 8 & 11 & 10 & 12 & 16 & 15 & 20 & 23 & 19 & 23 & 34 & 15 & 18 & 26 & 26 & 30 & 41 & 18 & 14 \\
\end{array} $}
\right], \\
\\
\boldsymbol{\hat{x}} &=  \left[
 \scalebox{0.8}{$ a^6q^{-6},a^6  q^{-2}t^2 ,a^8  q^{-4} t^3,a^6 q^2 t^4,a^8 t^5,a^6 q^6 t^6,a^8 q^4 t^7, a^{14} q^{-7} t^5, a^{14} q t^9, a^{14} q^{-3} t^7,a^{20} q^{-6} t^9,a^{22} q^{-4}t^{12}, $} \right. \\ 
& \qquad \left.  \scalebox{0.8}{$ a^{14} q^5 t^{11}, a^{20} q^{-2}t^{11},a^{28} q t^{18},a^{14}, q^{-3}t^7, a^{14} q t^9,a^{20} q^2 t^{13},a^{22} t^{14},a^{28} q^{-7}t^{14},a^{36} q^{-4}t^{21},a^{14} q t^9,a^{14} q^{-3}t^7 $} \right].
    \end{aligned}
\end{equation*}

\begin{figure}[h!]
    \centering
    \input{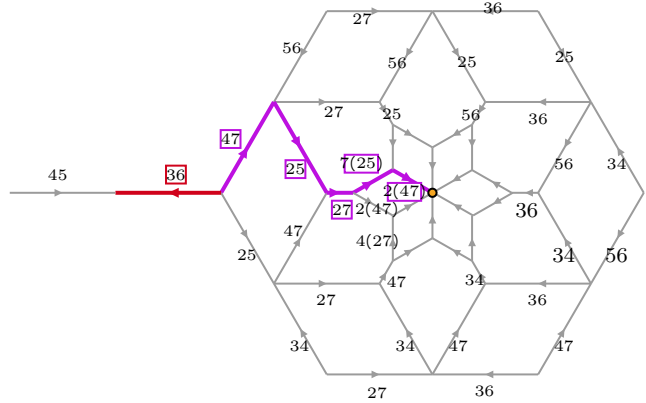}
    \caption{The left half of the permutoherdra graph for knot $7_1$ with the two marked paths.}
    \label{fig:knot71_pi_half}
\end{figure}

\newpage
\section{Future directions} \label{sec:future}
Below we provide a brief outline of interesting directions for future research.

\begin{itemize}
    \item The obvious generalisation of the results of this paper is to combine operations of unlinking and linking.
    However, it is not immediate to prove the analogous theorems (most importantly, the Connector Theorem) for such a mixed case. 
    Nevertheless, we still believe that our findings for unlinking also hold in this scenario, as we point out in Sec.~\ref{sec:linking}.
    Showing that mixed case works is very useful for a handful of reasons: for example, many quivers for knots or their complements have both positive and negative entries, and if one wants to prove the uniqueness of the diagonalisation procedure of \cite{Jankowski:2022qdp} in full generality, one can do it using the stronger version of Connector Theorem which includes linking. In turn, proving this mixed case will show that BPS spectrum for 3-manifold invariants which admit a quiver presentation is well defined from an infinite diagonal quiver, which would be one step towards understanding their categorification.
    \item As it turns out, the graphs of equivalent quivers coming from unlinking have quite an intricate structure, and one may wonder
    whether there are periodic patterns or other interesting formations one can find inside such graphs. For example, we expect the periodicity property
    for every permutohedra graph, but at the present stage we cannot even say exactly what is the structure of all universal quivers for a~given permutohedron. Understanding these patterns, combined with the connection to splitting, may shed some light on spectral
    sequences hidden behind quivers and prove a stronger version of the correspondence between a set of equivalent quivers and link homology.
    \item It is very interesting to understand the linking and unlinking operations in the context of vertex operator algebras, where the quiver matrix represents scattering of quasi-particles in an underlying CFT. Besides, linking operation is somewhat similar to the discrete analogue of RG flow for a family of rational CFTs \cite{Berkovich:1997ht}. Thus, the question is whether this is a coincidence, or can that be extended to other cases -- most prominently, logarithmic CFTs which are expected to be related to low-dimensional topology.
    \item Another important problem is to extend our case studies in the present paper from knot conormal to knot complement quivers
    \cite{EKL2108}. Study of unlinking for quivers and permutohedra graphs related to knot complements will improve the general understanding of their $F_K$ invariants and, saying it again, will help to define their homological version. We may also ask if the cohomological Hall algebra of a~symmetric quiver plays a role in this potential construction.
    \item Motivated by the relations to vertex operator algebras and modular forms, we may ask if quiver generating series can form a~ring structure -- i.e. can linear combinations of quiver generating series be rewritten as a~quiver generating series again? Note that the product of two quiver series for matrices $C_{(\boldsymbol{ij})}$ and $C_{(\boldsymbol{kl})}$ can be also written as a~single quiver in a~block diagonal form, so 
    it is closed under multiplication, but not manifestly under addition. As a prototype example, there are two completely different 
    fermionic formulas for the triplet log-VOA \cite{Cheng:2022rqr}, but one of them comes from a surgery on trefoil knot and is given
    by a linear combination of quiver generating series for the same quiver but in a different framing, while the other comes from an intrinsic quasi-particle description for this log-VOA.
    %\item Universal quivers for all equivalent ones (we only assume the same generating series)
    %\item Can we create all permutohedra graphs by unlinking generic quiver that is large enough?
\end{itemize}

\section*{Acknowledgements}

We thank Jakub Jankowski for collaboration in initial stages of this project. The work of P.S. has been supported by the OPUS grant no. 2022/47/B/ST2/03313 \emph{``Quantum geometry and BPS states''} funded by the National Science Centre, Poland. The work of P.K. has been supported by the Sonata grant no. 2022/47/D/ST2/02058 funded by the National Science Centre, Poland.

\newpage

\appendix

\section{Equivalent quivers and splitting for knots}
%in \texorpdfstring{\cite{JKLNS2105}}{JKLNS}}
\label{sec:Statements from the permutohedra paper}

In this section we summarise the construction of the permutohedra graphs introduced in~\cite{JKLNS2105}.
The starting point was the following theorem:
\begin{thm}\label{thm:Local eqiuvalence}
Consider a~quiver $Q$ corresponding to the~knot $K$ and another quiver $Q'$ with the same number of vertices and the~same change of variables associated to the knots-quivers correspondence.
If $Q$ and $Q'$ are related by a~sequence of disjoint transpositions, each exchanging non-diagonal elements 
    \begin{equation}\label{eq:transposition}
         C_{ab}\leftrightarrow C_{cd}, \qquad C_{ba}\leftrightarrow C_{dc},
    \end{equation}
for some pairwise different $a,b,c,d,\in  Q_{0}$, such that
    \begin{equation}\label{eq:center of mass contition}
         x_{a} x_{b} = x_{c} x_{d}
    \end{equation}
and 
    \begin{equation}\label{eq:theorem first case}
        C_{ab} = C_{cd}-1,\qquad\quad
        C_{ai}+C_{bi}=C_{ci}+C_{di}-\delta_{ci}-\delta_{di},\quad \forall i\in Q_{0},
    \end{equation}
    or
    \begin{equation}\label{eq:theorem second case}
        C_{cd} = C_{ab}-1,\qquad\quad
        C_{ci}+C_{di}=C_{ai}+C_{bi}-\delta_{ai}-\delta_{bi},\quad \forall i\in Q_{0},
    \end{equation}
then the generating series of $Q$ and $Q'$ -- after the application of the knots-quivers change of variables -- are equal: $P_Q=P_{Q'}$.
\end{thm}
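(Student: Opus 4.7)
The plan is to establish the theorem as a corollary of the invariance of the motivic generating series under unlinking. Since disjoint transpositions commute at the level of quiver matrices, a straightforward induction on their number reduces the statement to the case of a single transposition exchanging $C_{ab}\leftrightarrow C_{cd}$. I will treat case~1 (condition (\ref{eq:theorem first case})) explicitly; case~2 is symmetric under exchange of the roles of the pairs $(a,b)$ and $(c,d)$.

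The central idea is a \emph{cross unlinking}: rather than the naive pairing $U(ab)Q$ against $U(cd)Q'$, I would compare the enlarged quivers $U(cd)Q$ and $U(ab)Q'$. In case~1 both unlinkings are admissible, since $C_{cd}=C_{ab}+1\ge 1$ and $C'_{ab}=C_{cd}\ge 1$. The claim is that these two enlarged quivers coincide as symmetric quivers with parameters once their new nodes $e$ and $e'$ are identified. Granting this identification, the statement $P_Q=P_{Q'}$ follows from the unlinking invariance:
\begin{equation*}
P_Q \;=\; P_{U(cd)Q}\big|_{x_e=q^{-1}x_c x_d} \;=\; P_{U(ab)Q'}\big|_{x_{e'}=q^{-1}x_a x_b} \;=\; P_{Q'},
\end{equation*}
where the middle equality uses the identification and the equality $x_a x_b=x_c x_d$ from (\ref{eq:center of mass contition}).

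The technical heart is an entry-by-entry verification of the identification using (\ref{eq:unlinking definition}). In the old block, $\tilde C_{ab}=C_{ab}$ while $\tilde C'_{ab}=C'_{ab}-1=C_{cd}-1=C_{ab}$, and symmetrically at $(c,d)$ both sides yield $C_{ab}$; all other old entries are untouched by either the unlinking or the original transposition. For the new column at $i\notin\{a,b,c,d\}$ one needs $C_{ic}+C_{id}=C_{ia}+C_{ib}$, which is precisely (\ref{eq:theorem first case}) at such $i$. At the special indices $i\in\{a,b,c,d\}$ the matching of the new-column entries follows from the instances of (\ref{eq:theorem first case}) at those indices; for example $\tilde C_{ae}=C_{ac}+C_{ad}$ and $\tilde C'_{ae'}=C_{aa}+C'_{ab}-1=C_{aa}+C_{ab}$ coincide because (\ref{eq:theorem first case}) at $i=a$ reads $C_{aa}+C_{ab}=C_{ac}+C_{ad}$. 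Finally the new-node diagonal on both sides reduces to $C_{cc}+C_{dd}+2C_{ab}+1=C_{aa}+C_{bb}+2C_{ab}+1$, the equality $C_{aa}+C_{bb}=C_{cc}+C_{dd}$ being obtained by summing the instances of (\ref{eq:theorem first case}) at $i=a,b$ and at $i=c,d$ and comparing the totals.

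The conceptual step I expect to be the main obstacle -- and the reason the cross pairing is the correct one -- is recognizing that the parallel pairing $U(ab)Q$ against $U(cd)Q'$ would leave the enlarged quivers still related by a transposition of the same shape (now of magnitude $2$), offering no simplification. The cross unlinking is calibrated to absorb the asymmetry $C_{cd}-C_{ab}=1$ into the structure of the newly created node, pinning both $\tilde C_{ab}$ and $\tilde C'_{ab}$ to the common value $C_{ab}$, and symmetrically at $(c,d)$. After this observation, the remainder of the argument is routine bookkeeping.
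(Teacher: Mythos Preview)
Your proof is correct. The paper does not itself supply a proof of this theorem: it is recalled in Appendix~A as a result of \cite{JKLNS2105}. That said, the paper's own machinery gives an alternative (and heavier) route to the same conclusion. Via Proposition~\ref{prp:splitting by unlinking}, both $Q$ and $Q'$ can be obtained from a common augmented prequiver $\check{Q}'$ (with an auxiliary node $\iota$) by applying the unlinkings $U(i\iota),U(j\iota)$ in the two possible orders; the hexagon relation then exhibits a common further unlinking, and Lemma~\ref{lma:erasing row/column} lets one erase the auxiliary row/column afterwards. Your cross-unlinking argument is more direct and economical: a single unlinking on each side, no auxiliary node, no erasure lemma. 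The key observation that $U(cd)Q$ and $U(ab)Q'$ coincide (rather than $U(ab)Q$ and $U(cd)Q'$) is exactly the right one, and your entry-by-entry verification is clean.

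One small remark: your ``admissibility'' comment ($C_{cd}\ge 1$, $C'_{ab}\ge 1$) is unnecessary and not literally justified by the hypotheses, since the entries $C_{ij}$ may be negative (cf.\ the $4_1$ example (\ref{eq:figure-eight quivers})). This does not affect your proof: the unlinking operation (\ref{eq:unlinking definition}) and its invariance for the generating series are purely formal identities valid for arbitrary $C_{ij}$, as the paper notes below (\ref{Z}). Simply drop that clause.
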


Using this statement, one can take any quiver corresponding to a given knot, check conditions (\ref{eq:center of mass contition}-\ref{eq:theorem second case}), and -- if they are satisfied -- generate different quivers corresponding to the same knot. 
In \cite{JKLNS2105} it was proven that the resulting quivers and transpositions that connect them can be represented by vertices and edges of a graph made of several permutohedra glued together. It was done using the procedure of splitting:

\begin{dfn}\label{def:splitting}
    A~$(k,l)$-splitting of $n$ nodes with permutation $\sigma\in S_n$ in the~presence of $m-2n$ spectators (with corresponding integer shifts $h_s$) and with a~multiplicative factor $\kappa$ is defined as the~following transformation of a~quiver $(\check{C},\check{\boldsymbol{x}})$ with
    \begin{equation}
        \check{C}=\left[\begin{array}{c:c:c:c:c}
\check{C}_{ss} & \cdots  & \check{C}_{si} & \cdots  & \check{C}_{sj}\\ \hdashline
\vdots  & \ddots  & \vdots  &  & \vdots \\ \hdashline
\check{C}_{is} & \cdots  & \check{C}_{ii} & \cdots  & \check{C}_{ij}\\ \hdashline
\vdots  &  & \vdots  & \ddots  & \vdots \\ \hdashline
\check{C}_{js} & \cdots  & \check{C}_{ji} & \cdots  & \check{C}_{jj}
\end{array}\right]\,.
    \end{equation}
    For any two split nodes $i$ and $j$, $i<j$, and any spectator $s$, we transform the~matrix $\check{C}$ in the~following way (depending on the~presence of inversion in permutation $\sigma$): either into
    \begin{equation}
            \left[\begin{array}{ c : c : c : c : c : c : c }
\check{C}_{ss} & \cdots  & \check{C}_{si} & \check{C}_{si} +h_{s} & \cdots  & \check{C}_{sj} & \check{C}_{sj} +h_{s}\\ \hdashline
\vdots  & \ddots  & \vdots  & \vdots  &  & \vdots  & \vdots \\ \hdashline
\check{C}_{is} & \cdots  & \check{C}_{ii} & \check{C}_{ii} +k & \cdots  & \check{C}_{ij} & \textcolor[rgb]{0.96,0.65,0.14}{\check{C}_{ij} +k}\\ \hdashline
\check{C}_{is} +h_{s} & \cdots  & \check{C}_{ii} +k & \check{C}_{ii} +l & \cdots  & \textcolor[rgb]{0.96,0.65,0.14}{\check{C}_{ij} +k+1} & \check{C}_{ij} +l\\ \hdashline
\vdots  &  & \vdots  & \vdots  & \ddots  & \vdots  & \vdots \\ \hdashline
\check{C}_{js} & \cdots  & \check{C}_{ji} & \textcolor[rgb]{0.96,0.65,0.14}{\check{C}_{ji} +k+1} & \cdots  & \check{C}_{jj} & \check{C}_{jj} +k\\ \hdashline
\check{C}_{js} +h_{s} & \cdots  & \textcolor[rgb]{0.96,0.65,0.14}{\check{C}_{ji} +k} & \check{C}_{ji} +l & \cdots  & \check{C}_{jj} +k & \check{C}_{jj} +l
\end{array}\right]
    \end{equation}
    for $\sigma(i)<\sigma(j)$, or
    \begin{equation}
      \left[\begin{array}{c:c:c:c:c:c:c}
\check{C}_{ss} & \cdots  & \check{C}_{si} & \check{C}_{si} +h_{s} & \cdots  & \check{C}_{sj} & \check{C}_{sj} +h_{s}\\ \hdashline
\vdots  & \ddots  & \vdots  & \vdots  &  & \vdots  & \vdots \\ \hdashline
\check{C}_{is} & \cdots  & \check{C}_{ii} & \check{C}_{ii} +k & \cdots  & \check{C}_{ij} & \textcolor[rgb]{0.96,0.65,0.14}{\check{C}_{ij} +k+1}\\ \hdashline
\check{C}_{is} +h_{s} & \cdots  & \check{C}_{ii} +k & \check{C}_{ii} +l & \cdots  & \textcolor[rgb]{0.96,0.65,0.14}{\check{C}_{ij} +k} & \check{C}_{ij} +l\\ \hdashline
\vdots  &  & \vdots  & \vdots  & \ddots  & \vdots  & \vdots \\ \hdashline
\check{C}_{js} & \cdots  & \check{C}_{ji} & \textcolor[rgb]{0.96,0.65,0.14}{\check{C}_{ji} +k} & \cdots  & \check{C}_{jj} & \check{C}_{jj} +k\\ \hdashline
\check{C}_{js} +h_{s} & \cdots  & \textcolor[rgb]{0.96,0.65,0.14}{\check{\textcolor[rgb]{0.96,0.65,0.14}{C}}_{ji} +k+1} & \check{C}_{ji} +l & \cdots  & \check{C}_{jj} +k & \check{C}_{jj} +l
\end{array}\right]  
    \end{equation}
    for $\sigma(i)>\sigma(j)$,
%%%%%%%%%%%%%%%
whereas for any permutation the~generating parameters are transformed as follows:
\begin{equation*}
    \left[\begin{array}{c}
    \check{x}_s\\
    \vdots\\
    \check{x}_i\\
    \vdots\\
    \check{x}_j
    \end{array}\right]
    \longrightarrow
    \left[\begin{array}{c}
    \check{x}_s\\
    \vdots\\
    \check{x}_i\\
    \check{x}_i\kappa\\
    \vdots\\
    \check{x}_j\\
    \check{x}_j\kappa
    \end{array}\right]\,.
\end{equation*}
\end{dfn}

\begin{dfn}\label{def:prequiver}
    If the~inverse of splitting -- for any parameters from Definition~\ref{def:splitting} --  can be applied to a~given quiver $(C,\boldsymbol{x})$, we call the~target of this operation a~prequiver $(\check{C},\check{ \boldsymbol{x}})$. Conversely, splitting the~nodes of a~prequiver produces the~quiver:
\begin{equation}
    (\check{C},\boldsymbol{\check{x}})
    \;\longrightarrow\; (C,\boldsymbol{x})\,.
    \end{equation}
    \end{dfn}

In \cite{JKLNS2105} it was shown that in the permutohedra graph composed of quivers satisfying conditions (\ref{eq:center of mass contition}-\ref{eq:theorem second case}) each permutohedron arises from splitting of some prequiver.

%%%%

\section{Unlinking as a multi-cover skein relation}
\label{sec:Statements from the multi-cover skein paper}

As discussed in \cite{EKL1910}, the relation between quivers (and their motivic Donaldson-Thomas invariants) and open topological strings (with respective open Gromov-Witten invariants) allows for interpretation of unlinking (as well as linking) operation as a multi-cover skein relation. 

Taking into account the duality between topological strings and Chern-Simons theory, the following setup defines the knots-quivers correspondence in the language of geometry~\cite{EKL1811}.
Recall that the theory associated to the knot $K$ arises from the M-theory on the resolved conifold~$X$ with a single M5-brane wrapping the conormal Lagrangian of the knot $L_K$: 
\begin{equation}\label{eq:M-theory-setup}
\begin{split}
	\text{space-time}: \quad& \mathbb{R}^4 \times S^1 \times X \\
			& \cup \phantom{ \ \times S^1 \times \ \ } \cup\\
	\text{M5}: \quad & \mathbb{R}^2\times S^1 \times L_K.
\end{split}
\end{equation}
In this setup, motivic generating series (\ref{Z}) encodes the spectrum of BPS particles originating from M2-branes ending on the M5 brane. From the symplectic geometric point of view, BPS states correspond to generalized holomorphic curves with boundary on the Lagrangian submanifold $L_K$ (let us stress that the M-theory setup \eqref{eq:M-theory-setup} with more general Lagrangian submanifolds can be applied to quivers that do not correspond to knots).
Furthermore, it is assumed that all holomorphic curves are obtained from combinations of branched covers of finitely many basic discs, which are in one-to-one correspondence
with the quiver nodes. 
This allows to view unlinking operation as an operation on basic disks, which skein the pair of disks through each other, creating a~new disk and changing their linking numbers. For example, the simplest case of unlinking
\begin{equation}
\label{eq:link-removal}
	C = \left[
	\begin{array}{cc}
		0 & 1 \\
		1 & 0
	\end{array}
	\right] 
	\quad
	\rightsquigarrow \quad
	U(12)C = \left[
	\begin{array}{ccc}
		0 & 0 & 0\\
		0 & 0 & 0\\
		0 & 0 & 1
	\end{array}
	\right]\,
\end{equation}
can be understood as a skein relation shown in Fig.~\ref{fig:multi-cover skein relation} \cite{EKL1910}:

\begin{figure}[h!]
    \centering
    \includegraphics[width=0.7\textwidth]{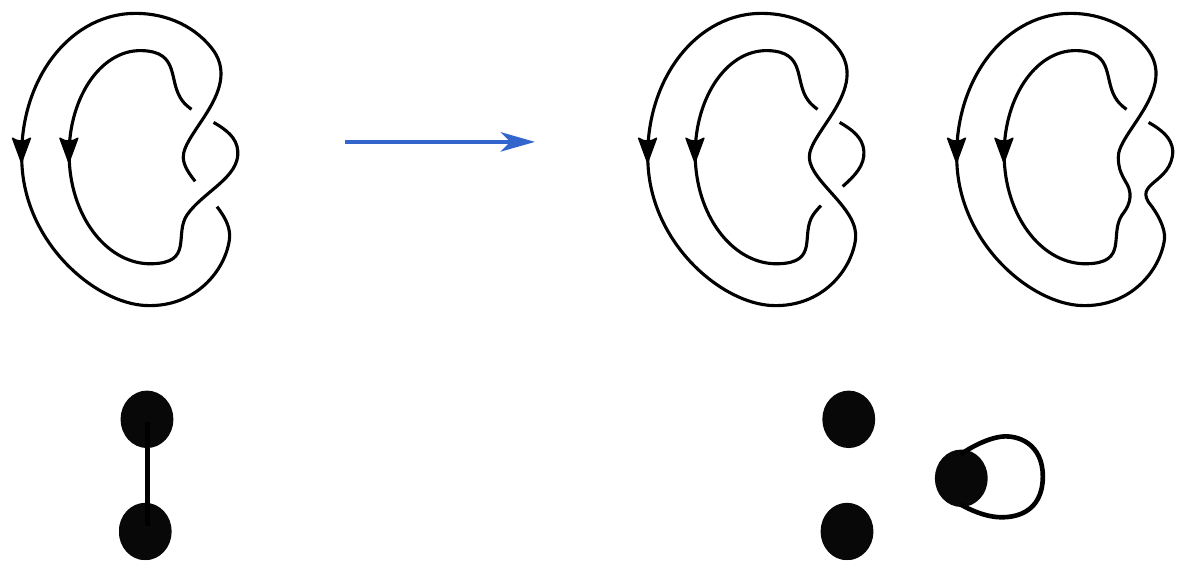}
    \caption{Multi-cover skein relations on linking disks and dual quiver description.}
    \label{fig:multi-cover skein relation}
\end{figure}

From the physical perspective, it corresponds to expressing the same BPS spectrum in two different ways -- on the left we have two basic states and one boundstate, whereas on the right we have have three basic states and no interaction. The equality between corresponding generating series can be interpreted as duality between two 3d $\mathcal{N}=2$ theories \cite{EKL1910}. 

Geometric and physical interpretation of linking is more involved than the case of unlinking. It involves the introduction of the redundant pair of nodes, for which the contributions from each disk cancel each other. The crucial property of linking is that its composition with unlinking leads to a redundant pair of nodes. It also shows that linking is not an inverse of unlinking, since it also increases the number of nodes, but it plays a very similar role. More detailed explanations can be found in \cite{EKL1910}.

Having said the above, it would be very interesting to understand a multi-cover skein interpretation of the relations (\ref{eq:commutators}-\ref{eq:commutators3}), as well
as other results found in this paper, in the language of curve counts.

\begin{comment}
\section{Full definition of linking}
\label{sec:linking definition}

Linking is a companion operation of unlinking, which was also introduced in \cite{EKL1910}. We define linking of nodes $i$ and $j$ as $L(ij)Q = (L(ij)C,L(ij)\boldsymbol{x})$, where
\begin{equation}\label{eq:linking definition}
    \begin{split}L(ij)C & =\left[\begin{array}{cccccccc}
 C_{11}  &  \cdots &  C_{1i}  &  \cdots &  C_{1j}  &  \cdots &  C_{1m}  &  C_{1i}+C_{1j}\\
 &  \ddots\  &  \vdots &   &  \vdots &   &  \vdots &  \vdots\\
 &  &  C_{ii}  &  \cdots &  C_{ij}+1  &  \cdots &  C_{im}  &  C_{ii}+C_{ij}\\
 &  &  &  \ddots\  &  \vdots &   &  \vdots &  \vdots\\
 &  &  &  &  C_{jj}  &  \cdots &  C_{jm}  &  C_{ij}+C_{jj}\\
 &  &  &  &  &  \ddots\  &  \vdots &  \vdots\\
 &  &  &  &  &  &  C_{mm}  &  C_{im}+C_{jm}\\
 &  &  &  &  &  &    &  C_{ii}+C_{jj}+2C_{ij} 
\end{array}\right]\\
L(ij)\boldsymbol{x} & =\left[x_{1},\dots,x_{i},\dots,x_{j},\dots,x_{m},x_{i}x_{j}\right]\,.
\end{split}
\end{equation}

Geometric and physical interpretation of linking is more involved than the case of unlinking. It involves the introduction of the redundant pair of nodes, for which the contributions from each disk -- or equivalently -- cancel each other. The crucial property of linking is that its composition with unlinking leads to a redundant pair of nodes. It also shows that linking is not an inverse of unlinking, since it also increases the number of nodes, but it plays a very similar role. More detailed explanations can be found in \cite{EKL1910}.

\end{comment}

%%%%

\section{Proof of unlinking identities}\label{app:hexagon}

Below we provide a~detailed proof of identities (\ref{eq:square})-(\ref{eq:second hexagon}). Without loss of generality, consider quiver with nodes $i,j,k,l,s$, where $s$ plays a~role of the spectator, i.e. it does not participate in any unlinking. We have $Q=(C,\boldsymbol{x})$, where\footnote{In this appendix it is convenient to use a column vectors instead of row ones.}

\begin{equation}
C = \left[ \begin {array}{ccccc}
C_{{ii}}&C_{{ij}}&C_{{ik}}&C_{{il}}&C_{is}
\\ \noalign{\medskip}&C_{{jj}}&C_{{jk}}&C_{{jl}}&C_{js}
\\ \noalign{\medskip}&&C_{{kk}}&C_{{kl}}&C_{ks}
\\ \noalign{\medskip}&&&C_{{ll}}&C_{ls}
\\ \noalign{\medskip}&&&&C_{ss}
\end {array}
 \right],
 \quad 
 \boldsymbol{x} = \left[ \begin {array}{c} x_{{i}}\\ \noalign{\medskip}x_{{j}}
\\ \noalign{\medskip}x_{{k}} \\ \noalign{\medskip}x_{{l}}
\\ \noalign{\medskip}x_{{s}}\end {array}
 \right].
\end{equation}
Note that the square relation (\ref{eq:square}) transforms the vectors of variables (assuming that the variable for newly created node is appended at the bottom) as
\begin{equation*}\label{eq:second hexagon lhs transforming vector2}
\left[ \begin {array}{c} x_{{i}}\\ \noalign{\medskip}x_{{j}}
\\ \noalign{\medskip}x_{{k}}\\ \noalign{\medskip}x_{{l}} \\ \noalign{\medskip}x_{{s}}\end {array}
 \right] \ \xrightarrow{U(ij)} \  \left[ \begin {array}{c} x_{{i}}\\ \noalign{\medskip}x_{{j}
}\\ \noalign{\medskip}x_{{k}}\\ \noalign{\medskip}x_{{l}} \\ \noalign{\medskip}x_{{s}}
\\ \noalign{\medskip}{\frac {x_{{i}}x_{{j}}}{q}}\end {array} \right] \ \xrightarrow{U(kl)} \ 
 \left[ \begin {array}{c} x_{{i}}\\ \noalign{\medskip}x_{{j}}
\\ \noalign{\medskip}x_{{k}}\\ \noalign{\medskip}x_{{l}} \\ \noalign{\medskip}x_{{s}}
\\ \noalign{\medskip}{\frac{x_ix_j}{q}}\\ \noalign{\medskip}{
\frac {x_{{k}}x_{{l}}}{{q}}}\end {array} \right]\,,
\qquad
\left[ \begin {array}{c} x_{{i}}\\ \noalign{\medskip}x_{{j}}
\\ \noalign{\medskip}x_{{k}}\\ \noalign{\medskip}x_{{l}} \\ \noalign{\medskip}x_{{s}}\end {array}
 \right] \ \xrightarrow{U(kl)} \  \left[ \begin {array}{c} x_{{i}}\\ \noalign{\medskip}x_{{j}
}\\ \noalign{\medskip}x_{{k}}\\ \noalign{\medskip}x_{{l}} \\ \noalign{\medskip}x_{{s}}
\\ \noalign{\medskip}{\frac {x_{{k}}x_{{l}}}{q}}\end {array} \right] \ \xrightarrow{U(ij)} \ 
 \left[ \begin {array}{c} x_{{i}}\\ \noalign{\medskip}x_{{j}}
\\ \noalign{\medskip}x_{{k}}\\ \noalign{\medskip}x_{{l}} \\ \noalign{\medskip}x_{{s}}
\\ \noalign{\medskip}{\frac{x_kx_l}{q}}\\ \noalign{\medskip}{
\frac {x_{{i}}x_{{j}}}{{q}}}\end {array} \right].
\end{equation*}
Moreover, one can see that the left and right hand sides of (\ref{eq:square}) give equivalent matrices which differ by permutation of rows and columns:
\begin{equation*}
\left[
\begin{array}{ccccccc}
C_{ii} &C_{ij}-1 &C_{ik} &C_{il} &C_{is} &C_{ii}+C_{ij}-1 &C_{ik}+C_{il} \\
&C_{jj} &C_{jk} &C_{jl} &C_{js} &C_{ij}+C_{jj}-1 &C_{jk}+C_{jl} \\
& &C_{kk} &C_{kl}-1 &C_{ks} &C_{ik}+C_{jk} &C_{kk}+C_{kl}-1 \\
& & &C_{ll} &C_{ls} &C_{il}+C_{jl} &C_{kl}+C_{ll}-1 \\
& & & &C_{ss} &C_{is}+C_{js} &C_{ks}+C_{ls} \\
& & & & &C_{ii}+2C_{ij}+C_{jj}-1 &C_{ik}+C_{il}+C_{jk}+C_{jl} \\
& & & & & &C_{kk}+2C_{kl}+C_{ll}-1 \\
\end{array}
\right]\,,
\end{equation*}
\begin{equation*}
    \left[
\begin{array}{ccccccc}
 C_{ii} & C_{ij}-1 & C_{ik} & C_{il} & C_{is} & C_{ik}+C_{il} & C_{ii}+C_{ij}-1 \\
   & C_{jj} & C_{jk} & C_{jl} & C_{js} & C_{jk}+C_{jl} & C_{ij}+C_{jj}-1 \\
   &   & C_{kk} & C_{kl}-1 & C_{ks} & C_{kk}+C_{kl}-1 & C_{ik}+C_{jk} \\
   &   &   & C_{ll} & C_{ls} & C_{kl}+C_{ll}-1 & C_{il}+C_{jl} \\
   &   &   &   & C_{ss} & C_{ks}+C_{ls} & C_{is}+C_{js} \\
   &   &   &   &   & C_{kk}+2 C_{kl}+C_{ll}-1 & C_{ik}+C_{il}+C_{jk}+C_{jl} \\
   &   &   &   &   &   & C_{ii}+2 C_{ij}+C_{jj}-1 \\
\end{array}
\right].
\end{equation*}
In turn, this confirms square relation (\ref{eq:square}).

For hexagon relations (\ref{eq:first hexagon})-(\ref{eq:second hexagon}) it is sufficient to consider quiver with nodes $i,j,k,s$:
\begin{equation}
C = \left[ \begin {array}{cccc}
                     C_{{ii}}&C_{{ij}}&C_{{ik}}&C_{{is}}
\\ \noalign{\medskip}&C_{{jj}}&C_{{jk}}&C_{{js}}
\\ \noalign{\medskip}&&C_{{kk}}&C_{{ks}}
\\ \noalign{\medskip}&&&C_{{ss}}
\end {array}
 \right]\,,
 \quad 
 \boldsymbol{x} = \left[ \begin {array}{c} x_{{i}}\\ \noalign{\medskip}x_{{j}}
\\ \noalign{\medskip}x_{{k}}\\ \noalign{\medskip}x_{{s}}\end {array}
 \right].
\end{equation}
Note that the left hand side of hexagon (\ref{eq:second hexagon}) transforms the vector of variables as
\begin{equation}\label{eq:second hexagon lhs transforming vector}
\left[ \begin {array}{c} x_{{i}}\\ \noalign{\medskip}x_{{j}}
\\ \noalign{\medskip}x_{{k}}\\ \noalign{\medskip}x_{{s}}\end {array}
 \right] \ \xrightarrow{U(ij)} \  \left[ \begin {array}{c} x_{{i}}\\ \noalign{\medskip}x_{{j}
}\\ \noalign{\medskip}x_{{k}}\\ \noalign{\medskip}x_{{s}}
\\ \noalign{\medskip}{\frac {x_{{i}}x_{{j}}}{q}}\end {array} \right] \ \xrightarrow{U((ij)k)} \ 
 \left[ \begin {array}{c} x_{{i}}\\ \noalign{\medskip}x_{{j}}
\\ \noalign{\medskip}x_{{k}}\\ \noalign{\medskip}x_{{s}}
\\ \noalign{\medskip}{\frac {x_{{i}}x_{{j}}}{q}}\\ \noalign{\medskip}{
\frac {x_{{i}}x_{{j}}x_{{k}}}{{q}^{2}}}\end {array} \right] \ \xrightarrow{U(jk)} \  \left[ 
\begin {array}{c} x_{{i}}\\ \noalign{\medskip}x_{{j}}
\\ \noalign{\medskip}x_{{k}}\\ \noalign{\medskip}x_{{s}}
\\ \noalign{\medskip}{\frac {x_{{i}}x_{{j}}}{q}}\\ \noalign{\medskip}{
\frac {x_{{i}}x_{{j}}x_{{k}}}{{q}^{2}}}\\ \noalign{\medskip}{\frac {x_
{{j}}x_{{k}}}{q}}\end {array} \right]\,,
\end{equation}
whereas the right hand side leads to
\begin{equation}\label{eq:second hexagon rhs transforming vector}
\left[ \begin {array}{c} x_{{i}}\\ \noalign{\medskip}x_{{j}}
\\ \noalign{\medskip}x_{{k}}\\ \noalign{\medskip}x_{{s}}\end {array}
 \right] \ \xrightarrow{U(jk)} \  \left[ \begin {array}{c} x_{{i}}\\ \noalign{\medskip}x_{{j}
}\\ \noalign{\medskip}x_{{k}}\\ \noalign{\medskip}x_{{s}}
\\ \noalign{\medskip}{\frac {x_{{j}}x_{{k}}}{q}}\end {array} \right] \ \xrightarrow{U(i(jk))} \ 
 \left[ \begin {array}{c} x_{{i}}\\ \noalign{\medskip}x_{{j}}
\\ \noalign{\medskip}x_{{k}}\\ \noalign{\medskip}x_{{s}}
\\ \noalign{\medskip}{\frac {x_{{j}}x_{{k}}}{q}}\\ \noalign{\medskip}{
\frac {x_{{i}}x_{{j}}x_{{k}}}{{q}^{2}}}\end {array} \right] \ \xrightarrow{U(ij)} \  \left[ 
\begin {array}{c} x_{{i}}\\ \noalign{\medskip}x_{{j}}
\\ \noalign{\medskip}x_{{k}}\\ \noalign{\medskip}x_{{s}}
\\ \noalign{\medskip}{\frac {x_{{j}}x_{{k}}}{q}}\\ \noalign{\medskip}{
\frac {x_{{i}}x_{{j}}x_{{k}}}{{q}^{2}}}\\ \noalign{\medskip}{\frac {x_
{{i}}x_{{j}}}{q}}\end {array} \right]\,. 
\end{equation}
This means that the only possibility for (\ref{eq:second hexagon}) to hold is when the two resulting quiver matrices differ by a~permutation
$(ij)\leftrightarrow (jk)$, which means (in particular) that node $(i(jk))$ must be identical to node $((ij)k)$. We will see that this is indeed the case by a~direct computation.

Application of the left hand side of hexagon (\ref{eq:second hexagon}) gives
\begin{equation}\label{eq:hexagon matrix left}
\left[ \begin {array}{ccccccc}
C_{{ii}}&C_{{ij}}-1&C_{{ik}}&C_{{is}}&C_{i,(ij)}&C_{i,((ij)k)}&C_{i,(jk)}\\
\noalign{\medskip}&C_{{jj}}&C_{{jk}}-1&C_{{js}}&C_{j,(ij)}&C_{j,((ij)k)}&C_{j,(jk)}\\
\noalign{\medskip}&&C_{{kk}}&C_{{ks}}&C_{k,(ij)}&C_{k,((ij)k)}&C_{k,(jk)}\\
\noalign{\medskip}&&&C_{{ss}}&C_{{is}}+C_{{js}}&C_{{is}}+C_{{js}}+C_{ks}&C_{{js}}+C_{{ks}}\\
\noalign{\medskip}&&&&C_{(ij),(ij)}&C_{(ij),((ij)k)}&C_{(ij),(jk)}\\
\noalign{\medskip}&&&&&C_{((ij)k),((ij)k)}&C_{((ij)k),(jk)}\\ 
\noalign{\medskip}&&&&&&C_{(jk),(jk)} \end {array} \right],
\end{equation}
whereas the right hand side of (\ref{eq:second hexagon}) leads to 
\begin{equation}\label{eq:hexagon matrix right}
\left[ \begin {array}{ccccccc} C_{{ii}}&C_{{ij}}-1&C_{{ik}}&C_{{is}}&{\it C}_{{i,(jk)}}&{\it C}_{{i,(i(jk))}}&{\it C}_{{i,(ij)}}\\
\noalign{\medskip}&C_{{jj}}&C_{{jk}}-1&C_{{js}}&{\it C}_{{j,(jk)}}&{\it C}_{{j,(i(jk))}}&{\it C}_{{j,(ij)}} \\
\noalign{\medskip}&&C_{{kk}}&C_{{ks}}&{\it C}_{{k,(jk)}}&{\it C}_{{k,(i(jk))}}&{\it C}_{{k,(ij)}}\\
\noalign{\medskip}&&&C_{{ss}}&C_{{js}}+C_{{ks}}&C_{{ks}}+C_{{is}}+C_{{js}}&C_{{is}}+C_{{js}}\\
\noalign{\medskip}&&&&{\it C}_{{(jk),(jk)}}&{\it C}_{{(i(jk)),(jk)}}&{\it C}_{{(ij),(jk)}}\\
\noalign{\medskip}&&&&&{\it C}_{{(i(jk)),(i(jk))}}&{\it 
C}_{{(ij),(i(jk))}}\\
\noalign{\medskip}&&&&&&{\it C}_{{(ij),(ij)}}\end {array} \right]\,. 
\end{equation}
For both matrices, we have
\begin{equation}\label{eq:hexagon matrix coefficients}
    \begin{aligned}
        C_{(\alpha\beta),(\alpha\beta)} &= C_{\alpha\alpha}+C_{\beta\beta}+2C_{\alpha\beta}-1 \quad \text{for $\alpha\beta=ij$ and $\alpha\beta=jk$}, \\
        C_{((ij)k),((ij)k)} &= C_{(i(jk)),(i(jk))} = C_{ii}+C_{jj}+C_{kk}+2(C_{ij}+C_{ik}+C_{jk}-1), \\
        C_{(ij),((ij)k)} &= C_{(ij),(i(jk))} = C_{ii}+C_{jj}+C_{jk}+2(C_{ij}-1), \\
        C_{(ij),(jk)} &= C_{jj}+C_{ij}+C_{ik}+C_{jk}-2, \\
        C_{((ij)k),(jk)} &= C_{(i(jk)),(jk)}  = C_{jj}+C_{kk}+C_{ij}+C_{ik}+2(C_{jk}-1), \\
        C_{\alpha,(\beta\gamma)} &= C_{\beta\alpha}+C_{\gamma\alpha}-1 \quad \text{for $\alpha\in\{i,j,k\}$, $\beta\gamma=ij$ and $\beta\gamma=jk$}, \\
        C_{\alpha,((ij)k)} &= C_{\alpha,(i(jk))} =  C_{\alpha i}+C_{\alpha j}+C_{\alpha k}-1 \quad \text{for $\alpha \in \{i,j,k\}$ }.
    \end{aligned}
\end{equation}
As a~result, matrices (\ref{eq:hexagon matrix left}) and (\ref{eq:hexagon matrix right}) differ by a~single transposition of the row and column corresponding to $(ij)$ and $(jk)$, which in turn confirms relation (\ref{eq:second hexagon}).

Likewise, for another hexagon (\ref{eq:first hexagon}) we obtain the vectors of variables as transpositions of the rightmost vectors in (\ref{eq:second hexagon lhs transforming vector}) and (\ref{eq:second hexagon rhs transforming vector}). The left hand side of (\ref{eq:first hexagon}) gives
\begin{equation}
\left[
\begin{array}{ccccccc}
 C_{ii} & C_{ij}-1 & C_{ik} & C_{is} & C_{i,(ij)} & C_{i,(jk)} & C_{i,((ij)k)} \\
  & C_{jj} & C_{jk}-1 & C_{js} & C_{j,(ij)} & C_{j,(jk)} & C_{j,((ij)k)} \\
  &  & C_{kk} & C_{ks}-1 & C_{k,(ij)} & C_{k,(jk)} & C_{k,((ij)k)} \\
  &  &  & C_{ss} & C_{is}+C_{js} & C_{js}+C_{ks} & C_{is}+C_{js}+C_{ks} \\
  &  &  &  & C_{(ij),(ij)} & C_{(ij),(jk)} & C_{(ij),((ij)k)} \\
  &  &  &  &  & C_{(jk),(jk)} & C_{(jk),((ij)k)} \\
  &  &  &  &  &  & C_{((ij)k),((ij)k)} \\
\end{array}
\right] \,,
\end{equation}
whereas the right hand side leads to
\begin{equation}
\left[
\begin{array}{ccccccc}
 C_{ii} & C_{ij}-1 & C_{ik} & C_{is} & C_{i,(jk)} & C_{i,(ij)} & C_{i,(i(jk))} \\
  & C_{jj} & C_{jk}-1 & C_{js} & C_{j,(jk)} & C_{j,(ij)} & C_{j,(i(jk))} \\
  &  & C_{kk} & C_{ks}-1 & C_{k,(jk)} & C_{k,(ij)} & C_{k,(i(jk))} \\
  &  &  & C_{ss} & C_{js}+C_{ks} & C_{is}+C_{js} & C_{is}+C_{js}+C_{ks} \\
  &  &  &  & C_{(jk),(jk)} & C_{(ij),(jk)} & C_{(jk),(i(jk))} \\
  &  &  &  &  & C_{(ij),(ij)} & C_{(ij),(i(jk))} \\
  &  &  &  &  &  & C_{(i(jk)),(i(jk))} \\
\end{array}
\right]\,.
\end{equation}
For both matrices, we have
\begin{equation}\label{eq:hexagon matrix coefficients2}
    \begin{aligned}
        C_{(\alpha\beta),(\alpha\beta)} &= C_{\alpha\alpha}+C_{\beta\beta}+2C_{\alpha\beta}-1 \quad \text{for $\alpha\beta=ij$ and $\alpha\beta=jk$}, \\
        C_{((ij)k),((ij)k)} &= C_{(i(jk)),(i(jk))} = C_{ii}+C_{jj}+C_{kk}+2(C_{ij}+C_{ik}+C_{jk}-1), \\
        C_{(ij),((ij)k)} &= C_{(jk),(i(jk))} = C_{ii}+C_{jj}+C_{jk}+2(C_{ij}-1), \\
        C_{(ij),(jk)} &= C_{jj}+C_{ij}+C_{ik}+C_{jk}-1, \\
        C_{((ij)k),(jk)} &= C_{(i(jk)),(jk)}  = C_{jj}+C_{kk}+C_{ij}+C_{ik}+2(C_{jk}-1), \\
        C_{\alpha,(\beta\gamma)} &= C_{\beta\alpha}+C_{\gamma\alpha}-1 \quad \text{for $\alpha\in\{i,j,k\}$, $\beta\gamma=ij$ and $\beta\gamma=jk$}, \\
        C_{\alpha,((ij)k)} &= C_{\alpha,(i(jk))} =  C_{\alpha i}+C_{\alpha j}+C_{\alpha k}-1 \quad \text{for $\alpha \in \{i,k\}$ }, \\
        C_{j,((ij)k)} &= C_{j,(i(jk))} =  C_{\alpha i}+C_{\alpha j}+C_{\alpha k}-2.
    \end{aligned}
\end{equation}
As the result, the two matrices (\ref{eq:hexagon matrix left}) and (\ref{eq:hexagon matrix right}) differ by a~single transposition of the row and column corresponding to $(ij)$ and $(jk)$, which in turn confirms relation (\ref{eq:first hexagon}).

Note, however, that the resulting quivers corresponding to (\ref{eq:first hexagon}) and (\ref{eq:second hexagon}) are not quite the same:
they differ by a~transposition of a~pair of arrows between $\{j,((ij)k)\}$ and $\{(ij),(jk)\}$ (which can be eliminated by applying unlinking once again).

\bibliography{refs}
\bibliographystyle{JHEP}

\end{document}